    \theoremstyle{plain}
    \numberwithin{equation}{section}
    \newtheorem{thrm}{Theorem}[section]
    \newtheorem{prop}[thrm]{Proposition}
    \newtheorem{lmma}[thrm]{Lemma}
    \newtheorem{remk}[thrm]{Remark}
    \newtheorem{defn}[thrm]{Definition}
    \newtheorem{exmpl}[thrm]{Example}
    \def\bbr{{\mathbb R}}
    \def\bbe{{\mathbb E}}
    \def\bbf{{\mathbb F}}
    \def\bbp{{\mathbb P}}
    \def\bbq{{\mathbb Q}}
    \def\bbh{{\mathbb H}}
    \def\bbl{{\mathbb L}}
    \def\calB{{\mathcal B}}
    \def\calE{{\mathcal E}}
    \def\calF{{\mathcal F}}
    \def\calM{{\mathcal M}}
    \def\calA{{\mathcal A}}
    \def\calU{{\mathcal U}}
    \def\calV{{\mathcal V}}
    \def\calP{{\mathcal P}}
    \def\calS{{\mathcal S}}
\begin{document}
\begin{frontmatter}

\title{State-dependent utility maximization in L\'evy markets}
\runtitle{Utility maximization in L\'evy markets}

\begin{aug}
\author{\fnms{Jos\'e} E. \snm{Figueroa-L\'opez}
\ead[label=e1]{figueroa@stat.purdue.edu}}
\and
\author{\fnms{Jin} \snm{Ma}
\ead[label=e2]{jinma@usc.edu}}



\runauthor{J.E. Figueroa-L\'opez and J. Ma}

\address{Department of Statistics\\
Purdue University\\
West Lafayette, IN 47906\\
\printead{e1}}

\address{Department of Mathematics\\
University of Southern California\\
Los Angeles, CA 90089\\ 
\printead{e2}\\
\vspace{.3 cm}
}
\end{aug}


\begin{abstract}
    We revisit Merton's portfolio optimization problem
    under boun-ded state-dependent utility functions,
    in a market driven by a L\'evy process $Z$ extending results by 
    Karatzas et. al. \cite{Karatzas:1991} and Kunita
\cite{Kunita:2003}.
    The problem is solved using a
    \emph{dual variational problem} as it is customarily done
    for non-Markovian {models}.
    One of the main features here is that the domain of the
    \emph{dual problem} enjoys an explicit 
    ``parametrization'', built on
    a \emph{multiplicative optional decomposition}
    for nonnegative supermartingales
    due to F\"ollmer and Kramkov \cite{Follmer:1997}.
    As a key step in obtaining the representation result we prove a \emph{closure property}
    for integrals with respect to Poisson random
    measures, {a result of interest on its own that extends the analog property for integrals with
respect to a fixed semimartingale due to M\'emin \cite{Memin}}. 
    {In the case that (i) the L\'evy measure $\nu$ of $Z$ is atomic with a finite number of atoms
    or that (ii) $\Delta S_{t}/S_{t^{-}}=\zeta_{t} \vartheta(\Delta Z_{t})$ for a process $\zeta$ and a deterministic function $\vartheta$, we explicitly characterize the admissible trading strategies and show that the dual solution is a risk-neutral local martingale.}
\end{abstract}

\begin{keyword}[class=AMS]
\kwd[Primary: ]{93E20, 60G51}
\kwd{}
\kwd[secondary: ]{62P05}
\end{keyword}

\begin{keyword}
\kwd{Portfolio optimization}
\kwd{L\'evy market}
\kwd{duality method}
\kwd{utility maximization}
\kwd{shortfall risk minimization}
\end{keyword}

\end{frontmatter}

\section{Introduction}
The task of determining good trading strategies is a fundamental
problem in mathematical finance. A typical approach to this
problem aims at finding the trading strategy that maximizes, for
example, the final expected utility, which is defined as a
deterministic, concave, and increasing function
$U:\bbr\rightarrow\bbr\cup \{-\infty\}$ of the final wealth. There
are, however, many applications where a utility function has to
change with the underlying securities, or more generally, with the
source of randomness (say a Brownian motion). For example, in the
so-called {\it optimal partial replication} of a contingent claim,
introduced by F\"ollmer and Leukert \cite{Follmer:2000}, one tries
to find the trading strategy that best replicates the claim $H$
under a budget constraint.
In particular, when the market is incomplete, it is often more
beneficial to allow certain degree of ``shortfall" in order to
reduce the ``super hedging cost", a threshold for the minimum
initial wealth so that super-hedging is feasible (see, e.g.,
\cite{DlbSch:1994} and \cite{Kramkov:1996} for more details).
Mathematically, such a shortfall risk could be measured
by the expected loss
\[
    \bbe\left[L\left((H-V_{_{T}})^{+}\right)\right],
\]
where $L$ is the ``loss function",
a convex increasing function that incorporates
the investor's attitude towards the shortfall
$(H-V_{_{T}})^{+}$, and the value process $V$ is subject to
the constraint $V_{0}\leq{}z$. Such a problem can then be
formulated as
a utility maximization problem with
a bounded \emph{state-dependent utility}, in which the utility
function is defined by (cf. \cite{Follmer:2000}):
 \begin{equation}
 \label{U}
 U(v;\omega):=L(H(\omega))-L((H(\omega)-v)^{+}), \qquad \omega\in\Omega.
 \end{equation}

In general, we can define a \emph{state-dependent utility} as a
function  $U:\bbr_{+}\times\Omega\rightarrow\bbr_{+}$ such that
$U(\cdot;\omega)$ is a utility function for each
$\omega\in\Omega$.
The utility maximization problem is then defined as
\begin{equation}\label{PrimalProblemReal}
    u(z):=
    \sup\left\{ \bbe\left[U(V_{_{T}}(\cdot),\cdot)\right]
        :\; V\;{\rm is}\;{\rm admissible}\;{\rm and }\;
        V_{0}\leq{}z \right\},
    \end{equation}
where the supreme is taken over all wealth processes
$\{V_{t}\}_{t\leq{}T}$ generated by admissible trading strategies
(see Section \ref{FinModelSect} for a precise definition).

The existence and essential uniqueness of the solution to the
problem (\ref{PrimalProblemReal}) was proved in
\cite{Follmer:2000} for a general semimartingale price model using
a convex duality method, built on a celebrated bipolar theorem by
Kramkov and Schachermayer \cite{Kramkov:1999}. However, this
approach does not seem to shed
any light on how to compute, in a feasible manner, the optimal
trading strategy. This is partly due to the generality of the
problem considered there. In this paper we shall consider
the market model in which the price is driven by a
L\'evy process, and we propose a more manageable dual problem with
a specific domain.
We should note that our method can be extended to handle more
general jump-diffusion models driven by even additive
processes.

The problem of utility maximization can be traced back to
Merton \cite{Merton:1969}-\cite{Merton:1971}.
{In a Brownian-driven market model,  Karatzas et. al. \cite{Karatzas:1991} 
developed a program, known 
as the \emph{convex duality method}, that has become one of the most powerful methods, yet relatively explicit and simple, to analyze optimal portfolio problems in non-Markovian markets. 
They prove that} the marginal utility of the optimum final wealth
is proportional to the risk-neutral local martingale
that minimizes a ``dual" problem, defined as another optimization
problem with the objective function being the
Legendre-Fenchel-type transformation of the original utility
function. 
To be more precise,
consider a minimization problem
 \begin{equation}
 \label{DualProb0}
     v_{_{\Gamma}}(y)=\inf_{\xi\in\Gamma} \bbe\left[\widetilde{U}\left(y
     \xi_{_{T}}\right)\right], \qquad y>0,
 \end{equation}
where  $\Gamma$,
the so-called \emph{dual domain or class},
consists of (risk-neutral) exponential local martingales,
and $\widetilde{U}(\cdot)$ stands for the
\emph{convex dual function} of $U(\cdot)$.
The idea is first to find, for any $y>0$, a minimizer
$\xi^*_y\in\Gamma$ of (\ref{DualProb0}),
which {in turn} induces
a {``potential"} optimal terminal wealth $V^*_y$ {in the sense that the so-called weak duality relation 
\begin{equation}\label{WD} 
	u(z)\leq \bbe \left[ U(V^{*}_{y})\right]
\end{equation}
holds.} 
If one can further show that for some $y^*>0$,
there exists an admissible portfolio $\beta^*$ such that
$V^{\beta^*}_{_{T}}\geq {V}^{*}_{y^{*}}$,
then clearly {equality holds in (\ref{WD}) (a property typically called \emph{strong duality})}, 
and $\beta^*$ solves the original problem.
Customarily, finding the optimal portfolio $\beta^*$
relies on a variational problem for the dual value
function and the existence of the minimizer $\xi^*_{y^*}$
utilizes the particular form of the market model
 and some general properties of the utility function.

{More recently}, the convex duality method {was further} extended to
a general ``jump-diffusion'' market by Kunita
\cite{Kunita:2003} 
{building on} an exponential representation for \emph{positive} local
supermartingales as well as
a variational equality for the dual problem.
To ensure the attainability of the dual problem,
it is required that
the utility function satisfies the same conditions as
\cite{Karatzas:1991} {(one of which is unboundedness)},
and that the dual domain $\Gamma$ contain all positive
``risk-neutral'' local supermartingales.

The main purpose of this paper is to further extend the {seminal} approach of
\cite{Karatzas:1991} to the case of
\emph{state-dependent, bounded} utility functions. For simplicity,
we will be contented with a market with only one stock, whose jumps
are driven by a L\'evy process {$Z:=\{Z_{t}\}_{t\geq{}0}$}, but our analysis can be readily
extended to more general jump-diffusion multidimensional models such
as the one considered in \cite{Kunita:2003}. We should emphasize
that the boundedness and potential non-differentiability of the
utility function causes some technical subtleties. For example, the
dual optimal process can be $0$ with positive probability, thus the
representation theorem of Kunita does not apply anymore. To get
around these difficulties we shall reconsider the dual problem over
an arbitrary subclass. Using an exponential representation for
nonnegative supermartingales due to F\"ollmer and Kramkov
\cite{Follmer:1997}, we show how to construct suitable explicit dual
classes associated with {certain} classes of semimartingales that are closed
under \'Emery's topology. To work with this last condition, we prove
a closure property for integrals with respect to Poisson random
measures, {a result of interest on its own that extends} the analog property for integrals with
respect to a fixed semimartingale due to M\'emin \cite{Memin}. It is
also worth mentioning that part of our approach relies on the
fundamental characterization of contingent claims that are
super-replicable (see \cite{DlbSch:1994} and \cite{Kramkov:1996}),
while that of  F\"ollmer and Leukert \cite{Follmer:2000} (see also
Xu \cite{Xu:2004}) was based on the bipolar theorem of
\cite{Kramkov:1999}. We feel that 
{the convex duality approach of \cite{Karatzas:1991} that we develop in this paper offers several advantages. 
The proofs are more direct and the dual problem might be more
suitable for computational purposes since the dual class enjoys an
``explicit'' description and ``parametrization''.}  
    {In the case that (i) the jumps of the price process $S$ are driven by the superposition of finitely-many shot-noise Poisson processes or that (ii) $\Delta S_{t}/S_{t^{-}}=\zeta_{t} \vartheta(\Delta Z_{t})$, we are even able to show that the dual solution is a risk-neutral local martingale.}

We would like to remark that some of our results in Section  \ref{ConvDualSect} below may look similar to
those in \cite[Chapter 3]{Xu:2004}, but there are essential
differences. For example, the model in  \cite{Xu:2004} exhibits only
finite-jump activity, while our model allows general jumps. Also,
\cite{Xu:2004} allows only downward price jumps, an assumption that
{seems to be} crucial for the approach there, which was based on the existence
of the solutions to certain stochastic differential equations (see,
e.g., \cite[Lemma 3.3, Proposition 3.4]{Xu:2004}). 
We should point out that our approach is also valid for general
additive processes, including the time-inhomogeneous cases
considered in \cite{Xu:2004}.
We present an argument in (ii) of Section \ref{FnlRmrksSect} to
justify this point.

{The} paper is organized as follows. In Section \ref{FinModelSect} we
introduce the financial model, along with some basic terminology
that will be used throughout the paper. The convex duality method is
{revised} in Section \ref{ConvDualSect}, where a potential optimal
final wealth satisfying {(\ref{WD})} is constructed. An
explicit description of a dual class for which {equality in (\ref{WD})}
holds is presented in Section \ref{SpecDualClssSect}, along with some
interesting simple characterizations of the dual optimum. In particular, {as it was mentioned earlier,} we prove that under certain conditions in the structure of the jumps,
the dual optimum is actually a local martingale {and we also provide 
an explicit characterization of the admissible trading strategies}.  In section
\ref{RplcbltySect} we show that the potential optimal final wealth
is attained by an admissible trading strategy, as the last step for
proving the existence of {an} optimal portfolio. Finally, we give some
concluding remarks in Section \ref{FnlRmrksSect}. {Some necessary
fundamental theoretical results behind our approach are collected in Appendix
\ref{CovClssSect}, such as the exponential representation for nonnegative supermartingales of F\"ollmer and Kramkov \cite{Follmer:1997} and the closure property for integrals with respect to Poisson random measures that was previously mentioned.} 

\section{Notation and problem formulation}
\label{FinModelSect}
\setcounter{equation}{0}

Throughout this paper we assume that all the randomness comes from a
complete probability space $(\Omega, \calF, \bbp)$,
on which there is defined a
L\'evy process $Z$ with L\'evy triplet
$(\sigma^{2},\nu,0)$
(see Sato \cite{Sato} for the terminology).
By the L\'evy-It\^o decomposition,
there exist a standard Brownian motion
$W$ and an independent Poisson random measure $N$ on
$\bbr_{+}\times\bbr\backslash\{0\}$ with mean measure
$\bbe\, N(dt,dz)=\nu(dz)dt$, such that
    \begin{equation}
    \label{LevyKhin}
        Z_{t}=\sigma W_{t}+
        \int_{0}^{t}\int_{|z|\leq{}1}z\,
        \widetilde{N}(ds,dz)+
        \int_{0}^{t}\int_{|z|>1}z N(ds,dz),
    \end{equation}
where $\widetilde{N}(dt,dz):=N(dt,dz)-\nu(dz)dt$.
Let $\bbf:=\left\{\calF_{t}\right\}_{t\geq{}0}$ be the natural
filtration generated by $W$ and $N$, augmented by all the null sets in
$\calF$ so that it satisfies the {\it usual conditions} (see e.g.
\cite{Protter}).

\bigskip
\noindent{\bf The  market model} ~~
\smallskip

We assume that there are two assets in the market: a risk free
bond (or money market account), and a risky asset, say, a stock.
The case of multiple stocks,
such as the one studied in \cite{Kunita:2003}, can be treated in a
similar way without substantial difficulties (see section
\ref{FnlRmrksSect} for more details). As it is customary
all the processes are taken to be discounted to the present value
so that the value $B_{t}$ of the risk-free asset
can be assumed to be identically equal to $1$.
The (discounted) price of the stock follows the stochastic
differential equation
\begin{equation}
 \label{EqForStock}
   d\, S_{t}= S_{t^{-}}
    \left\{b_{t}\, dt +
    \sigma_{t}\, dW_{t}+
    \int_{\bbr_{0}}v(t,z)\widetilde{N}(dt,dz)
    \right\},
\end{equation}
where $\bbr_{0}:=\bbr\backslash\{0\}$, $b\in L^1_{loc}$, $\sigma \in L^{2}_{loc}(W)$,
and $v\in G_{loc}(N)$ (see \cite{Shiryaev:2003}
for the terminology). More
precisely, $b$, $\sigma$, and $v$ are predictable
processes such that $v(\cdot,\cdot) >-1$ a.s.
(hence, $S_{\cdot}>0$ a.s.), and that
\[
    \int_{0}^{\cdot}|b_{t}|dt,
    \quad
    \int_{0}^{\cdot}|\sigma_{t}|^{2}dt,
    \quad{\rm and}\quad
    (\sum_{s\leq \cdot}v^{2}(s,\Delta Z_{s}))^{1/2}
\]
are locally integrable. Finally, we assume that the market is free of arbitrage so that
there exists a risk-neutral probability measures $\bbq$
such that the (discounted) process
$S_{t}$, $0\leq t \leq T$, is an
$\bbf$-local martingale under $\bbq$.
Throughout, $\calM$ will stand for the class of all equivalent risk
neutral measures $\bbq$.

\bigskip

\noindent{\bf Admissible trading strategies and the utility
maximization problem.}

\smallskip

A trading strategy is determined by a predictable locally bounded
process $\beta$ representing the \emph{proportion of total wealth
invested in the stock}. Then, the resulting wealth process is
governed by the stochastic differential equation:
\begin{equation}\label{EqDiscValProcReal}
    V_{t}
    =w+\int_{0}^{t}V_{s^{-}}\,\frac{\beta_{s}}{S_{s^{-}}}\;d S_{s},\quad
    0<t\leq{}T,
\end{equation}
where $w$ stands for the initial endowment. For future reference, we
give a precise definition of ``{\it admissible strategies}".
\begin{defn}
The process $V^{w,\beta}:=V$
solving (\ref{EqDiscValProcReal}) is called the value
process corresponding to the self-financing portfolio
with initial endowment $w$ and
trading strategy $\beta$.
We say that a value process $V^{w,\beta}$ is ``{admissible}" or
that the process $\beta$ is ``{admissible}" for $w$ if 
\(
    V_{t}^{w,\beta}\geq{}0,\;\; \forall\, t\in [0,T].
\)
\end{defn}
For a given initial endowment $w$, we denote the set all admissible
strategies for $w$ by $\calU^w_{ad}$, and the set of all admissible
value processes by $\calV^w_{ad}$. In light of the Dol\'eans-Dade
stochastic exponential of semimartingales (see e.g. Section I.4f in
\cite{Shiryaev:2003}), one can easily obtain necessary and
sufficient conditions for admissibility.
\begin{prop}\label{CndAdmissibility}
    A predictable locally bounded process $\beta$
    is admissible if and only if
    \[
        \bbp\left[\{\omega\in\Omega:
        \beta_{t} v(t,\Delta Z_{t})\geq{}-1,\quad
        \text{for a.e. } t\leq{}T\}\right]=1.
    \]
\end{prop}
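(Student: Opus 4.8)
The plan is to recognize the wealth process as a Dol\'eans-Dade exponential and then read off nonnegativity from the sign of its jumps. Writing $R_{t}:=\int_{0}^{t}b_{s}\,ds+\int_{0}^{t}\sigma_{s}\,dW_{s}+\int_{0}^{t}\int_{\bbr_{0}}v(s,z)\,\widetilde{N}(ds,dz)$ for the (discounted) return of the stock, equation (\ref{EqForStock}) reads $dS_{t}=S_{t^{-}}\,dR_{t}$, so (\ref{EqDiscValProcReal}) becomes the linear equation $dV_{t}=V_{t^{-}}\beta_{t}\,dR_{t}$, $V_{0}=w$. Since $\beta$ is predictable and locally bounded and $b\in L^{1}_{loc}$, $\sigma\in L^{2}_{loc}(W)$, $v\in G_{loc}(N)$, the process $X^{\beta}:=\int_{0}^{\cdot}\beta_{s}\,dR_{s}$ is a well-defined semimartingale with $X^{\beta}_{0}=0$, and the unique solution is the stochastic exponential $V^{w,\beta}=w\,\mathcal{E}(X^{\beta})$. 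We may assume $w>0$ (for $w=0$ the process $V^{w,\beta}$ vanishes identically and both sides of the asserted equivalence are trivially true), and then the sign of $V^{w,\beta}$ coincides with that of $\mathcal{E}(X^{\beta})$, which does not depend on $w$; in particular admissibility for one $w>0$ is admissibility for every $w>0$.

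Next I would invoke the explicit formula for the stochastic exponential (see Section I.4f in \cite{Shiryaev:2003}), namely $\mathcal{E}(X^{\beta})_{t}=\exp\!\big(X^{\beta}_{t}-\tfrac12\langle (X^{\beta})^{c}\rangle_{t}\big)\prod_{0<s\le t}(1+\Delta X^{\beta}_{s})\,e^{-\Delta X^{\beta}_{s}}$, together with its standard consequence: $\mathcal{E}(X^{\beta})_{t}\ge 0$ for all $t$ if and only if $\Delta X^{\beta}_{s}\ge -1$ for all $s$ (the exponential factors are strictly positive, so only the product $\prod(1+\Delta X^{\beta}_{s})$ can change sign, and a single jump $\Delta X^{\beta}_{\tau}<-1$ forces $\mathcal{E}(X^{\beta})_{\tau}<0$ whenever $\mathcal{E}(X^{\beta})_{\tau^{-}}>0$). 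It then remains to identify the jumps of $X^{\beta}$: the drift part $\int\beta_{s}b_{s}\,ds$, the Brownian part $\int\beta_{s}\sigma_{s}\,dW_{s}$ and the compensator $\int\!\int\beta_{s}v(s,z)\,\nu(dz)\,ds$ are all (absolutely) continuous, while the jump of the purely discontinuous integral $\int_{0}^{\cdot}\!\int_{\bbr_{0}}\beta_{s}v(s,z)\,\widetilde{N}(ds,dz)$ at a time $t$ equals $\beta_{t}v(t,\Delta Z_{t})$ on $\{\Delta Z_{t}\ne 0\}$ and is $0$ otherwise, by the general rule for jumps of integrals with respect to integer-valued random measures (cf. \cite{Shiryaev:2003}), since $\nu(dz)\,dt$ has no atom in $t$. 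Hence $\Delta X^{\beta}_{t}=\beta_{t}v(t,\Delta Z_{t})$ at the jump times of $Z$.

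Combining the two steps yields the claim. For the ``if'' direction it is immediate: if a.s.\ $\beta_{t}v(t,\Delta Z_{t})\ge-1$ for all $t\le T$, then every factor $1+\Delta X^{\beta}_{s}$ in the product formula is nonnegative, so $V^{w,\beta}_{t}=w\,\mathcal{E}(X^{\beta})_{t}\ge 0$ on $[0,T]$. For the ``only if'' direction I would argue by contradiction: if the asserted event had probability $<1$, set $\tau:=\inf\{s\le T:\Delta X^{\beta}_{s}<-1\}$ (with $\inf\emptyset=+\infty$); on $\{\tau\le T\}$ one has $\mathcal{E}(X^{\beta})_{\tau^{-}}>0$ and $\mathcal{E}(X^{\beta})_{\tau}=\mathcal{E}(X^{\beta})_{\tau^{-}}\big(1+\beta_{\tau}v(\tau,\Delta Z_{\tau})\big)<0$, contradicting admissibility. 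The only delicate point—and the reason the statement is phrased with ``for a.e.\ $t$'' rather than ``for every $t$''—is the borderline situation in which $\mathcal{E}(X^{\beta})$ has already been absorbed at $0$ because some earlier jump equals exactly $-1$: once $V^{w,\beta}$ hits $0$ it stays there and the constraint on $\beta$ becomes vacuous, so the equivalence is to be read up to (and modulo) that absorption time. Apart from this measure-theoretic bookkeeping, the argument is a direct application of the Dol\'eans-Dade exponential and I do not anticipate any serious obstacle.
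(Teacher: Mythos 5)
Your proof follows precisely the route the paper indicates but does not carry out: the paper merely asserts that the claim follows from the theory of the Dol\'eans-Dade exponential (citing Section I.4f of Jacod--Shiryaev) and gives no argument. You have filled in that argument correctly. In particular, identifying $V^{w,\beta}=w\,\mathcal E(X^\beta)$ with $X^\beta=\int_0^\cdot \beta_s\,dR_s$, computing $\Delta X^\beta_t=\beta_t v(t,\Delta Z_t)$ at jump times of $Z$ (the drift, Brownian, and compensator parts being continuous), and then reading off the sign from the product $\prod_{s\le t}(1+\Delta X^\beta_s)$ is exactly what the paper's one-line justification is pointing at. Your observation that the property is independent of $w>0$, and that $w=0$ is trivial, is a useful tidying.

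The one place that deserves more care is the ``only if'' direction in the presence of absorption, and you are right to flag it. If some jump gives $\Delta X^\beta_\tau=-1$ exactly, then $V$ is absorbed at $0$ at time $\tau$, and a later jump with $\beta_t v(t,\Delta Z_t)<-1$ for $t>\tau$ does not violate nonnegativity of $V$; so admissibility of $\beta$ does not, strictly speaking, force the inequality to hold at \emph{all} jump times, only up to the absorption time. The paper's ``for a.e.\ $t\le T$'' phrasing does not by itself repair this, since Lebesgue-a.e.\ excludes the (countable) jump set entirely and would render the condition vacuous; the intended reading is evidently ``at every jump time'' (or equivalently, a.e.\ with respect to $\mathbb P\times dt\times\nu(dz)$ when $v$ is viewed as a random field in $(t,z)$), modulo the stopped behaviour you describe. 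This is a defect in the paper's formulation rather than in your argument; you have correctly localized where the subtlety lives, and the rest of your proof is sound. One very small stylistic point: in the ``only if'' step it would be cleaner to take $\tau:=\inf\{s\le T:\Delta X^\beta_s<-1\}$ and then intersect with $\{\mathcal E(X^\beta)_{\tau^-}>0\}$, making explicit that the contradiction is only produced on the event where absorption has not yet occurred; as written the inequality $\mathcal E(X^\beta)_{\tau^-}>0$ is asserted rather than justified.
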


To define our utility maximization problem, we begin by
introducing the {\it bounded state-dependent utility function}.
\begin{defn}\label{CondOnUtility}
A random function $U:\bbr_{+}\times\Omega\mapsto\bbr_{+}$ is
called a ``bounded and state-dependent utility function" if
\begin{itemize}
    \item[1.]
     $U(\cdot,\omega)$ is nonnegative, non-decreasing,
    and continuous on $[0,\infty)$;
    \item[2.] 
    For each fixed $w$, the mapping $\omega \mapsto U(w,\omega)$ is $\calF_{T}$-measurable;
    \item[3.] There is an $\calF_{T}$-measurable, positive random variable $H$ such that
    for all $\omega\in\Omega$, $U(\cdot,\omega)$ is {a strictly concave differentiable function}
     on $(0,H(\omega))$, and it holds that
 \begin{eqnarray}
 \label{Ubdd}
  && U(w,\omega)\equiv U(w\wedge H(\omega),\omega), \qquad
  w\in\bbr_+; \\
 \label{EUfin}
  && \bbe\left[U\left(H;\cdot\right)\right]<\infty;
 \end{eqnarray}
\end{itemize}
\end{defn}
{
Notice that the $\calF_{T}$-measurability of the random variable
$\omega\rightarrow U(V_{T}(\omega),\omega)$
is automatic because $U(w,\omega)$ is  $\calB([0,\infty))\times\calF_{T}$-measurable
in light of the above conditions 1 and 2.}
We remark that while the assumption (\ref{EUfin}) is merely
technical, the assumption (\ref{Ubdd}) is motivated by the shortfall
risk measure (\ref{U}). Our utility optimization problem is thus
defined as
 \begin{equation}
 \label{UtiMax}
    u(z):=
    \sup\left\{ \bbe\left[U(V_{T}(\cdot),\cdot)\right]
        :\; V\in \calV^w_{ad} ~~{\rm with }~~
        w\leq{}z \right\}.
    \end{equation}
for any $z>0$. We should note that the above problem is
relevant only for those initial wealths $z$ that are smaller than
$\bar{w}:=\sup_{\bbq\in\calM}\bbe_{\bbq}\left\{H\right\}$,
the super-hedging cost of $H$. Indeed, if $z\geq \bar{w}$, then
there exists an admissible trading strategy $\beta^{*}$ for $z$
such that $V_{_{T}}^{z,\beta^{*}}\geq{}H$ almost surely, and
consequently, $u(z)=\bbe\left[U(H,\cdot)\right]$ (see
\cite{DlbSch:1994} and \cite{Kramkov:1996} for this
\emph{super-hedging} result).

Our main objectives in the rest of the paper are the following: (1)
Define the dual problem and identify the relation between the value
functions of the primal and the dual problems;  (2) By suitably
defining the dual domain, prove the attainability of the associated
dual problem; and (3) Show that the potential optimum final wealth
induced by the minimizer of the dual problem can be realized by an
admissible portfolio. We shall carry out these tasks in the
remaining  sections.

\section{The duality method and the dual problems}
\label{ConvDualSect}
\setcounter{equation}{0}

In this section we introduce the {\it dual problems} corresponding
to the primal problem (\ref{PrimalProblemReal}). We begin by
defining the so-called \emph{convex dual function} of
$U(\cdot;\omega)$:
\begin{equation}\label{ConvDual}
    \widetilde{U}(y,\omega):=
    \sup_{0\leq{}z\leq{}H(\omega)}
    \left\{U\left(z,\omega\right)
    -yz\right\}.
\end{equation}
We note that the function $\widetilde{U}$ is closely related to
the Legendre-Fenchel transformation of the convex function
$-U(-z)$. It can be easily checked that
$\widetilde{U}(\cdot;\omega)$ is convex and differentiable
everywhere, for each $\omega$. Furthermore, if we denote the {\it
generalized inverse} function of {$U'(\cdot, \omega)$} by
\begin{equation}\label{Inverse}
    I(y,\omega):=
    \inf\left\{z\in(0,H(\omega))|U'(z,\omega)<y\right\},
\end{equation}
with the convention that $\inf {\emptyset}=\infty$, then it holds
that
\begin{equation}\label{DerConvexDual}
        \widetilde{U}'(y,\omega)=-\left(I(y;\omega)\wedge H\right),
        \quad \forall\; y>0,
    \end{equation}
and the function $\tilde U$ has the following representation
\begin{eqnarray}
\label{Utilde}
    \widetilde{U}(y,\omega)=U\left(I(y,\omega)\wedge
    H(\omega),\omega\right)-y\left(I(y,\omega)\wedge
    H(\omega)\right).
\end{eqnarray}

{\begin{remk}\label{Measurability2}
    We point out that the random fields defined in
    (\ref{ConvDual})-(\ref{Inverse}) are
    $\calB([0,\infty))\times\calF_{T}$-measurable.
    For instance, in the case of $\widetilde{U}$,
    we can write
    \[
        \widetilde{U}(y,\omega)=
        \sup_{z\geq{}0}\{U(z,\omega)-yz\}{\bf 1}_{\{z\leq{}H(\omega)\}},
    \]
    and we will only need to check that
    $(y,\omega)\rightarrow\{U(z,\omega)-yz\}{\bf 1}_{\{z\leq{}H(\omega)\}}$ is
    jointly measurable for each fixed $z$. This last fact follows
    because
    the random field in question is continuous in
    the spatial variable $y$ for each $\omega$, and is 
    $\calF_{T}$-measurable for each y.
    In light of (\ref{DerConvexDual}),
    it transpires that the random field
    $I(y,\omega)\wedge H(\omega)$ is jointly measurable.
    Given that the subsequent dual problems and corresponding solutions
    are given in terms of the fields $\widetilde{U}(y,\omega)$
    and $I(y,\omega)\wedge H(\omega)$
    (see Definition \ref{DefDualProblem} and Theorem \ref{MainThrm1} below),
    the measurability of several key random variable below
    is guaranteed.
\end{remk}
}

Next, we introduce the so-called ``dual class".
\begin{defn}\label{DualClass}
    Let $\widetilde\Gamma$ be the class of nonnegative
    supermartingales $\xi$ such that
    \begin{enumerate}
    \item[(i)] $0\leq\xi(0)\leq 1$, and
    \smallskip
    \item[(ii)] for each locally bounded admissible trading
    strategy $\beta$, 
    $\{\xi(t)
    V^{\beta}_{t}\}_{t\leq{}T}$ is a
    supermartingale.
    \end{enumerate}
\end{defn}

To motivate the construction of the dual problems below we note that
if $\xi\in\widetilde{\Gamma}$ and $V$ is the value process of a
self-financing admissible portfolio with initial
endowment $V_{0}\leq{}z$, then $\bbe\left[\xi(T)
\left(V_{_{T}}\wedge H\right)\right]\leq z$, and it follows that
\begin{eqnarray}
  \label{dualrel}
    \bbe\left[U\left(V_{T},\cdot\right)\right]&\leq&
    \bbe\left[U\left(V_{T}\wedge H,\cdot\right)\right]
    -y\left(\bbe\left[\xi(T)
    \left(V_{T}\wedge H\right)\right]-z\right)\nonumber\\
    &\leq&\bbe\{\sup_{0\leq{z'}\leq H(\cdot)}\left\{U\left(z',\cdot\right)
    -y\xi(T)
    z'\right\}\}+zy\\
    &=&\bbe\{\widetilde{U}(y\xi(T)
    ,\cdot)\}+zy.
    \nonumber
\end{eqnarray}
for any $y\geq{}0$.  The dual problem is defined as follows.
\begin{defn}\label{DefDualProbl}
Given a subclass $\Gamma\subset\widetilde\Gamma$,
the minimization problem
\begin{equation}
\label{DefDualProblem}
    v_{_{\Gamma}}(y):=\inf_{\xi\in\Gamma}
    \bbe\left[\widetilde{U}(y\xi(T)
    ,\omega)\right],
    \quad y>0,
\end{equation}
is called the ``dual problem induced by $\Gamma$". The class
$\Gamma$ is referred to as a dual  domain (or class) and
$v_{_{\Gamma}}(\cdot)$ is called its dual value function.
\end{defn}
Notice that, by (\ref{dualrel}) and (\ref{DefDualProblem}), we have
the following \emph{weak duality} relation between the primal and
dual value functions:
\begin{equation}\label{WeakDualityEq}
    u(z)\leq{}v_{_{\Gamma}}(y)+zy,
\end{equation}
valid for all $z,y\geq 0$. {The effectiveness of the dual problem
depends on the attainability of the lower bound in
(\ref{WeakDualityEq}) for some $y^*=y^*(z)>0$ (in which case, we say
that \emph{strong duality} holds), and the attainability of its
corresponding dual problem (\ref{DefDualProblem}).
The following {important properties will be needed for future reference}.
\begin{prop}\label{ProptyDual}
    The dual value function $v_{_{\Gamma}}$ corresponding
    to a subclass $\Gamma$ of $\widetilde\Gamma$ satisfies the
    following properties:
    \begin{enumerate}
    \item[(1)] $v_{_{\Gamma}}$ is non-increasing on $(0,\infty)$ and
    \(
        \bbe\left[U(0;\cdot)\right]\leq
        v_{_{\Gamma}}(y)\leq \bbe\left[U(H;\cdot)\right].
    \)

    \item[(2)] If
    \begin{equation}\label{CondOnClassOfSprMart1}
        0<w_{_{\Gamma}}:=\sup_{\xi\in\Gamma}\bbe\left[\xi(T)
        H\right]<\infty,
    \end{equation}
    then $v_{_{\Gamma}}$ is uniformly continuous on $(0,\infty)$, and
    \begin{equation}\label{DervDual}
        \lim_{y\downarrow{}0}\frac{\bbe\left[U(H;\cdot)\right]-v_{_{\Gamma}}(y)}{y}=
        \sup_{\xi\in\Gamma}\bbe\left[\xi(T)
        H\right].
    \end{equation}

    \item[(3)] There exists a process $\widetilde{\xi}\in\widetilde{\Gamma}$
    such that
    \(
    \bbe\left[\widetilde{U}(y\,\widetilde{\xi}(T)
    ,\cdot)\right]
    \leq v_{_{\Gamma}}(y).
    \)

    \item[(4)] If $\Gamma$ is a convex set, then
    (i) $v_{_{\Gamma}}$ is convex, and
    (ii) there exists a $\xi^{*}\in\widetilde\Gamma$
    attaining the minimum $v_{_{\Gamma}}(y)$.
    Furthermore,
        the optimum $\xi^{*}$ can be ``approximated''
        by elements of $\Gamma$ in the sense that
        there exists a sequence $\{\xi^{n}\}_{n}\subset\Gamma$
        for which $\xi^{n}(T)\rightarrow\xi^{*}(T)$, a.s.
    \end{enumerate}
\end{prop}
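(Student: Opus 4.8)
The plan is to deduce all four assertions from a few $\omega$-wise properties of the convex dual field $\widetilde U(\cdot,\omega)$ of (\ref{ConvDual})--(\ref{Utilde}), together with one compactness input for nonnegative supermartingales. Throughout I would use that $0\le U(0,\cdot)\le\widetilde U(y\xi(T),\cdot)\le U(H,\cdot)$ for every $\xi\in\Gamma$ and $y\ge 0$ (lower bound: take $z=0$ in (\ref{ConvDual}); upper bound: $U$ is non-decreasing), so that every expectation below is finite by (\ref{EUfin}), and that the fields $\widetilde U(y,\omega)$ and $I(y,\omega)\wedge H(\omega)$ are jointly measurable by Remark~\ref{Measurability2}. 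For (1), $y\mapsto\widetilde U(y,\omega)$ is non-increasing (a supremum of non-increasing affine functions of $y$), hence so is $y\mapsto\bbe[\widetilde U(y\xi(T),\cdot)]$ and its infimum over $\Gamma$; the two bounds follow by taking $\bbe$ and $\inf_{\xi\in\Gamma}$ in the pointwise bounds above. For (4)(i), given $y_1,y_2>0$, $\lambda\in(0,1)$ and $\varepsilon>0$, choose $\xi_i\in\Gamma$ with $\bbe[\widetilde U(y_i\xi_i(T),\cdot)]\le v_{_{\Gamma}}(y_i)+\varepsilon$, put $\bar y:=\lambda y_1+(1-\lambda)y_2$ and $\bar\xi:=(\lambda y_1\xi_1+(1-\lambda)y_2\xi_2)/\bar y$; this is a \emph{convex} combination of $\xi_1,\xi_2$, hence $\bar\xi\in\Gamma$, and since $\bar y\,\bar\xi(T)=\lambda(y_1\xi_1(T))+(1-\lambda)(y_2\xi_2(T))$, convexity of $\widetilde U(\cdot,\omega)$ gives $v_{_{\Gamma}}(\bar y)\le\bbe[\widetilde U(\bar y\bar\xi(T),\cdot)]\le\lambda v_{_{\Gamma}}(y_1)+(1-\lambda)v_{_{\Gamma}}(y_2)+\varepsilon$; let $\varepsilon\downarrow 0$.

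For (2) I would use (\ref{DerConvexDual}): since $\widetilde U'(\cdot,\omega)=-(I(\cdot,\omega)\wedge H(\omega))\in[-H(\omega),0]$, for $0<y_1\le y_2$ and any $\xi\in\Gamma$ one has
\[
0\le\widetilde U(y_1\xi(T),\omega)-\widetilde U(y_2\xi(T),\omega)=\int_{y_1\xi(T)}^{y_2\xi(T)}\bigl(I(s,\omega)\wedge H(\omega)\bigr)\,ds\le(y_2-y_1)\,\xi(T)\,H(\omega);
\]
taking $\bbe$, then $\sup_{\xi\in\Gamma}$ on the right and $\inf_{\xi\in\Gamma}$ on the left, yields $0\le v_{_{\Gamma}}(y_1)-v_{_{\Gamma}}(y_2)\le w_{_{\Gamma}}(y_2-y_1)$, so $v_{_{\Gamma}}$ is $w_{_{\Gamma}}$-Lipschitz, hence uniformly continuous. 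For (\ref{DervDual}), using $\widetilde U(0,\omega)=U(H(\omega),\omega)$ from (\ref{Utilde}), set $g_\xi(y):=y^{-1}\bigl(\bbe[U(H;\cdot)]-\bbe[\widetilde U(y\xi(T),\cdot)]\bigr)=\bbe[\xi(T)\,\phi(y\xi(T),\cdot)]$, where $\phi(t,\omega):=t^{-1}\int_0^t(I(s,\omega)\wedge H(\omega))\,ds$; because $s\mapsto I(s,\omega)\wedge H(\omega)$ is non-increasing with right-limit $H(\omega)$ at $0$ (from (\ref{Inverse}), (\ref{DerConvexDual}) and strict concavity of $U(\cdot,\omega)$), $\phi(\cdot,\omega)$ is non-increasing with $\phi(0+,\omega)=H(\omega)$, so $\xi(T)\phi(y\xi(T),\cdot)\uparrow\xi(T)H$ as $y\downarrow 0$ and monotone convergence gives $g_\xi(y)\uparrow\bbe[\xi(T)H]$. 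Finally $y^{-1}\bigl(\bbe[U(H;\cdot)]-v_{_{\Gamma}}(y)\bigr)=\sup_{\xi\in\Gamma}g_\xi(y)$ with $g_\xi(y)\le\bbe[\xi(T)H]\le w_{_{\Gamma}}$, so $\limsup_{y\downarrow 0}\sup_{\xi}g_\xi(y)\le w_{_{\Gamma}}$; and since for each fixed $\xi$ one has $\liminf_{y\downarrow 0}\sup_{\xi'}g_{\xi'}(y)\ge\lim_{y\downarrow 0}g_\xi(y)=\bbe[\xi(T)H]$, taking $\sup_{\xi}$ gives the reverse inequality and hence (\ref{DervDual}).

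For (3) and (4)(ii), fix $y>0$ and take a minimizing sequence $\{\xi^n\}\subset\Gamma$ with $\bbe[\widetilde U(y\xi^n(T),\cdot)]\le v_{_{\Gamma}}(y)+1/n$. Each $\xi^n$ is a nonnegative supermartingale with $\xi^n(0)\le 1$, so the compactness property underlying the F\"ollmer--Kramkov decomposition (cf.\ \cite{Follmer:1997} and Appendix~\ref{CovClssSect}) furnishes convex combinations $\eta^n=\sum_k\lambda^n_k\xi^k\in\mathrm{conv}(\xi^n,\xi^{n+1},\dots)$ and a nonnegative c\`adl\`ag supermartingale $\widetilde\xi$ with $\eta^n(t)\to\widetilde\xi(t)$ a.s.\ for every $t\le T$. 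Then $\widetilde\xi\in\widetilde\Gamma$: indeed $\widetilde\xi(0)\le 1$, and for each admissible $\beta$ the processes $\eta^n(t)V^\beta_t$ are supermartingales (convex combinations of the $\xi^k(t)V^\beta_t$), so the conditional Fatou lemma passes the supermartingale inequality to the a.s.\ limit $\widetilde\xi(t)V^\beta_t$, i.e.\ condition~(ii) of Definition~\ref{DualClass} holds. Since $\widetilde U(\cdot,\omega)$ is continuous and $0\le\widetilde U(y\eta^n(T),\cdot)\le U(H,\cdot)\in L^1$, dominated convergence together with convexity of $\widetilde U(\cdot,\omega)$ give
\[
\bbe[\widetilde U(y\widetilde\xi(T),\cdot)]=\lim_{n}\bbe[\widetilde U(y\eta^n(T),\cdot)]\le\limsup_{n}\sum_k\lambda^n_k\,\bbe[\widetilde U(y\xi^k(T),\cdot)]\le v_{_{\Gamma}}(y),
\]
which is (3). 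If in addition $\Gamma$ is convex, then $\eta^n\in\Gamma$, so each $\bbe[\widetilde U(y\eta^n(T),\cdot)]$ lies in $[v_{_{\Gamma}}(y),v_{_{\Gamma}}(y)+1/n]$ and the displayed chain collapses to an equality; thus $\xi^*:=\widetilde\xi\in\widetilde\Gamma$ attains $v_{_{\Gamma}}(y)$ and is approximated by $\{\eta^n\}\subset\Gamma$ with $\eta^n(T)\to\xi^*(T)$ a.s., which is (4)(ii).

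I expect the only genuinely non-elementary step, and thus the main obstacle, to be the compactness used in (3) and (4)(ii): extracting from a minimizing sequence of nonnegative supermartingales convex combinations that Fatou-converge to a process which is again a bona fide supermartingale and still verifies condition~(ii) of Definition~\ref{DualClass}. Everything else is soft: monotonicity and convexity of $\widetilde U(\cdot,\omega)$, the bound $|\widetilde U'(\cdot,\omega)|\le H$, monotone and dominated convergence, and the interchange of $\sup_{\xi\in\Gamma}$ with $\lim_{y\downarrow 0}$, all of which rely only on the properties of $U$ in Definition~\ref{CondOnUtility} and on Remark~\ref{Measurability2} for the requisite measurability.
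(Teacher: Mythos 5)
Your proposal is correct and follows essentially the same route as the paper: the elementary parts from the pointwise monotonicity/convexity of $\widetilde U(\cdot,\omega)$ and the bound $|\widetilde U'|\le H$, part (2) from the fundamental theorem of calculus plus a monotone/dominated convergence interchange with $\sup_{\xi\in\Gamma}$, and parts (3)--(4)(ii) from the F\"ollmer--Kramkov compactness result for nonnegative supermartingales (the paper cites their Lemma 5.2). One small imprecision worth noting: that lemma yields \emph{Fatou convergence on a countable dense set} (the double $\limsup/\liminf$ in (\ref{FatouConv})), not a.s.\ convergence $\eta^n(t)\to\widetilde\xi(t)$ at every fixed $t$; the paper recovers a.s.\ convergence at the single time $t=T$ by first extending the $\xi^n$ to be constant on $[T,\infty)$, and likewise the verification that $\widetilde\xi V^\beta$ is a supermartingale requires the Fatou-limit form rather than a direct conditional Fatou at each $t$. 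These refinements do not change your argument in substance.
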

\begin{proof}
    For simplicity, we write $v(y)=v_{_{\Gamma}}(y)$.
    The monotonicity and range of values of $v$ are straightforward.
    To prove (2), notice that since
     $\widetilde{U}(\cdot;\omega)$ is convex, non-increasing, and
     $\widetilde{U}'(0^{+};\omega)=-H(\omega)$, we have
     \[
        \frac{\bbe\left[U(H;\cdot)\right]-
        \inf_{\xi}\bbe\left[\widetilde{U}(y\xi(T)
        )\right]}{y}\leq
        \sup_{\xi\in\Gamma}\bbe\left[\xi(T)
        H\right].
     \]
     On the other hand, by the mean value theorem,
     dominated convergence theorem, (\ref{DerConvexDual}),
     and Assumption \ref{CondOnUtility},
     \[
        \bbe\left[H\hat\xi(T)
        \right]
        \leq{}
        \liminf_{y\downarrow{}0}\frac{\bbe\left[U(H;\cdot)\right]-v(y)}{y}
        \leq
        \sup_{\xi\in\Gamma}\bbe\left[\xi(T)
        H\right],
     \]
     for every $\hat{\xi}\in\Gamma$. Then, (2)
     is evident.
     Uniform continuity is straightforward since for any $h$
     small enough it holds that
     \[
        |v_{_{\Gamma}}(y+h)-v_{_{\Gamma}}(y)|\leq{}w_{_{\Gamma}}
        |h|,
     \]

     The part (i) of (4) is well-known. Let us turn out to
     prove (3) and part (ii) in (4).
    Let $\{\xi^{n}\}_{n\geq{}1}\subset \Gamma\subset
    \widetilde{\Gamma}$ be such that
    \begin{equation}
        \lim_{n\rightarrow\infty}\bbe\left[\widetilde{U}(y\xi^{n}_{T}
        ,\omega)\right]
        =v_{_{\Gamma}}(y).
    \end{equation}
Without loss of generality, one can assume that each process
$\xi^{n}$ is constant on $[T,\infty)$.
By Lemma 5.2  in \cite{Follmer:1997},
there exist $\bar{\xi}^{n}\in{\rm
conv}(\xi^{n},\xi^{n+1},\dots)$, $n\geq{}1$, and a nonnegative supermartingale
$\{\widetilde{\xi}_{t}\}_{t\geq{}0}$ with
$\widetilde\xi_{0}\leq{}1$ such that
$\{\bar{\xi}^{n}\}_{n\geq{}1}$ is \emph{Fatou
convergent to $\widetilde{\xi}$ on the rational
numbers $\pi$}; namely,
\begin{equation}\label{FatouConv}
    \widetilde\xi_{t}=\limsup_{s\downarrow{}t\;:\;
    s\in\pi}\limsup_{n\rightarrow\infty} \bar\xi^{n}_{s}=
    \liminf_{s\downarrow{}t\;:\;
    s\in\pi}\liminf_{n\rightarrow\infty}
    \bar\xi^{n}_{s},\quad a.s.
\end{equation}
for all $t\geq{}0$.
By Fatou's Lemma, it is not hard to check that
$\left\{\tilde\xi(t)
V_{t}\right\}_{t\leq{}T}$ is a
supermartingale for every admissible portfolio with value
process $V$, and hence,  $\widetilde{\xi}\in\widetilde{\Gamma}$.
Next, since the $\xi^{n}$'s are constant on $[T,\infty)$ and
$\widetilde{U}(\cdot;\omega)$ is convex, Fatou's Lemma implies
that
\begin{align*}
    \bbe\left[\widetilde{U}(y\tilde\xi_{_{T}}
    ,\omega)\right]
    &\leq v_{_{\Gamma}}(y),
\end{align*}
Finally, we need to verify  that, when $\Gamma$ is convex,
equality above is attained and
and that $\widetilde\xi$ can be approximated by elements of
$\Gamma$. Both facts are clear since
$\{\bar\xi^{n}\}\subset\Gamma$ and
$\lim_{n\rightarrow\infty}\bar\xi^{n}_{_{T}}=
\widetilde{\xi}_{_{T}}$ a.s. Then, by the
continuity and boundedness of $\widetilde{U}$,
    \[
        v_{_{\Gamma}}(y)\leq{}
        \lim_{n\rightarrow\infty}
        \bbe\left[\widetilde{U}(y\bar\xi^{n}_{_{T}}B_{T}^{-1})\right]
        =\bbe\left[\widetilde{U}(y\widetilde\xi_{_{T}}B_{T}^{-1},\omega)\right].
    \]
\end{proof}
}

We now give a result that is crucial for proving the strong duality in (\ref{WeakDualityEq}).
\begin{thrm}\label{MainThrm1}
    Suppose that (\ref{CondOnClassOfSprMart1}) is satisfied
    and $\Gamma$ is convex.
    Then, for any $z\in\left(0,w_{_{\Gamma}}\right)$,
    there exist $y(z)>0$ and
    $\xi^{*}_{y(z)}\in\widetilde\Gamma$ such that
\begin{enumerate}
    \item[(i)] $\bbe\left[\widetilde{U}\left(
    y(z) \xi^{*}_{y(z)}(T)
    ,\omega\right)\right]
        \leq \bbe\left[\widetilde{U}\left(
        y(z)\xi(T)
        ,\omega\right)\right],
        \quad { \forall \quad \xi\in\Gamma}$;
    \item[(ii)]
    $\bbe\left[V^{\Gamma}_{z}\,
        \xi^{*}_{y(z)}(T)
        \right]=z$,
    where
    \[
        V^{\Gamma}_{z}:= I\left(
        y(z){ \xi^{*}_{y(z)}}(T)
        \right)\wedge H;
    \]
    \item[(iii)]
    $u(z)\leq{} \bbe\left[U\left(V^{\Gamma}_{z};\omega\right)\right]$.
\end{enumerate}
\end{thrm}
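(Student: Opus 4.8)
The plan is to extract a Lagrange multiplier $y(z)$ from the convexity of the dual value function $v_{\Gamma}$ established in Proposition \ref{ProptyDual}, to take $\xi^{*}_{y(z)}$ to be the dual minimizer furnished by part (4)(ii) of that proposition, and then to read off (ii) as the first-order (stationarity) condition in $y$ and (iii) as a rearrangement of weak duality via the pointwise identity (\ref{Utilde}).

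\emph{Step 1 (the multiplier).} Under (\ref{CondOnClassOfSprMart1}), Proposition \ref{ProptyDual} tells us that $v_{\Gamma}$ is convex, uniformly continuous, non-increasing, and bounded below by $\bbe[U(0;\cdot)]\geq 0$ on $(0,\infty)$; uniform continuity lets us extend it continuously to $0$ with $v_{\Gamma}(0)=\bbe[U(H;\cdot)]$, and then (\ref{DervDual}) gives $v_{\Gamma}'(0^{+})=-w_{\Gamma}$, while boundedness below forces $v_{\Gamma}'(y^{+})\uparrow 0$ as $y\to\infty$. Hence for $z\in(0,w_{\Gamma})$ the convex function $g(y):=v_{\Gamma}(y)+zy$ satisfies $g'(0^{+})=z-w_{\Gamma}<0$ and $g'(y^{+})\to z>0$, so it is minimized at some interior point $y(z)\in(0,\infty)$ with $-z\in\partial v_{\Gamma}(y(z))$. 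Applying Proposition \ref{ProptyDual}(4)(ii) at this $y(z)$ produces $\xi^{*}:=\xi^{*}_{y(z)}\in\widetilde\Gamma$ with $\bbe[\widetilde U(y(z)\xi^{*}(T),\cdot)]=v_{\Gamma}(y(z))$, which is exactly (i), together with a sequence $\{\bar\xi^{n}\}\subset\Gamma$ for which $\bar\xi^{n}(T)\to\xi^{*}(T)$ a.s. Since $0\leq\widetilde U(\cdot,\omega)\leq U(H,\omega)\in L^{1}$ and $\widetilde U(\cdot,\omega)$ is continuous, dominated convergence yields $h(y):=\bbe[\widetilde U(y\xi^{*}(T),\cdot)]\geq v_{\Gamma}(y)$ for every $y>0$, with equality at $y(z)$; Fatou's lemma also gives $\bbe[\xi^{*}(T)H]\leq\liminf_{n}\bbe[\bar\xi^{n}(T)H]\leq w_{\Gamma}<\infty$.

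\emph{Step 2 (proof of (ii)).} Because $\widetilde U(\cdot,\omega)$ is convex and differentiable with $\widetilde U'(y,\omega)=-(I(y,\omega)\wedge H(\omega))\in[-H(\omega),0]$, the difference quotients of $y\mapsto\widetilde U(y\xi^{*}(T),\cdot)$ are dominated by $\xi^{*}(T)H\in L^{1}$, so $h$ is differentiable with $h'(y)=-\bbe[\xi^{*}(T)(I(y\xi^{*}(T))\wedge H)]$. Now $\widetilde g(y):=h(y)+zy$ dominates $g$ and agrees with it at $y(z)$, where $g$ attains its minimum; hence $\widetilde g$ is also minimized at the interior point $y(z)$, so $\widetilde g'(y(z))=0$, i.e. $\bbe[\xi^{*}(T)(I(y(z)\xi^{*}(T))\wedge H)]=z$, which is (ii) with $V^{\Gamma}_{z}=I(y(z)\xi^{*}(T))\wedge H$. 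No division by $\xi^{*}(T)$ is needed: on $\{\xi^{*}(T)=0\}$ one has $I(0,\cdot)\wedge H=H$ and the integrand vanishes, which is precisely why the possibility $\bbp(\xi^{*}(T)=0)>0$ causes no trouble here. Part (iii) is then bookkeeping: weak duality (\ref{WeakDualityEq}) at $(z,y(z))$ gives $u(z)\leq v_{\Gamma}(y(z))+zy(z)=h(y(z))+zy(z)$, and substituting the pointwise identity $\widetilde U(y(z)\xi^{*}(T),\cdot)=U(V^{\Gamma}_{z},\cdot)-y(z)\,\xi^{*}(T)V^{\Gamma}_{z}$ from (\ref{Utilde}) together with (ii) collapses the right-hand side to $\bbe[U(V^{\Gamma}_{z},\cdot)]$.

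The step I expect to be the main obstacle is Step 1: pinning down the boundary behavior of $v_{\Gamma}$ at $0$ and at $\infty$ precisely enough to guarantee that the minimizing $y(z)$ is a strictly positive, finite interior point (so that the stationarity condition in Step 2 is genuinely available), together with securing the uniform-integrability/domination bounds ($\bbe[\xi^{*}(T)H]<\infty$ and $0\le\widetilde U\le U(H,\cdot)\in L^{1}$) that legitimize both the passage to the limit along $\{\bar\xi^{n}\}$ and the differentiation under the expectation defining $h$. Once these are in place, (i)--(iii) follow essentially mechanically.
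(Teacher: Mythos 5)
Your proposal is correct and follows essentially the same route as the paper's own proof: locate an interior minimizer $y(z)$ of $v_{\Gamma}(y)+zy$ using the derivative bounds from Proposition~\ref{ProptyDual}, take $\xi^{*}_{y(z)}$ from Proposition~\ref{ProptyDual}(4)(ii), and then read (ii) off as the first-order stationarity condition of $y\mapsto \bbe[\widetilde U(y\,\xi^{*}_{y(z)}(T))]+zy$ at $y(z)$ (the paper phrases this via the auxiliary function $F(u)=uy(z)z+\bbe[\widetilde U(uy(z)\xi^{*}_{y(z)}(T))]$ minimized at $u=1$, which is the same statement after the substitution $y=uy(z)$), with (iii) following from (\ref{Utilde}), (ii), and weak duality. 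The only cosmetic differences are that you justify $h(y)\geq v_{\Gamma}(y)$ via a single dominated-convergence pass along the approximating sequence rather than the paper's $\varepsilon$-argument, and you appeal to convexity of $v_{\Gamma}$ to control the behavior at $\infty$ where the paper instead uses nonnegativity of $v_{\Gamma}$; both are equivalent in substance.
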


\begin{proof}
    We borrow the idea of  \cite{Karatzas:1991}. For simplicity let us write $v(y)$
    instead of $v_{_{\Gamma}}(y)$.
    Recall that
    $w_{_{\Gamma}}:=\sup_{\xi\in\Gamma}\bbe\left[\xi(T)
    H\right]$
    and define $v(0):=\bbe\left[U(H;\omega)\right]$.
    In light of Lemma \ref{ProptyDual}, the continuous function
    $f_{z}(y):=v(y)+zy$ satisfies
    \[
        \lim_{y\downarrow{}0}\frac{f_{z}(y)-f_{z}(0)}{y}=-w_{_{\Gamma}}+z
        <0,\quad{\rm and}\quad
        f_{z}(\infty)=\infty,
    \]
    for all $z < w_{_{\Gamma}}$. Thus, $f_{z}(\cdot)$ attains its
    minimum at some $y(z)\in(0,\infty)$.
    By Proposition \ref{ProptyDual},
    we can find a $\xi_{y(z)}\in\widetilde\Gamma$ such
    that
    \[
        v(y(z))=\bbe\left[\widetilde{U}(y(z)\xi_{y(z)}(T)
        ,\omega)\right],
    \]
    proving the (i) above.
    Now, consider the function
    \[
        F(u):=u y(z)z+
        \bbe\left[\widetilde{U}\left(uy(z)\xi_{y(z)}(T)
        \right)\right],
        \quad u>0.
    \]
    Since $\xi_{y(z)}$ can be approximated by
    elements in $\Gamma$, for each $\varepsilon>0$ there exists a
    $\xi^{y,\varepsilon}_{y(z)}\in\Gamma$ such that
    \[
        \bbe\left[\widetilde{U}\left(y\xi_{y(z)}(T)
        \right)\right]
        >\bbe\left[\widetilde{U}\left(y\xi^{y,\varepsilon}_{y(z)}(T)
        \right)\right]-\varepsilon.
    \]
    It follows that for each $\varepsilon>0$,
    \begin{align*}
        { \inf_{u>0} \, F(u)}
        &\geq\inf_{y>0}
        \left\{ yz+
        \bbe\left[\widetilde{U}\left(y\xi^{y,\varepsilon}_{y(z)}(T)
        \right)\right]\right\}
        -\varepsilon\\
        &\geq\inf_{y>0}
        \left\{ yz+v(y)\right\}-\varepsilon\\
        &=y(z)z+
        \bbe\left[\widetilde{U}\left(y(z)\xi_{y(z)}(T)
        \right)\right]
        -\varepsilon.
    \end{align*}
    Since $\varepsilon>0$ is arbitrary, the function $F(u)$
    attains its minimum at $u=1$. On the other hand,
    $\frac{F(1+h)-F(1)}{h}$ equals
    \begin{align*}
        y(z)z+\bbe\left[
        \frac{\widetilde{U}((1+h)y(z)\xi_{y(z)}(T)
        )-
        \widetilde{U}(y(z)\xi_{y(z)}(T)
        )}{h}\right],
    \end{align*}
    which converges to
    \[
        y(z)z-y(z)\bbe\left[\left(I\left(
        y(z)\xi_{y(z)}(T)
        \right)\wedge H\right)
        \xi_{y(z)}(T)
        \right]
    \]
    as $h\rightarrow{}0$. Here, we use (\ref{DerConvexDual})
    and the dominated convergence theorem.
    Then,
    \[
        \bbe\left[\left(I\left(
        y(z)\xi_{y(z)}(T)
        \right)\wedge H\right)
        \xi_{y(z)}(T)
        \right]=z.
    \]
    This proves (ii) of the theorem, and also, (iii) in light
    of (\ref{Utilde}) and (\ref{dualrel}).
\end{proof}
We note that Theorem \ref{MainThrm1} essentially provides an
upper bound for the optimal final utility of the form
$\bbe\left[U\left(V^{\Gamma}_{z};\omega\right)\right]$, for certain
``reduced" contingent claim $V^{\Gamma}_{z}\leq{}H$. By suitably
choosing the dual class $\Gamma$, we shall prove in the next two
sections that this reduced contingent claim is (super-) replicable
with an initial endowment $z$.

\section{Characterization of the optimal dual}
\label{SpecDualClssSect}
 \setcounter{equation}{0}

We now give a full description of a dual class $\Gamma$ for which
\emph{strong duality}, i.e., $u(z)=v_{_{\Gamma}}(y)+zy$, holds.
 Denote $\calV^{+}$ to be the class of all
real-valued c\`adl\`ag, non-decreasing, adapted processes $A$ null
at zero. We will call such a process ``increasing". In what follows we let $\calE(X)$ be the Dol\'eans-Dade
stochastic exponential of the semimartingale $X$ (see e.g.
\cite{Shiryaev:2003} for their properties).
Let
    \begin{equation}\label{BscClass3}
    {\calS}:=\left\{X_{t}:=
    \int_{0}^{t}G(s)d W_{s}
    +\int_{0}^{t}\int_{\bbr_{0}}F(s,z)\widetilde{N}(ds,dz):
    F\geq{}-1\right\},
    \end{equation}
and {consider} the associated class of exponential local supermartingales:
\begin{equation}\label{ExpSmmartClassb}
    \Gamma(\calS):= \{ \xi:=\xi_{0}\calE(X- A):
    X\in\calS, A\text{ increasing, and }\xi\geq{}0\}.
\end{equation}
In (\ref{BscClass3}), we assume that $G\in L_{loc}^{2}(W)$,
$F\in G_{loc}(N)$, and that $F(t,\cdot)=G(t)=0$, for all $t\geq{}T$.
The following result shows not only that the class
\begin{equation}\label{MainDualClass}
    \Gamma:=\widetilde\Gamma\cap \Gamma(\calS),
\end{equation}
is convex, but {also that} the dual optimum, whose existence is {deduced} from
Theorem \ref{MainThrm1}, remains in $\Gamma$. The proof of this
result is based on a powerful representation for nonnegative
supermartingales due to F\"ollmer and Kramkov \cite{Follmer:1997}
(see Theorem \ref{ThrmFollmerKramkov} in the appendix), and a technical result {about} the closedness of the class of
integrals with respect to Poisson random measures, under \'Emery's
topology. We shall differ {the presentation of these two fundamental results} to
Appendix \ref{CovClssSect} in order to {continue} with 
our discussion {of the dual problem}.

\begin{thrm}\label{ExistenceDual}
The class $\Gamma$ is convex, and
if (\ref{CondOnClassOfSprMart1}) is satisfied,
the dual optimum $\xi^{*}_{y(z)}$ of
Theorem \ref{MainThrm1} belongs to $\Gamma$,
for any $0<z<w_{_{\Gamma}}$.
\end{thrm}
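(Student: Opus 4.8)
The plan is to establish the two assertions separately. For convexity of $\Gamma=\widetilde\Gamma\cap\Gamma(\calS)$, I would first note that $\widetilde\Gamma$ is trivially convex, since conditions (i) and (ii) in Definition~\ref{DualClass} are preserved under convex combinations (a convex combination of supermartingales is a supermartingale, and likewise for $\{\xi(t)V^{\beta}_t\}_{t\le T}$). The substance is in showing $\Gamma(\calS)$ behaves well under convex combinations \emph{within} $\widetilde\Gamma$. Given $\xi^{1}=\xi^{1}_{0}\calE(X^{1}-A^{1})$ and $\xi^{2}=\xi^{2}_{0}\calE(X^{2}-A^{2})$ in $\Gamma$ and $\lambda\in(0,1)$, write $\xi:=\lambda\xi^{1}+(1-\lambda)\xi^{2}$. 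Since $\xi$ is a nonnegative supermartingale in $\widetilde\Gamma$, it admits \emph{its own} multiplicative optional decomposition $\xi=\xi_{0}\calE(X-A)$ from Theorem~\ref{ThrmFollmerKramkov} with $X$ a local martingale and $A$ increasing; the point is to verify $X\in\calS$, i.e.\ that $X$ has the integral representation in \eqref{BscClass3} with jump coefficient $F\ge-1$. The local martingale $X$ is driven by $W$ and $\widetilde N$ because the filtration is generated by $W$ and $N$ (martingale representation for Lévy filtrations), so $X_{t}=\int_{0}^{t}G(s)dW_{s}+\int_{0}^{t}\int_{\bbr_{0}}F(s,z)\widetilde N(ds,dz)$ for suitable $G,F$; the constraint $F\ge-1$ follows from $\xi\ge0$, since $\Delta\xi_{t}=\xi_{t^{-}}(\Delta X_{t}-\Delta A_{t})$ and nonnegativity forces $\Delta X_{t}\ge-1$ at jump times, which translates to $F(t,\Delta Z_{t})\ge-1$, $dt\otimes d\bbp$–a.e., and one upgrades this to $F\ge-1$ $\nu(dz)dt$–a.e.\ by the usual predictable-projection argument.

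For the second assertion — that the dual optimum $\xi^{*}_{y(z)}$ from Theorem~\ref{MainThrm1} lies in $\Gamma$ — I would trace through the construction in the proof of Proposition~\ref{ProptyDual}(4). There, $\xi^{*}=\widetilde\xi$ is obtained as the Fatou limit (on the rationals) of a sequence $\bar\xi^{n}\in\mathrm{conv}(\xi^{n},\xi^{n+1},\dots)\subset\Gamma$, via F\"ollmer--Kramkov's Lemma~5.2. We already know $\widetilde\xi\in\widetilde\Gamma$, so it remains to show $\widetilde\xi\in\Gamma(\calS)$, i.e.\ $\widetilde\xi=\widetilde\xi_{0}\calE(\widetilde X-\widetilde A)$ with $\widetilde X\in\calS$. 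Again apply Theorem~\ref{ThrmFollmerKramkov} to the nonnegative supermartingale $\widetilde\xi$ to get $\widetilde\xi=\widetilde\xi_{0}\calE(\widetilde X-\widetilde A)$ with $\widetilde X$ a local martingale on the Lévy filtration, hence of the form in \eqref{BscClass3} with some $G,F$; nonnegativity of $\widetilde\xi$ gives $F\ge-1$ as above. The one remaining hypothesis in the definition of $\calS$ to check is the support condition $F(t,\cdot)=G(t)=0$ for $t\ge T$, which holds because each $\xi^{n}$ (hence each $\bar\xi^{n}$, hence the limit) is constant on $[T,\infty)$ by the normalization made in that proof.

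The main obstacle — and the reason the paper defers to the appendix — is the passage from the \emph{approximating sequence} $\bar\xi^{n}=\bar\xi^{n}_{0}\calE(X^{n}-A^{n})$ with $X^{n}\in\calS$ to the \emph{limit} $\widetilde\xi=\widetilde\xi_{0}\calE(\widetilde X-\widetilde A)$ with $\widetilde X\in\calS$: Fatou convergence of the $\bar\xi^{n}$ does not obviously transfer to any good convergence of the driving semimartingales $X^{n}$, so one cannot simply ``take limits'' in the representation. The resolution is to invoke the closedness of the class of stochastic integrals against a Poisson random measure under \'Emery's topology — the $\calS$-analogue of M\'emin's theorem — which is exactly the closure property announced in the introduction and proved in Appendix~\ref{CovClssSect}. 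Concretely, one shows the (local martingale parts of the) $\bar\xi^{n}$'s, or a suitable transform of them, form a sequence in the $\calE$-closed set $\{\int G\,dW+\int\!\int F\,\widetilde N:F\ge-1\}$ that converges in \'Emery's topology, so the limit $\widetilde X$ stays in $\calS$; combined with the Fatou/optional-decomposition identification of $\widetilde\xi$ itself, this yields $\widetilde\xi\in\Gamma(\calS)$ and hence $\xi^{*}_{y(z)}=\widetilde\xi\in\Gamma$. I would present the convexity argument in full here and cite Theorem~\ref{ThrmFollmerKramkov} together with the Poisson-measure closure lemma for the limiting step, relegating the \'Emery-topology estimates to the appendix as the authors do.
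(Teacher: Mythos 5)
You correctly identify the key ingredients --- F\"ollmer--Kramkov's decomposition (Theorem~\ref{ThrmFollmerKramkov}), the \'Emery closure of $\calS$ (Theorem~\ref{ClosenessResult}), and the Fatou-limit construction of $\xi^*$ from Proposition~\ref{ProptyDual} --- but the way you assemble them has a genuine gap. For the convexity step you assert that, because the convex combination $\xi=\lambda\xi^1+(1-\lambda)\xi^2$ is a nonnegative supermartingale in $\widetilde\Gamma$, Theorem~\ref{ThrmFollmerKramkov} hands you a decomposition $\xi=\xi_0\calE(X-A)$ ``with $X$ a local martingale,'' after which you propose to verify $X\in\calS$ by hand. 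That is not what the theorem says: it characterizes processes of the form $\xi_0\calE(X-A)$ with $X\in\calS$ by the condition that $\xi/\calE(A^{\calS}(\bbq))$ be a $\bbq$-supermartingale for \emph{every} $\bbq\in\calP(\calS)$, not merely for $\bbq=\bbp$. Being a nonnegative $\bbp$-supermartingale is a priori weaker, so your decomposition step is unjustified; and if you did verify the full condition, the theorem already returns $X\in\calS$, making your plan to ``check $F\geq -1$ afterwards'' circular. The same worry recurs in the Fatou-limit step, where you want to extract \'Emery convergence of the driving semimartingales $X^n$ from Fatou convergence of the exponentials $\bar\xi^n$; Fatou convergence of $\calE(X^n-A^n)$ gives you no control on the $X^n$ individually, so this would need substantial extra work that your sketch does not supply.

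The paper's route avoids both obstacles. It first verifies that $\calS$ satisfies the three hypotheses of Theorem~\ref{ThrmFollmerKramkov} --- and this is the \emph{only} place Theorem~\ref{ClosenessResult} (closure of $\calS$ under \'Emery's topology) is used, namely to check hypothesis (ii); it is not used to pass decompositions through a limit. Once the theorem is applicable, Proposition~\ref{ClosednessClasses} gives convexity \emph{and} Fatou-closedness of $\Gamma(\calS)$ essentially for free, because condition (2) of the theorem (``$\xi/\calE(A^{\calS}(\bbq))$ is a $\bbq$-supermartingale for each $\bbq\in\calP(\calS)$'') is trivially preserved under convex combinations and, by Fatou's lemma and right-continuity, under Fatou limits along a dense set. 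Since $\widetilde\Gamma$ also enjoys both properties, so does $\Gamma=\widetilde\Gamma\cap\Gamma(\calS)$, and the dual optimum $\xi^*$, having been built as a Fatou limit of a sequence in $\Gamma$, stays in $\Gamma$. In short: you should argue at the level of the supermartingale characterization~(2), not at the level of the decomposition~(1); your version of the argument never discharges the hypotheses needed to invoke~(1).
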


\begin{proof}
Let us check that $\calS$ meets with
the conditions in Theorem \ref{ThrmFollmerKramkov}.
Indeed, each
$X$ in $\calS$ is locally bounded from below
since, defining
\(
    \tau_{n}:=\inf\{t\geq{}0: X_{t}<-n\},
\)
\[
    X^{\tau_{n}}_{t}\geq X_{\tau_{n}^{-}} -(\Delta X_{\tau_{n}})^{-}
    {\bf 1}_{\tau_{n}<\infty}\geq -n-1,
\]
where $(x)^{-}=-x{\bf 1}_{x<0}$.
Condition (i) of Theorem \ref{ThrmFollmerKramkov}
is straightforward, while condition
(ii) follows from Theorem \ref{ClosenessResult}. Finally,
condition (iii) holds because the processes in
$\calS$ are already local martingales with respect to
$\bbp$ and hence $\bbp\in\calP(\calS)$ with
$A^{\calS}(\bbp)\equiv 0$.
By the Corollary \ref{ClosednessClasses},
we conclude that $\Gamma(\calS)$
is convex and closed under Fatou convergence on dense countable
sets.
On the other hand, $\widetilde\Gamma$
is  also convex and closed under Fatou convergence,
and thus so is the class
$\Gamma:=\widetilde\Gamma\cap \Gamma(\calS)$.
To check the second statement,
recall that the existence of the dual minimizer
$\xi^{*}_{y(z)}$ in Theorem \ref{MainThrm1} is
guaranteed from Proposition \ref{ProptyDual},
where it is seen that $\xi^{*}_{y(z)}$ is the
Fatou limit of a sequence in $\Gamma$ (see the
proof of Proposition \ref{ProptyDual}).
This suffices to conclude that
$\xi^{*}_{y(z)}\in\Gamma$ since $\Gamma$ is closed under
under Fatou convergence.
\end{proof}

In the rest of this section, we present some properties of the
elements in $\Gamma$ and of the dual optimum $\xi^{*}\in\Gamma$. In
particular, conditions on the ``parameters'' $(G,F,A)$ so that
$\xi\in\Gamma(\calS)$ is in $\widetilde\Gamma$ are established.
First, we note that without lose of generality, $A$ can be assumed
predictable.
\begin{lmma}\label{lmma1SectDualClass}
Let
\begin{equation}\label{ExpSemimrtgl}
    \xi:=\xi_{0}\calE(X- A)\in \Gamma(\calS).
\end{equation}
Then, there exist a predictable process $A^{p}\in\cal{V}^{+}$
and {a process} $\widehat{X}\in\calS$ such
    that $\xi=\xi_{0}\calE(\widehat{X}- A^{p})$.
\end{lmma}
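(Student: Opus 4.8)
The plan is to replace the increasing process $A$ (which is a priori only adapted and c\`adl\`ag) by its dual predictable projection $A^{p}$, and to absorb the resulting ``compensated difference'' into a new semimartingale $\widehat X\in\calS$. Concretely, write $X=\int_{0}^{\cdot}G\,dW+\int_{0}^{\cdot}\int_{\bbr_{0}}F\,\widetilde N$, and let $A^{p}$ be the (predictable, c\`adl\`ag, nondecreasing, null at $0$) compensator of $A$, so that $M:=A-A^{p}$ is a local martingale of finite variation, hence a purely discontinuous local martingale with jumps $\Delta M_{t}=\Delta A_{t}-\Delta A^{p}_{t}$. Since $\spX$ is driven only by $W$ and the Poisson measure $N$, and the filtration is the one generated by $W$ and $N$, the martingale representation theorem for this filtration (see e.g.\ \cite{Shiryaev:2003}) lets me write $M$ as a sum of a stochastic integral against $W$ and one against $\widetilde N$; because $M$ is purely discontinuous and of finite variation, the Brownian part vanishes and I get $M_{t}=\int_{0}^{t}\int_{\bbr_{0}}F'(s,z)\,\widetilde N(ds,dz)$ for some $F'\in G_{loc}(N)$. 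Then, using the elementary identity $X-A=(X+M)-A^{p}=\widehat X-A^{p}$ with $\widehat X:=X+M$, i.e.\ $\widehat X=\int_{0}^{\cdot}G\,dW+\int_{0}^{\cdot}\int_{\bbr_{0}}(F+F')\,\widetilde N$, I obtain $\calE(X-A)=\calE(\widehat X-A^{p})$, so $\xi=\xi_{0}\calE(\widehat X-A^{p})$ as desired.

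The remaining point is that $\widehat X$ genuinely lies in $\calS$, and the only nontrivial requirement there is the structural constraint $F+F'\geq -1$ on its jump coefficient. This is where I would use the nonnegativity of $\xi$: since $\xi=\xi_{0}\calE(\widehat X-A^{p})\geq 0$ and $\xi_{0}$ can be taken strictly positive on $\{\xi\not\equiv 0\}$ (on the complementary set there is nothing to prove), the standard positivity criterion for Dol\'eans--Dade exponentials forces $\Delta(\widehat X-A^{p})_{t}\geq -1$ for all $t$, a.s. Because $A^{p}$ is nondecreasing, $\Delta A^{p}_{t}\geq 0$, hence $\Delta\widehat X_{t}=\Delta(\widehat X-A^{p})_{t}+\Delta A^{p}_{t}\geq -1$; and $\Delta\widehat X_{t}=(F+F')(t,\Delta Z_{t})$ on the jump times of $Z$, so $F+F'\geq -1$ holds $N(dt,dz)$-a.e., which is exactly the membership condition for $\calS$. (The local-integrability conditions $G\in L^{2}_{loc}(W)$ and $F+F'\in G_{loc}(N)$, together with $F+F'(t,\cdot)=G(t)=0$ for $t\geq T$, are inherited automatically, the last one because $A$, $A^{p}$, and $X$ are all constant after $T$.)

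The main obstacle I anticipate is the representation of the finite-variation martingale $M=A-A^{p}$ purely as a $\widetilde N$-integral with the right integrability; one must be careful that $M$ a priori is only a local martingale, so the argument should be localized along a sequence of stopping times reducing $M$, $X$, and $\xi$ simultaneously, and one invokes the martingale representation theorem on each stopped interval, then patches the integrands together using their uniqueness. A secondary subtlety is purely notational bookkeeping around $\xi_{0}$: if $\bbp[\xi_{0}=0]>0$ one works separately on $\{\xi_{0}>0\}$ and $\{\xi_{0}=0\}$ (on the latter $\xi\equiv 0$ and one may take $\widehat X=0$, $A^{p}=0$), but since Definition \ref{DualClass} only requires $0\leq\xi_{0}\leq 1$ this causes no real difficulty. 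Everything else is routine manipulation of stochastic exponentials.
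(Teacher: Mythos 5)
Your overall plan coincides with the paper's: replace $A$ by its dual predictable projection $A^{p}$ and absorb the martingale $A-A^{p}$ into a new integrand via the martingale representation theorem. But there is a genuine gap at the very first step: you declare ``let $A^{p}$ be the compensator of $A$'' as if it were given, when in fact the compensator of an adapted increasing c\`adl\`ag process exists only when that process has \emph{locally integrable} variation, and this is not automatic for a general $A\in\calV^{+}$ --- the jumps of $A$ can a priori be too large to admit a localizing sequence. Establishing this is precisely what the first half of the paper's proof does: from $\xi=\xi_{0}\calE(X-A)\geq0$ one reads off $\Delta X_{t}-\Delta A_{t}\geq-1$, hence $\Delta A_{t}\leq 1+\Delta X_{t}=1+F(t,\Delta Z_{t})$, and then, localizing with $\tau_{n}:=\tau_{n}'\wedge\tau_{n}''$ where $\tau_{n}'$ reduces the $G_{loc}(N)$ integrability of $F$ and $\tau_{n}'':=\inf\{t:A_{t}>n\}$, one gets
$\bbe\,A_{\tau_{n}}\leq n+2+\bbe\int_{0}^{\tau_{n}}\int|F|\,{\bf 1}_{|F|>1}\,\nu(dz)\,ds<\infty$.
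Only after this can one speak of $A^{p}$. You do flag an ``integrability obstacle'' near the end, but you attribute it to the martingale representation of $A-A^{p}$ rather than to the existence of $A^{p}$ itself; the representation theorem applies to any local martingale in this filtration, so that step is actually harmless.

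A secondary slip: since $A=M+A^{p}$ with $M:=A-A^{p}$, the correct identity is $X-A=(X-M)-A^{p}$, so $\widehat{X}:=X-M$, not $X+M$; this is a sign typo and does not affect your (correct) argument that $\Delta\widehat{X}\geq-1$ follows from $\Delta(\widehat{X}-A^{p})\geq-1$ together with $\Delta A^{p}\geq0$. Your observation that the Brownian part of $A-A^{p}$ vanishes, because it is a purely discontinuous finite-variation local martingale, is correct and in fact slightly cleaner than the paper, which keeps a redundant $G'$ in the representation.
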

\begin{proof}
Let $X_{t}:= \int_{0}^{t}G(s)d W_{s}
+\int_{0}^{t}\int_{\bbr_{0}}F(s,z)\widetilde{N}(ds,dz)\in\calS$.
Since $F\in G_{loc}(N)$, there are stopping times
$\tau'_{n}\nearrow\infty$ such that
\[
    \bbe\int_{0}^{\tau'_{n}}\int_{\bbr}|F(s,z)|{\bf 1}_{|F|>1}\nu(dz)ds<\infty;
\]
cf. Theorem II.1.33 in \cite{Shiryaev:2003}.
Now, define
\[
    \tau''_{n}:=\inf\{t\geq{}0: A_{t}>n\},
\]
and $\tau_{n}:=\tau'_{n}\wedge\tau''_{n}$. Then,
\begin{align*}
    \bbe A^{\tau_{n}}_{\infty}
    &=\bbe [A_{\tau_{n}^{-}}]+
    \bbe \left[\Delta A_{\tau_{n}}\right]\leq n+1+
    \bbe\left[|F(\tau_{n},Z_{\tau_{n}})|\right]\\
    &\leq n+2+\bbe\int_{0}^{\tau_{n}}\int_{\bbr}|F(s,z)|{\bf 1}_{|F|>1}\nu(dz)ds<\infty,
\end{align*}
where we used that $\Delta X_{t}-\Delta A_{t}\geq{}-1$.
Therefore, $A$ is locally integrable, increasing, and thus, its
predictable compensator $A^{p}$ exists. Now, by the representation
theorem for local martingales (see Theorem III.4.34
\cite{Shiryaev:2003}), the local martingale $X':=A-A^{p}$
admit the representation
\[
    X'_{\cdot}:=
    \int_{0}^{\cdot}G'(s)d W_{s}
    +\int_{0}^{\cdot}\int_{\bbr_{0}}F'(s,z)\widetilde{N}(ds,dz).
\]
Finally,
\(
    \xi=\xi_{0}\calE(X- A)=
    \xi_{0}\calE(X-X'-A^{p}).
\)
The conclusion of the proposition
follows since $\widehat{X}:=X-X'$ is necessarily
in $\calS$.
\end{proof}

The following result gives necessary conditions
for a process $\xi\in\Gamma(\calS)$ to belong to
$\widetilde\Gamma$.
Recall that a predictable increasing process
$A$ can be uniquely decomposed as the sum of three
predictable increasing processes,
 \begin{equation}\label{DecmpIncr}
    A=A^{c}+A^{s}+A^{d},
 \end{equation}
 where $A^{c}$ is the {\it absolutely continuous} part,
 $A^{s}$ is the {\it singular continuous} part,
 and  $A^{d}_{t}=\sum_{s\leq{}t}\Delta A_{s}$
 is the jump part (cf. Theorem 19.61 in \cite{Hewitt}).
\begin{prop}\label{BudgetConstraintResult}
    Let  $\xi:=\xi_{0}\calE(X- A)\geq{}0$,
    where $\xi_{0}>0$,
    \[
        X_{t}:=
        \int_{0}^{t}G(s)d W_{s}
        +\int_{0}^{t}\int_{\bbr_{0}}F(s,z)\widetilde{N}(ds,dz)
        \in\calS,
    \]
    and $A$ is an increasing predictable process.
    Let $\tau$ be the ``sinking time''
    of the supermartingale $\xi$:
    \[
        \tau:=\sup_{n}\inf\{t:\xi_{t}<\frac{1}{n}\}
        =\inf\{t:\Delta X_{t}=-1\quad {\rm or}\quad
        \Delta A_{t}=1\}.
    \]
    Also, let $a_{t}=\frac{d A^{c}_{t}}{d  t}$.
    Then,
    \(
        \displaystyle\left\{\xi_{t}S_{t}\right\}_{t\leq{}T}
    \)
    is a supermartingale if and only if
    the following two conditions are satisfied:
    \begin{enumerate}
    \item[(i)]  There exist stopping times
    $\tau_{n}\nearrow\tau$ such that
    \begin{equation}
        \label{IntCondForHAdpt5}
        \bbe\int_{0}^{\tau_{n}}\int_{\bbr}|v(s,z) F(s,z)|\nu(dz)ds<\infty.
    \end{equation}
    \item[(ii)] For $\bbp$-a.e. $\omega\in\Omega$,
    \begin{equation}
        \label{IntCondForHAdpt4}
        h_{t}\leq a_{t},
    \end{equation}
    for almost every $t\in[0,\tau(\omega)]$, where
    \[
        h_{t}:=b_{t}+\sigma_{t} G(t)
        +\int_{\bbr}v(t,z) F(t,z)\nu(dz).
    \]
    \end{enumerate}
\end{prop}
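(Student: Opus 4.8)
The plan is to compute the canonical (semimartingale) decomposition of the product $\xi S$ via integration by parts and Yor's formula, and then read off the finite-variation (``drift'') part; the supermartingale property is then equivalent to that drift being nonincreasing, which will be shown to be exactly conditions (i) and (ii). First I would write $\xi = \xi_0\,\calE(X-A)$ and $S = S_0\,\calE(R)$, where $R_t := \int_0^t b_s\,ds + \int_0^t\sigma_s\,dW_s + \int_0^t\int_{\bbr_0} v(s,z)\,\widetilde N(ds,dz)$ is the return process from \eqref{EqForStock}. By Yor's formula, $\calE(X-A)\calE(R) = \calE\big( (X-A) + R + [X-A,R]\big)$, and since $A$ is predictable of finite variation and continuous in its continuous part (it contributes to the bracket only through $\Delta A\,\Delta R$), the bracket term is $[X,R]^c + \sum_{s\le\cdot}\Delta X_s\Delta R_s - \sum_{s\le\cdot}\Delta A_s\Delta R_s$. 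Here $[X,R]^c_t = \int_0^t \sigma_s G(s)\,ds$ and the sum of jump products is the purely-discontinuous bracket of the two stochastic integrals against $\widetilde N$, whose predictable compensator is $\int_0^t\int_{\bbr_0} v(s,z)F(s,z)\,\nu(dz)\,ds$. Collecting the predictable finite-variation terms, the drift of $\xi S$ (before stopping at $\tau$) is governed by $\int_0^t \big(b_s + \sigma_s G(s) + \int_{\bbr_0} v(s,z)F(s,z)\,\nu(dz)\big)\,ds - \int_0^t dA_s + (\text{jump cross-terms with }A)$, i.e. by $h_t\,dt - dA_t$ up to the jump corrections; the nonnegativity constraints $F\ge -1$ and $\Delta A\le 1$ (built into $\calS$ and the sinking time $\tau$) guarantee $\xi S\ge 0$ and that the decomposition is well posed up to $\tau$.

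The second step is to handle integrability, which is where condition (i) enters. For the heuristic drift computation to produce a genuine special semimartingale decomposition on $[0,\tau_n]$ one needs the compensator $\int_0^{\tau_n}\int_{\bbr_0}|v(s,z)F(s,z)|\,\nu(dz)\,ds$ to be integrable — this is precisely \eqref{IntCondForHAdpt5}. I would argue the necessity of (i) by noting that if $\xi S$ is a supermartingale then it is in particular a special semimartingale on each compact interval up to $\tau$, forcing local integrability of the compensated jump-product term, which is \eqref{IntCondForHAdpt5}; conversely, under (i) the stochastic integral $\int\int v F\,\widetilde N$ has a compensator and all the pieces of the Yor expansion are bona fide, so $\xi S$ is a local martingale plus the predictable drift $\int (h_s\,ds - dA_s)$ plus the $A$-jump corrections. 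The third step is to show that, given (i), the supermartingale property is equivalent to this predictable drift being nonincreasing $\bbp\otimes dt$-a.e. on $[0,\tau]$. Decomposing $A = A^c + A^s + A^d$ as in \eqref{DecmpIncr}: the singular-continuous and jump parts $A^s, A^d$ only help (they are increasing and subtract from the drift, and the $\Delta A_s\Delta R_s$ correction is controlled because on $\{\Delta A_s>0\}$ one has $\Delta X_s - \Delta A_s \ge -1$, $\Delta A_s\le 1$), so the binding constraint is on the absolutely continuous part: $h_t\,dt - a_t\,dt \le 0$, i.e. $h_t\le a_t$, which is \eqref{IntCondForHAdpt4}. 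For necessity one localizes and uses that a predictable finite-variation process which is the drift of a supermartingale must be nonincreasing, then isolates the $dt$-density to get $h_t\le a_t$ a.e.; for sufficiency one checks $h_t\,dt - dA_t \le 0$ as a signed measure and invokes that $\calE$ of (local martingale $+$ nonincreasing predictable FV part) that stays nonnegative is a supermartingale.

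The main obstacle I anticipate is the careful bookkeeping of the jump terms and the stopping at the sinking time $\tau$: one must verify that $\tau = \inf\{t:\Delta X_t=-1 \text{ or } \Delta A_t = 1\}$ really is where $\xi$ hits zero and stays there, that $\xi S$ is automatically a (nonnegative) supermartingale on $[\tau,T]$ trivially (both factors vanish), and that on $[0,\tau)$ the Yor/integration-by-parts expansion is valid without spurious integrability failures — in particular that the jump cross-terms between $R$ and $A$, and between $R$ and $X$, combine correctly and that one may pass from the local statement to the a.e.\ pointwise inequalities \eqref{IntCondForHAdpt5}–\eqref{IntCondForHAdpt4} using a measurable-selection/localization argument. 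A secondary subtlety is that $S$ itself need only be a local martingale under some $\bbq\in\calM$, not under $\bbp$, so $R$ has a genuine $\bbp$-drift $b_t\,dt$; this is already accounted for in the definition of $h_t$, but one should make sure no further integrability hypothesis on $b$ beyond $b\in L^1_{loc}$ is silently needed, which is fine since $b$ contributes only to the (locally integrable) predictable drift.
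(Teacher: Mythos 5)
Your proposal is correct and follows essentially the same route as the paper's own proof: both compute the predictable finite-variation part of $\xi S$ (you via Yor's formula $\calE(X-A)\calE(R)=\calE((X-A)+R+[X-A,R])$, the paper via direct integration by parts on $\xi S$; the two are computationally equivalent), use the Doob--Meyer / special-semimartingale property to deduce the local integrability in (i), and then equate the supermartingale property with the drift $\int_0^t \xi_{s^-}S_{s^-}\,(h_s\,ds-dA_s)$ being nonincreasing, which on $[0,\tau)$ (where $\xi_{s^-}S_{s^-}>0$) forces $h_t\le a_t$ a.e., the singular-continuous and jump parts of $A$ only helping. One small imprecision worth fixing: the cross-terms $\Delta A_s\,\Delta R_s$ that you propose to ``control'' in fact vanish identically, because $A$ is predictable (hence jumps only at predictable times) while $R$ and $X$ jump only at the totally inaccessible jump times of the Poisson random measure, and these two sets of times are a.s.\ disjoint; this is precisely what the paper's phrase ``integration by parts and the predictability of $A$ yield\dots'' is invoking, and it removes the need for your ad hoc bound via $\Delta X_s-\Delta A_s\ge-1$, $\Delta A_s\le1$.
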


\begin{proof}
    Recall that $\xi$ and $S$ satisfies the SDE's
    \begin{align*}
        d\xi_{t}&= \xi_{t^{-}} \left(G(t)\, dW_{t}+
        \int_{\bbr_{0}}F(t,z) \widetilde{N}(dt,dz)
        -d A_{t}\right),\\
        d {S}_{t}&=
        {S}_{t^{-}}\left(b_{t}\,dt+
        \sigma_{t}\, d W_{t}+
        \int_{\bbr_{0}} v(t,z)\, \widetilde{N}(dt,dz)
        \right).
    \end{align*}
    Integration by parts and the predictability of $A$
    yield that
        \begin{align}
        \xi_{t}{S}_{t}&=\text{local martingale}+
        \int_{0}^{t} b_{s}\,\xi_{s^{-}}{S}_{s^{-}}ds
        +\int_{0}^{t}\sigma_{s} G(s)\,\xi_{s^{-}}{S}_{s^{-}}ds\nonumber\\
        &~~
        -\int_{0}^{t} \xi_{s^{-}}{S}_{s^{-}}\,d A_{s}
        +\int_{0}^{t}\int_{\bbr_{0}} v(s,z) F(s,z) \xi_{s^{-}}{S}_{s^{-}}N(ds,dz).
        \label{IntgrByPart}
    \end{align}
    Suppose that $\{\xi_{t}{S}_{t}\}_{t\geq{}0}$ is a nonnegative supermartingale.
    Then, the integral $\int_{0}^{t}\int_{\bbr}
    v(s,z) F(s,z) \xi_{s^{-}}{S}_{s^{-}}N(ds,dz)$
    must have locally integrable variation in light of the
    Doob-Meyer decomposition for supermartingale
    (see e.g. Theorem III.13 in \cite{Protter}).
    Therefore, there exist stopping times
    $\tau^{1}_{n}\nearrow\infty$ such that
    \[
        \bbe\int_{0}^{\tau^{1}_{n}}\int_{\bbr} |v(s,z) F(s,z) \xi_{s^{-}}{S}_{s^{-}}|\nu(dz)ds<\infty.
    \]
    Then, (i) is satisfied with
    $\tau_{n}:=\tau^{1}_{n}\wedge \tau^{2}_{n}\wedge\tau^{3}_{n}$,
    where $\tau^{2}_{n}:=\inf\{t:\xi_{t}<\frac{1}{n}\}$ and
    $\tau^{3}_{n}:=\inf\{t:\widetilde{S}_{t}<\frac{1}{n}\}$.
    Next, we can write (\ref{IntgrByPart}) as
    \[
        \xi_{t}{S}_{t}=\text{local martingale}
        -\int_{0}^{t} \xi_{s^{-}}{S}_{s^{-}}(d A_{s}-h_{s}ds).
    \]
    By the Doob-Meyer representation for supermartingales and the uniqueness of
    the canonical decomposition for special semimartingales,
    the last integral must be increasing. Then,
    $a_{t}\geq{}h_{t}$ for $t\leq{}\tau$
    since $\xi_{t^{-}}>0$ and $\xi_{t}=0$ for $t\geq{}\tau$
    (see I.4.61 in \cite{Shiryaev:2003}).

    We now turn to the sufficiency of conditions (i)-(ii).
    Since
    $\{\xi_{t^{-}}{S}_{t^{-}}\}_{t\geq{}0}$ is locally
    bounded,
    \[
        \int_{0}^{t}\int_{\bbr} |v(s,z) F(s,z)| \xi_{s^{-}}
        {S}_{s^{-}}
        {\bf 1}_{s\leq\tau_{n}}\nu(dz)ds
    \]
    is locally integrable. Then, from (\ref{IntgrByPart}),
    we can write
    \[
        \xi_{t\wedge\tau_{n}}{S}_{t\wedge\tau_{n}}
        = \text{local martingale}
        -\int_{0}^{t} \xi_{s^{-}}{S}_{s^{-}}
        {\bf 1}_{s\leq\tau_{n}}(d A_{s}-h_{s}ds).
    \]
    Condition (ii) implies that $\{\xi_{t\wedge\tau_{n}}{S}_{t\wedge\tau_{n}}\}$
    is a supermartingale, and by Fatou,
    $\{\xi_{t\wedge\tau}{S}_{t\wedge\tau}\}_{t\geq{}0}$
    will be a supermartingale.
    This concludes the prove since $\xi_{t}=0$ for $t\geq{}\tau$,
    and thus, $\xi_{t\wedge\tau}{S}_{t\wedge\tau}=\xi_{t}{S}_{t}$,
    for all $t\geq{}0$.
    \end{proof}

The following result gives sufficient
and necessary conditions for $\xi\in\Gamma(\calS)$
to belong to $\widetilde{\Gamma}$.
Its proof is similar to that of Proposition \ref{BudgetConstraintResult}.
\begin{prop}\label{ConditionsLocalSuper}
    Under the setting and notation of
    Proposition \ref{BudgetConstraintResult},
    $\xi\in\Gamma(\calS)$ belongs
    to $\widetilde{\Gamma}$ if and only if
    condition (i) in Proposition \ref{BudgetConstraintResult}
    holds and, for any locally bounded admissible
    trading strategies $\beta$,
    \begin{equation}
        \label{IntCondForHAdpt4b}
        \bbp\left[\left\{\omega:
        h_{t}\beta_{t}\leq
        a_{t},\text{ for
        a.e. }t\in[0,\tau(\omega)]\right\}\right]=1.
    \end{equation}
\end{prop}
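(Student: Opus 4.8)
The plan is to mimic the structure of the proof of Proposition~\ref{BudgetConstraintResult}, but now tracking the product $\xi_{t}V^{\beta}_{t}$ for an arbitrary locally bounded admissible $\beta$ rather than $\xi_{t}S_{t}$. Recall that $\xi=\xi_{0}\calE(X-A)$ satisfies $d\xi_{t}=\xi_{t^{-}}(G(t)\,dW_{t}+\int_{\bbr_{0}}F(t,z)\widetilde{N}(dt,dz)-dA_{t})$, while by (\ref{EqDiscValProcReal}) the admissible value process obeys $dV^{\beta}_{t}=V^{\beta}_{t^{-}}\beta_{t}(b_{t}\,dt+\sigma_{t}\,dW_{t}+\int_{\bbr_{0}}v(t,z)\widetilde{N}(dt,dz))$. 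First I would apply integration by parts together with the predictability of $A$ to obtain, exactly as in (\ref{IntgrByPart}),
\[
    \xi_{t}V^{\beta}_{t}=\text{local martingale}
    -\int_{0}^{t}\xi_{s^{-}}V^{\beta}_{s^{-}}\bigl(dA_{s}-h_{s}\beta_{s}\,ds\bigr)
    +\int_{0}^{t}\int_{\bbr_{0}}v(s,z)F(s,z)\beta_{s}\,\xi_{s^{-}}V^{\beta}_{s^{-}}N(ds,dz),
\]
where $h_{t}=b_{t}+\sigma_{t}G(t)+\int_{\bbr}v(t,z)F(t,z)\nu(dz)$ is the same process appearing in Proposition~\ref{BudgetConstraintResult}. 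Here the admissibility of $\beta$ (via Proposition~\ref{CndAdmissibility}, $\beta_{t}v(t,\Delta Z_{t})\geq -1$) guarantees $\xi_{t}V^{\beta}_{t}\geq 0$, so the Doob--Meyer machinery applies.

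For the \emph{necessity} direction, assume $\{\xi_{t}V^{\beta}_{t}\}_{t\leq T}$ is a supermartingale for every locally bounded admissible $\beta$. Taking $\beta\equiv 1$ (which is admissible on the set where $v>-1$, and in general one truncates) recovers condition (i) of Proposition~\ref{BudgetConstraintResult} as before, since the $N(ds,dz)$ integral must have locally integrable variation; the localizing sequence $\tau_{n}$ is built from $\tau^{1}_{n}$, the sinking time cutoffs for $\xi$, and the cutoffs for $V^{\beta}$. Once (i) holds, the $N$-integral can be compensated and absorbed, so $\xi_{t}V^{\beta}_{t}$ is a special semimartingale with canonical decomposition $\text{local martingale}-\int_{0}^{t}\xi_{s^{-}}V^{\beta}_{s^{-}}(dA_{s}-h_{s}\beta_{s}\,ds)$; by uniqueness of that decomposition and the Doob--Meyer theorem the drift term must be nondecreasing, which on $\{\xi_{s^{-}}V^{\beta}_{s^{-}}>0\}=\{s\leq\tau\}$ forces $dA_{s}\geq h_{s}\beta_{s}\,ds$, i.e. $h_{t}\beta_{t}\leq a_{t}$ for a.e.\ $t\leq\tau$, $\bbp$-a.s. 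Since this holds for every admissible $\beta$, (\ref{IntCondForHAdpt4b}) follows. Conversely, assuming (i) and (\ref{IntCondForHAdpt4b}): local boundedness of $\xi_{t^{-}}V^{\beta}_{t^{-}}$ together with (i) makes the compensated $N$-integral a genuine local martingale up to $\tau_{n}$, so on each stochastic interval $[0,\tau_{n}]$ we may write $\xi_{t\wedge\tau_{n}}V^{\beta}_{t\wedge\tau_{n}}=\text{local martingale}-\int_{0}^{t}\xi_{s^{-}}V^{\beta}_{s^{-}}{\bf 1}_{s\leq\tau_{n}}(dA_{s}-h_{s}\beta_{s}\,ds)$; (\ref{IntCondForHAdpt4b}) makes the last integral nondecreasing, so $\xi_{\cdot\wedge\tau_{n}}V^{\beta}_{\cdot\wedge\tau_{n}}$ is a nonnegative local supermartingale, hence (being bounded in $L^{1}$ by the supermartingale property of a nonnegative local martingale dominating it, or simply by Fatou after a further localization) a true supermartingale, and letting $n\to\infty$ with Fatou's lemma and $\xi_{t}=0$ on $\{t\geq\tau\}$ gives the supermartingale property of $\xi V^{\beta}$ on $[0,T]$. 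Finally, recalling Definition~\ref{DualClass}, $\xi\in\widetilde{\Gamma}$ means precisely that $\xi(0)\leq 1$ (which holds since $\xi\in\Gamma(\calS)$ with $\xi_{0}\leq 1$) \emph{and} $\xi V^{\beta}$ is a supermartingale for every locally bounded admissible $\beta$, so the above equivalence is exactly the claim.

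The main obstacle I anticipate is handling the quantifier over \emph{all} admissible $\beta$ cleanly in the necessity direction — in particular making sure that the exceptional $\bbp$-null set in (\ref{IntCondForHAdpt4b}) can be chosen uniformly, rather than depending on $\beta$. The standard remedy is to note that it suffices to test against a countable family of $\beta$'s: since $h_{t}\beta_{t}\leq a_{t}$ for all admissible $\beta$ is equivalent to $h_{t}^{+}\cdot(\text{largest admissible value of }\beta_{t})\leq a_{t}$ and $h_{t}^{-}\cdot(\ldots)\leq a_{t}$, one reduces to extremal strategies of the form $\beta_{t}=c\,{\bf 1}_{B}(t)$ with $c$ rational and $B$ predictable, and then a monotone-class / separability argument collapses the uncountable intersection of full-measure sets to a single one. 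A secondary technical point is the justification that a nonnegative local supermartingale which is bounded in $L^{1}$ is a true supermartingale on $[0,T]$; this is routine via Fatou but should be stated. Everything else parallels Proposition~\ref{BudgetConstraintResult} essentially verbatim, which is why the paper is content to say ``its proof is similar.''
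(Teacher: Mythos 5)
Your proof takes the approach the paper intends: the paper only remarks that the proof ``is similar to that of Proposition~\ref{BudgetConstraintResult}'', and you have carried out exactly that argument with $V^{\beta}$ in place of $S$, using integration by parts, the Doob--Meyer decomposition, uniqueness of the canonical decomposition of a special semimartingale, and a Fatou passage from the localized to the unlocalized supermartingale. Two small corrections are in order. First, your displayed integration-by-parts formula double-counts the jump compensator: integration by parts produces the $ds$-drift $\xi_{s^{-}}V^{\beta}_{s^{-}}\bigl(b_{s}+\sigma_{s}G(s)\bigr)\beta_{s}\,ds$ together with the $N(ds,dz)$-integral, not $\xi_{s^{-}}V^{\beta}_{s^{-}}h_{s}\beta_{s}\,ds$ together with the $N$-integral, because $h_{s}$ already contains $\int_{\bbr}v(s,z)F(s,z)\nu(dz)$, which is precisely the compensator of that $N$-integral; your subsequent sentences (where the decomposition $-\int(dA_{s}-h_{s}\beta_{s}\,ds)$ appears only \emph{after} the $N$-integral has been compensated and absorbed) show you understand this, so it is a slip in the display rather than a conceptual error, but as written the formula is not ``exactly as in (\ref{IntgrByPart})''. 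Second, the obstacle you anticipate---choosing the exceptional null set in~(\ref{IntCondForHAdpt4b}) uniformly over all admissible $\beta$---does not actually arise: both sides of the stated equivalence are quantified over $\beta$ (membership in $\widetilde\Gamma$ asks for the supermartingale property for each admissible $\beta$, and (\ref{IntCondForHAdpt4b}) is asserted for each admissible $\beta$), so each direction may be verified strategy-by-strategy and the null set in~(\ref{IntCondForHAdpt4b}) is permitted to depend on $\beta$. The countable-reduction/monotone-class device you sketch is therefore unnecessary for this proposition.
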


The previous result can actually be made more explicit
under additional information on the structure of the
jumps exhibited by the stock price process.
We consider two cases: {when the jumps come from the
superposition of shot-noise Poisson processes, 
and when the random field $v$ exhibit
a multiplicative structure}.
Let us first extend Proposition \ref{CndAdmissibility} {in these two cases}.
\begin{prop}\label{TwoCases}
    (i) Suppose $\nu$ is atomic with finitely many atoms
    $\{z_{i}\}_{i=1}^{k}$. Then,
    a predictable locally bounded strategy $\beta$ is admissible
    if and only if $\bbp\times dt$-a.e.
    \[
        -\frac{1}{\max_{i} v(t,z_{i})\vee 0}\leq\beta_{t}\leq
        -\frac{1}{\min_{i} v(t,z_{i})\wedge 0}.
    \]
    (ii) Suppose that
        \(
        v(t,z)=\zeta_{t}\vartheta(z),
        \)
    for a predictable {locally bounded} process $\zeta$ {such that 
    $\bbp\times dt$-a.e. $\zeta_{t}(\omega)\neq{}0$ and $\zeta_{t}^{-1}$ is locally bounded}, and a deterministic function $\vartheta$ {such that $\nu(\{z:\vartheta(z)=0\})=0$}. Then, a predictable locally bounded strategy $\beta$ is admissible
    if and only if $\bbp\times dt$-a.e.
    \[
        -\frac{1}{{\bar{\vartheta}}\vee 0}\leq\beta_{t}\,\zeta_{t}\leq
        -\frac{1}{{\underline{\vartheta}}\wedge 0},
    \]
    where {$\bar{\vartheta}:=\sup\{ \vartheta(z): z\in{\rm supp}(\nu)\}$} and 
     {$\underline{\vartheta}:=\inf\{ \vartheta(z): z\in{\rm supp}(\nu)\}$}.
\end{prop}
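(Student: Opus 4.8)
The plan is to reduce both parts to Proposition \ref{CndAdmissibility} and then carry out an elementary, pointwise‑in‑$(\omega,t)$ analysis of the scalar inequality $\beta_{t}\,v(t,z)\geq -1$ as $z$ ranges over the relevant set. The first step is a convenient restatement of admissibility: the set $B:=\{(\omega,t,z)\in\Omega\times[0,T]\times\bbr_{0}:\beta_{t}(\omega)\,v(t,z)(\omega)<-1\}$ is predictable, so $\bbe\,N(B)=\bbe\int_{0}^{T}\int_{\bbr_{0}}{\bf 1}_{B}\,\nu(dz)\,dt$, and since $N(B)$ is nonnegative, $N(B)=0$ a.s.\ if and only if $(\bbp\times dt\times\nu)(B)=0$. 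Because the jumps of $Z$ on $[0,T]$ are exactly the atoms of $N$, the condition ``$N(B)=0$ a.s.'' is the one appearing in Proposition \ref{CndAdmissibility}; hence $\beta$ is admissible if and only if, for $(\bbp\times dt)$‑a.e.\ $(\omega,t)$, one has $\beta_{t}(\omega)\,v(t,z)(\omega)\geq -1$ for $\nu$‑a.e.\ $z$. It then suffices to describe, for fixed $(\omega,t)$, the set of real numbers $b$ for which $b\,v(t,z)\geq -1$ holds for $\nu$‑a.e.\ $z$, in the two special cases.

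For (i), write $v_{i}:=v(t,z_{i})(\omega)$. Since $\nu$ charges only the atoms, ``for $\nu$‑a.e.\ $z$'' becomes ``for every $i=1,\dots,k$'', so the condition on $b$ is the conjunction of the $k$ inequalities $b\,v_{i}\geq -1$, each of which reads $b\geq -1/v_{i}$ if $v_{i}>0$, $b\leq -1/v_{i}$ if $v_{i}<0$, and is vacuous if $v_{i}=0$. As $x\mapsto -1/x$ is increasing on $(0,\infty)$, the conjunction of the lower bounds is $b\geq -1/(\max_{i}v_{i}\vee 0)$ — read as no constraint ($-\infty$) when all $v_{i}\leq 0$ — and symmetrically the conjunction of the upper bounds is $b\leq -1/(\min_{i}v_{i}\wedge 0)$, equal to $+\infty$ when all $v_{i}\geq 0$. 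Since $\max_{i}v(\cdot,z_{i})$ and $\min_{i}v(\cdot,z_{i})$ are predictable, this together with the reformulation above is exactly (i).

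For (ii), set $\gamma_{t}:=\beta_{t}\zeta_{t}$. Because $\zeta$ and $\zeta^{-1}$ are predictable and locally bounded and $\zeta\neq 0$ $\bbp\times dt$‑a.e., $\beta$ is a (predictable, locally bounded) trading strategy if and only if $\gamma$ is one, and $\beta_{t}=\gamma_{t}\zeta_{t}^{-1}$ holds $\bbp\times dt$‑a.e.; thus admissibility of $\beta$ is a property of $\gamma$. The condition of the reformulation becomes $\gamma_{t}\,\vartheta(z)\geq -1$ for $\nu$‑a.e.\ $z$, which, since $\nu(\{\vartheta=0\})=0$, splits into $\gamma_{t}\geq -1/\vartheta(z)$ for $\nu$‑a.e.\ $z$ with $\vartheta(z)>0$ and $\gamma_{t}\leq -1/\vartheta(z)$ for $\nu$‑a.e.\ $z$ with $\vartheta(z)<0$. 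The implication ``$\Leftarrow$'' holds because $\nu$‑a.e.\ $z$ lies in ${\rm supp}(\nu)$, where $\underline{\vartheta}\leq\vartheta(z)\leq\bar{\vartheta}$, so $-1/(\bar{\vartheta}\vee 0)\leq\gamma_{t}\leq -1/(\underline{\vartheta}\wedge 0)$ forces $\gamma_{t}\vartheta(z)\geq -1$ on ${\rm supp}(\nu)$ upon checking the three sign cases of $\vartheta(z)$ and using monotonicity of $x\mapsto -1/x$. For ``$\Rightarrow$'', if $\gamma_{t}<0$ the positive‑$\vartheta$ constraints say $\vartheta(z)\leq -1/\gamma_{t}$ for $\nu$‑a.e.\ $z$, i.e.\ $\esssup_{\nu}\vartheta\leq -1/\gamma_{t}$, hence $\gamma_{t}\geq -1/(\bar{\vartheta}\vee 0)$ once $\esssup_{\nu}\vartheta$ is identified with $\bar{\vartheta}$; the case $\gamma_{t}\geq 0$ and the negative‑$\vartheta$ side (giving $\gamma_{t}\leq -1/(\underline{\vartheta}\wedge 0)$ via $\essinf_{\nu}\vartheta$) are symmetric. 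Rewriting $\gamma_{t}=\beta_{t}\zeta_{t}$ yields the stated bounds.

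The sign bookkeeping in the last two steps is routine; the one genuine subtlety, needed only for the forward implication of (ii), is the identification of $\esssup_{\nu}\vartheta$ (resp.\ $\essinf_{\nu}\vartheta$) with $\bar{\vartheta}=\sup_{z\in{\rm supp}(\nu)}\vartheta(z)$ (resp.\ $\underline{\vartheta}$). One always has $\esssup_{\nu}\vartheta\leq\bar{\vartheta}$, since $\nu$ does not charge the complement of its support; the reverse inequality uses a mild regularity of $\vartheta$ — e.g.\ continuity on ${\rm supp}(\nu)$, so that every relative neighbourhood of a support point, which carries positive $\nu$‑mass, also carries values of $\vartheta$ arbitrarily close to $\bar{\vartheta}$ — and in its absence $\bar{\vartheta}$ and $\underline{\vartheta}$ should simply be read as the $\nu$‑essential supremum and infimum of $\vartheta$. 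The other point requiring (easy) care is the uniform treatment of the degenerate configurations in which all relevant jump coefficients have a single sign, so that one of the two bounds is vacuous; the $\pm\infty$ conventions above dispose of these.
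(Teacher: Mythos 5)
Your proof is correct, and it takes a genuinely (if mildly) different route from the paper's. Both arguments start from Proposition~\ref{CndAdmissibility} and ultimately reduce the question to a pointwise-in-$(\omega,t)$ characterization of the admissible $\beta_{t}$, and both do so via the compensation identity for the Poisson random measure $N$. The difference lies in how that identity is deployed. The paper tests the admissibility condition against indicator processes $\chi_{A\times(s,t]}(\omega,u)\chi_{C}(\Delta Z_{u})$ with $C$ a closed subset of $\bbr_{0}$, and after taking expectations obtains, for each fixed $C$, the ``averaged'' constraint $-1\leq\beta_{t}\,\nu(C)^{-1}\int_{C}v(t,z)\,\nu(dz)$ a.e.; it then lets $C_{n},C_{n}'$ shrink toward the extremizers of $\vartheta$ to recover the bounds in terms of $\bar\vartheta,\underline\vartheta$. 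You instead feed the entire predictable set $B=\{\beta_{t}v(t,z)<-1\}$ directly into $\bbe N(B)=\bbe\int\int{\bf 1}_{B}\,\nu(dz)\,dt$, getting immediately, by Fubini, the $\nu$-a.e.\ pointwise condition $\beta_{t}v(t,z)\geq-1$ for $(\bbp\times dt)$-a.e.\ $(\omega,t)$, and then carry out an elementary scalar analysis. Your route is shorter and avoids the approximation by closed sets; its only price is the explicit invocation of $\esssup_{\nu}\vartheta=\bar\vartheta$ (and the analogous statement for the infimum), which you correctly flag as requiring a mild regularity of $\vartheta$ on ${\rm supp}(\nu)$ (continuity suffices) or, failing that, a reinterpretation of $\bar\vartheta,\underline\vartheta$ as $\nu$-essential bounds. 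It is worth noting that the paper's own identity $\inf_{z\in U}\vartheta(z)=\inf_{C}\nu(C)^{-1}\int_{C}\vartheta\,d\nu$ relies on exactly the same implicit hypothesis, so your treatment is if anything more transparent on this point; the two sufficiency arguments are essentially the same sign-bookkeeping in disguise.
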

\begin{proof}
   {From Proposition \ref{CndAdmissibility}}, recall that $\bbp$-a.s.
    \[
        \beta_{t} v(t,\Delta  Z_{t})\geq{}-1,
    \]
    for a.e. $t\leq{}T$. Then, for any closed set $C\subset\bbr_{0}$,
    $0\leq s< t$, and $A\in\calF_{s}$,
    \[
        \sum_{s<u\leq{}t}\chi_{A}(\omega)\,\chi_{C}(\Delta Z_{u})\left\{\beta_{u}v(u,\Delta Z_{u})
        +1\right\}\geq{}0.
    \]
    Taking expectation, we get
    \[
        \bbe\int_{s}^{t}\chi_{A}\int_{C}
        \left\{\beta_{u}v(u,z)
        +1\right\}\nu(dz)du\geq{}0.
    \]
    Since such processes $H_{u}(\omega):=
    \chi_{A\times(s,t]}(\omega,u)$ generate the class of predictable
    processes, we conclude that $\bbp\times dt$-a.e.
    \[
        -1\leq\beta_{t}\;\frac{\int_{C} v(t,z)\nu(dz)}{\nu(C)}.
    \]
    Let us prove (ii) (the proof of (i) is similar).
    Notice that
    \[
        \inf_{z\in U} \vartheta(z)=\inf_{C\subset\bbr_{0}} \frac{\int_{C} \vartheta(z)\nu(dz)}{\nu(C)}\leq
        \sup_{C\subset\bbr_{0}} \frac{\int_{C} \vartheta(z)\nu(dz)}{\nu(C)}
        = \sup_{z\in{}U} \vartheta(z),
    \]
    where $U$ is the support of $\nu$.
    Suppose that $\inf_{z\in U} \vartheta(z)<0<\sup_{z\in U} \vartheta(z)$.
    Then, by considering closed sets $C_{n},C_{n}^{'}\subset \bbr_{0}$ such that
    \[
        \frac{\int_{C_{n}} \vartheta(z)\nu(dz)}{\nu(C_{n})}
        \nearrow \sup_{z} \vartheta(z),\quad
        \text{ and }\quad
        \frac{\int_{C_{n}^{'}} \vartheta(z)\nu(dz)}{\nu(C_{n}^{'})}
        \searrow \inf_{z} \vartheta(z),
    \]
    as $n\rightarrow\infty$, we can prove the necessity.
    The other two cases (namely,
    $\inf_{z} \vartheta(z)\geq 0$ or
    $0\geq\sup_{z} \vartheta(z)$) are
    proved in a similar way. Sufficiency follows since,
    $\bbp$-a.s.,
    \begin{align*}
        \{t\leq{}T:\beta_{t}\zeta_{t}v\left(\Delta Z_{t}\right)<-1\}
        &\subset
        \{t\leq{}T:\beta_{t}\zeta_{t}\sup_{z\in U} v(z)<-1\}\; \cup\\
        &\quad\{t\leq{}T:\beta_{t}\zeta_{t}\inf_{z\in{}U} v(z)<-1\}.
    \end{align*}
\end{proof}

\begin{exmpl}\label{Exmpl}
 {\rm   It is worth pointing out some consequences:
    \begin{enumerate}
    \item[(a)]
    In the time homogeneous case, where $v(t,z)=z$,
    the extreme points of the support of $\nu$
    (or what accounts to the same, the infimum and supremum of
    all possible jump sizes) determine completely
    the admissible strategies. For instance, if the
    L\'evy process can exhibit arbitrarily large
    or arbitrarily close to $-1$ jump sizes, then
    \[
        0\leq{}\beta_{t}\leq{}1;
    \]
    a constraint that can be interpreted
    as absence of shortselling and bank borrowing
    (this fact was already pointed out by Hurd \cite{Hurd:2004}).

    \item[(b)] In the case that
    ${\underline{\vartheta}\geq{}0}$, the admissibility condition
    takes the form
    \(
        -1/{\bar{\vartheta}}\leq\beta_{t}\,\zeta_{t}.
    \)
    If in addition $\zeta_{\cdot}<{}0$ (such that the stock prices
    exhibits only downward sudden movements), then
    \(
        -{1/(\bar{\vartheta}\zeta_{t})}\geq\beta_{t},
    \)
    and $\beta_{\cdot}\equiv -c$, with $c>0$ arbitrary, is admissible.
    In particular, {from Proposition \ref{ConditionsLocalSuper}}, if $\xi\in\Gamma(\calS)$ belongs to
    $\widetilde{\Gamma}$, then a.s.
    \(
        h_{t}\beta_{t}\leq{}a_{t}, \;\text{ for a.e. t }
        \leq\tau.
    \)
    {This means that $\xi\in \Gamma(\calS)\cap \widetilde\Gamma$ if and only if 
    condition (i) in Proposition \ref{BudgetConstraintResult}
    holds and 
    $\bbp-$a.s. $h_{t}\geq{}0$, for a.e.
     $t\leq{}\tau$}. 
     For a general $\zeta$ and still assuming that ${\underline{\vartheta}\geq{}0}$, {it follows that} $\beta$ admissible
    and $\xi\in\Gamma(\calS)\cap\widetilde{\Gamma}$
     satisfy that $\bbp$-a.s. 
    \[
         -\frac{1}{{\bar{\vartheta}}(\zeta_{t}\vee 0)}\leq
         \beta_{t}\leq
         -\frac{1}{{\bar{\vartheta}}(\zeta_{t}\wedge 0)},
         \quad
         {\rm and}\quad
         h_{t}\zeta_{t}^{-1}{\bf 1}_{\{t\leq\tau\}}\leq 0,
    \]
    for a.e. $t\geq{}0$.\qed
    \end{enumerate}
    }
\end{exmpl}

We now extend Proposition \ref{ConditionsLocalSuper}
in the two cases introduced in Proposition \ref{TwoCases}.
{Its proof follows from Propositions \ref{ConditionsLocalSuper} and \ref{TwoCases}.}
\begin{prop}\label{TwoCasesb}
{Suppose that either (i) or (ii) in Proposition \ref{TwoCases} are satisfied, in which case, define:
\[
        \widehat{h}_{t}:=
        \left\{\begin{array}{ll}
        -\frac{h_{t}}{\max_{i} v(t,z_{i})\vee 0}
        {\bf 1}_{\{h_{t}<0\}}
        -\frac{h_{t}}{\min_{i} v(t,z_{i})\wedge 0}{\bf 1}_{\{h_{t}>0\}},
        &\text{if (i) holds true},\\
        \\
        -\frac{h_{t}\zeta_{t}^{-1}}{\bar{ \vartheta}\vee 0}
        {\bf 1}_{\{h_{t}\zeta_{t}^{-1}<0\}}
        -\frac{h_{t}\zeta_{t}^{-1}}{\underline{\vartheta}\wedge 0}
        {\bf 1}_{\{h_{t}\zeta_{t}^{-1}>0\}}, &  \text{if (ii) holds true}.
        \end{array}\right.
    \]
 Then, 
a process $\xi\in\Gamma(\calS)$ belongs to $\widetilde{\Gamma}$ if and only if
{condition (i) in Proposition \ref{BudgetConstraintResult} holds, and}
for $\;\bbp$-a.e. $\omega$,  
$\widehat{h}_{t}(\omega){\bf 1}_{\{t\leq{}\tau(\omega)\}}\leq a_{t}(\omega){\bf 1}_{\{t\leq{}\tau(\omega)\}}$, for a.e. $t\geq{}0$.}
\end{prop}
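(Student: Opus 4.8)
The plan is to obtain Proposition~\ref{TwoCasesb} as an immediate consequence of Proposition~\ref{ConditionsLocalSuper}, once we recognize that $\widehat h_t$ is precisely the pointwise essential supremum of $h_t\beta_t$ over all admissible trading strategies $\beta$. Indeed, Proposition~\ref{ConditionsLocalSuper} says that $\xi\in\Gamma(\calS)$ belongs to $\widetilde\Gamma$ if and only if condition (i) of Proposition~\ref{BudgetConstraintResult} holds and, \emph{for every} locally bounded admissible $\beta$, one has $h_t\beta_t\leq a_t$ for a.e.\ $t\in[0,\tau]$, $\bbp$-a.s. Since the family of admissible strategies is, by Proposition~\ref{TwoCases}, described by an explicit $(\omega,t)$-wise linear constraint, the universally-quantified inequality over $\beta$ collapses to a single inequality involving the supremum $\widehat h_t$.

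First I would carry out the pointwise maximization. Consider case (ii) (case (i) being identical with $v(t,z_i)$ in place of $\vartheta$). By Proposition~\ref{TwoCases}(ii) a predictable locally bounded $\beta$ is admissible iff, $\bbp\times dt$-a.e., $\beta_t\zeta_t$ lies in the interval with endpoints $-1/(\bar\vartheta\vee0)$ and $-1/(\underline\vartheta\wedge0)$, which always contains $0$. Writing $h_t\beta_t=(h_t\zeta_t^{-1})(\beta_t\zeta_t)$ and maximizing the linear map $\gamma\mapsto(h_t\zeta_t^{-1})\gamma$ over that interval one gets
\[
    \esssup_{\beta\text{ adm.}}h_t\beta_t
    =-\frac{h_t\zeta_t^{-1}}{\underline\vartheta\wedge0}\,{\bf 1}_{\{h_t\zeta_t^{-1}>0\}}
    -\frac{h_t\zeta_t^{-1}}{\bar\vartheta\vee0}\,{\bf 1}_{\{h_t\zeta_t^{-1}<0\}}
    =\widehat h_t,
\]
with the usual conventions for $1/0$ when an endpoint is $\pm\infty$; the analogous maximization of $\beta\mapsto h_t\beta$ over $[-1/(\max_i v(t,z_i)\vee0),\,-1/(\min_i v(t,z_i)\wedge0)]$ produces the first branch of the definition of $\widehat h_t$. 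Note $\widehat h$ is predictable, being assembled from the predictable processes $h$, $\zeta$, $v$.

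Next I would prove the equivalence. For the ``if'' direction, assume $\widehat h_t\,{\bf 1}_{\{t\leq\tau\}}\leq a_t\,{\bf 1}_{\{t\leq\tau\}}$ for a.e.\ $t$, $\bbp$-a.s. Any admissible locally bounded $\beta$ satisfies $h_t\beta_t\leq\widehat h_t$ for $\bbp\times dt$-a.e.\ $(\omega,t)$ by the displayed maximization (using $\zeta_t\neq0$ a.e.), hence $h_t\beta_t\leq a_t$ for a.e.\ $t\in[0,\tau]$, $\bbp$-a.s.; together with condition (i) of Proposition~\ref{BudgetConstraintResult} this is the hypothesis of Proposition~\ref{ConditionsLocalSuper}, so $\xi\in\widetilde\Gamma$. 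For the converse, assume $\xi\in\widetilde\Gamma$ and construct a near-maximizing sequence of \emph{bona fide} admissible strategies: in case (ii) set $\gamma^{(n)}_t:=\big(n\wedge(-1/(\underline\vartheta\wedge0))\big){\bf 1}_{\{h_t\zeta_t^{-1}>0\}}+\big((-n)\vee(-1/(\bar\vartheta\vee0))\big){\bf 1}_{\{h_t\zeta_t^{-1}<0\}}$ and $\beta^{(n)}_t:=\gamma^{(n)}_t\zeta_t^{-1}$. Each $\beta^{(n)}$ is predictable, admissible ($\gamma^{(n)}_t$ stays in the admissible interval by construction), and locally bounded (here the standing hypothesis that $\zeta^{-1}$ is locally bounded enters), so Proposition~\ref{ConditionsLocalSuper} gives $h_t\beta^{(n)}_t\leq a_t$ for a.e.\ $t\leq\tau$, $\bbp$-a.s. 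Since $h_t\beta^{(n)}_t\uparrow\widehat h_t$ pointwise, taking the union over $n\in\bbn$ of the $\bbp\times dt$-null exceptional sets $\{(\omega,t):t\leq\tau,\ h_t\beta^{(n)}_t>a_t\}$ and passing to the limit on its complement yields $\widehat h_t\,{\bf 1}_{\{t\leq\tau\}}\leq a_t\,{\bf 1}_{\{t\leq\tau\}}$ for a.e.\ $t$, $\bbp$-a.s.; condition (i) of Proposition~\ref{BudgetConstraintResult} is part of the conclusion of Proposition~\ref{ConditionsLocalSuper}. Case (i) is handled by the same truncation, with $v(t,z_i)$ replacing $\vartheta$ and no $\zeta$-factor.

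The one delicate point is this converse step: one has to check carefully that every truncated $\beta^{(n)}$ is a legitimate element of $\calU^{w}_{ad}$---predictable, locally bounded, and satisfying the admissibility bounds of Proposition~\ref{TwoCases} for each $n$---and that the monotone passage to the limit remains correct even when $\widehat h_t=+\infty$ on a set of positive $\bbp\times dt$-measure, in which case the inequality $\widehat h_t\,{\bf 1}_{\{t\leq\tau\}}\leq a_t\,{\bf 1}_{\{t\leq\tau\}}$ simply fails, consistently with $\xi\notin\widetilde\Gamma$ (the $\beta^{(n)}$'s then eventually violate the criterion of Proposition~\ref{ConditionsLocalSuper}). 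The remaining measurability facts---of $\widehat h$ and of the sets $\{h_t\zeta_t^{-1}\gtrless0\}$---are routine given predictability of $h$, $\zeta$, $v$.
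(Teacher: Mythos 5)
Your proof is correct and follows precisely the route the paper indicates (the paper states only that Proposition~\ref{TwoCasesb} ``follows from Propositions~\ref{ConditionsLocalSuper} and~\ref{TwoCases}'' and leaves the details out). You supply exactly those details: identify $\widehat h_t$ as the pointwise supremum of $h_t\beta_t$ over the admissible interval from Proposition~\ref{TwoCases}, deduce the ``if'' direction directly, and obtain the converse via the truncated admissible sequence $\beta^{(n)}$, correctly noting where local boundedness of $\zeta^{-1}$ is used.
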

{
We remark that the  cases $\underline{\vartheta}\geq 0$ and $\bar{\vartheta}\leq 0$
do not lead to any absurd in the definition of $\hat{h}$ above as we are using the convention that $0\cdot \infty=0$. Indeed, for instance, if  $\underline{\vartheta}\geq 0$, it was seeing that $h_{t}\zeta_{t}^{-1}\leq 0$, for a.e. $t\leq \tau$, and thus, we set the second term in the definition of $\hat{h}$ to be zero.
}

Now we {can give a more explicit characterization} of
the dual solution $\xi^{*} =\calE(X^{*}-A^{*})$
to the problem (\ref{DefDualProblem}), which existence
was established in Proposition \ref{ExistenceDual}.
For instance, we will see that  $A^{*}$ is
absolutely continuous up to a predictable stopping time.
Below, we refer to Proposition
\ref{BudgetConstraintResult} for the notation.

\begin{prop}\label{AbsContDual}
Let $\xi:=\xi_{0}\,\calE(X-A)\in\Gamma(\calS)$,
\(
    \tau_{_{A}}:=\inf\{t:\Delta A_{t}=1\},
\)
and $\widetilde{A}_{t}:=\int_{0}^{t}a_{s}\,ds
    +{\bf 1}_{\{t\geq{}\tau_{_{A}}\}}$.
The followings two statements hold true:
\begin{enumerate}
    \item[(1)]
    $\widetilde{\xi}:=\xi_{0}\calE\left(X-\widetilde{A}\right)\geq{}\xi$.
    Furthermore, $\xi\in\widetilde{\Gamma}$ if and only if
    $\widetilde\xi\in\widetilde{\Gamma}$.
    \item[(2)]
    Suppose that either of the two conditions in Proposition
    \ref{TwoCasesb} are satisfied and denote
    \[
        \widehat{A}_{t}:=
        \int_{0}^{t}\widehat{h}_{s}{\bf 1}_{s\leq{}\tau}\,ds
        +{\bf 1}_{\{t\geq{}\tau_{_{A}}\}},
    \]
    where $\widehat{h}$ is defined accordingly to the assumed case.
    Then, $\xi_{\cdot}\leq{}\widehat{\xi}_{\cdot}$, and
    furthermore, the process
    $\widehat{\xi}:=\xi_{0}\calE(X-\widehat{A})$
    belongs to $\widetilde{\Gamma}$ if $\xi\in\widetilde{\Gamma}$.
\end{enumerate}
\end{prop}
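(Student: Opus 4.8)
The plan is to exploit the comparison theorem for Dol\'eans-Dade exponentials together with the pathwise characterizations of membership in $\widetilde\Gamma$ supplied by Propositions \ref{BudgetConstraintResult}, \ref{ConditionsLocalSuper}, and \ref{TwoCasesb}. For part (1), first I would observe that, writing $A=A^{c}+A^{s}+A^{d}$ as in (\ref{DecmpIncr}), the replacement of $A$ by $\widetilde A=\int_{0}^{\cdot}a_{s}\,ds+{\bf 1}_{\{t\geq\tau_{_A}\}}$ only \emph{decreases} the total ``drift'' being subtracted: we discard the singular continuous part $A^{s}$, and we replace the jump part $A^{d}$ (whose jumps are all $\leq 1$, with a jump of size exactly $1$ occurring at $\tau_{_A}$) by a single jump of size $1$ at $\tau_{_A}$. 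Since $\calE$ is monotone in the subtracted increasing process — concretely, because $X-\widetilde A$ and $X-A$ share the same martingale and absolutely-continuous-drift parts while $\widetilde A\leq A$ pathwise up to $\tau_{_A}$ and both exponentials are killed (become $0$) at $\tau_{_A}=\tau_{_{\widetilde A}}$ — one gets $\widetilde\xi\geq\xi$ on $[0,\tau_{_A})$, and both vanish afterwards, giving $\widetilde\xi\geq\xi$ everywhere. For the equivalence $\xi\in\widetilde\Gamma\iff\widetilde\xi\in\widetilde\Gamma$, note that $\xi$ and $\widetilde\xi$ have the \emph{same} parameters $(G,F)$ in their $\calS$-part and the \emph{same} absolutely continuous density $a_{t}=\tfrac{dA^{c}_{t}}{dt}=\tfrac{d\widetilde A^{c}_{t}}{dt}$, hence the \emph{same} process $h_{t}$ and the \emph{same} sinking time $\tau$ (the defining events $\{\Delta X_{t}=-1\}$ and $\{\Delta A_{t}=1\}$ versus $\{\Delta\widetilde A_{t}=1\}$ coincide because $\tau_{_A}$ is precisely where $A$ first jumps by $1$). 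Therefore conditions (i) and (\ref{IntCondForHAdpt4b}) of Proposition \ref{ConditionsLocalSuper} hold for $\xi$ if and only if they hold for $\widetilde\xi$, which is the claim.

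For part (2), the argument is parallel but now compares $\widetilde\xi$ (equivalently $\xi$, by part (1)) with $\widehat\xi=\xi_{0}\calE(X-\widehat A)$, where $\widehat A_{t}=\int_{0}^{t}\widehat h_{s}{\bf 1}_{s\leq\tau}\,ds+{\bf 1}_{\{t\geq\tau_{_A}\}}$. The inequality $\xi\leq\widehat\xi$ will follow from the same comparison principle once I check that the subtracted drift in $\widehat A$ is \emph{no larger} than that in $\widetilde A$ on $[0,\tau)$: by Proposition \ref{TwoCasesb}, membership of $\widetilde\xi$ (or rather the relevant structural inequality) forces $\widehat h_{t}\leq a_{t}$ for a.e.\ $t\leq\tau$ — but wait, here I must be careful about the direction, since part (2) only asserts $\xi\leq\widehat\xi$ \emph{and} that $\widehat\xi\in\widetilde\Gamma$ \emph{provided} $\xi\in\widetilde\Gamma$, not an equivalence. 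So the logic is: assuming $\xi\in\widetilde\Gamma$, Proposition \ref{TwoCasesb} gives $\widehat h_{t}{\bf 1}_{\{t\leq\tau\}}\leq a_{t}{\bf 1}_{\{t\leq\tau\}}$ a.e.; this makes $\widehat A\leq\widetilde A$ on $[0,\tau)$ (the jump-at-$\tau_{_A}$ parts agree, the continuous parts are compared by the inequality just quoted, and $A^{s}$ has already been dropped), whence $\widehat\xi\geq\widetilde\xi\geq\xi$ by comparison; and, since $\widehat\xi$ has the same $(G,F)$ and therefore the same $h$ and the same sinking time as $\widetilde\xi$, while its absolutely continuous drift density is now $\widehat h_{t}{\bf 1}_{\{t\leq\tau\}}$, the ``$\widehat h\leq\widehat h$'' instance of the criterion in Proposition \ref{TwoCasesb} is trivially satisfied on $[0,\tau)$, so $\widehat\xi\in\widetilde\Gamma$. (Condition (i) of Proposition \ref{BudgetConstraintResult} transfers verbatim because $F$ is unchanged.)

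The main obstacle I anticipate is making the comparison-of-exponentials step fully rigorous when the subtracted processes differ in their singular-continuous and jump parts rather than just their absolutely continuous densities: the clean statement ``$A'\leq A''\Rightarrow\calE(X-A')\geq\calE(X-A'')$'' needs the factor $X$ to be held fixed and needs care at the killing time where both exponentials hit $0$. I would handle this by writing $\calE(X-A)_{t}=\calE(X)_{t}\,\prod_{s\leq t}(\cdots)\exp(-A^{c,s}_{t}\text{-terms})$ via the multiplicative structure — more transparently, by noting that on $[0,\tau)$ all the processes in question stay strictly positive and one can take logarithms, reducing the inequality to $-A^{c}-A^{s}-A^{d,\log}\leq-\widetilde A^{c}-\widetilde A^{d,\log}$ pathwise (using $\log(1-x)\leq -x$ to absorb the difference between $A^{d}$ and its logarithmic contribution), after which the inequality on $[0,\tau)$ plus the common vanishing on $[\tau,\infty)$ gives the global comparison. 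The measurability of $\tau_{_A}$ as a predictable stopping time (so that $\widehat A$, $\widetilde A$ are genuinely predictable and increasing) is routine given that $A$ is predictable increasing. Everything else is bookkeeping with the already-established Propositions \ref{ConditionsLocalSuper} and \ref{TwoCasesb}.
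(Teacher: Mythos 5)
Your proof is correct and follows essentially the same route as the paper: for part (1) you write out the Dol\'eans--Dade exponential using the no-common-jumps property (granted by predictability of $A$), drop the singular-continuous part $A^{s}$ and the sub-unit jump contributions to get $\widetilde\xi\geq\xi$, and then observe that $\xi$ and $\widetilde\xi$ share the same $(G,F)$, the same absolutely continuous density $a$, and the same sinking time $\tau$, so the criterion of Proposition \ref{ConditionsLocalSuper} transfers verbatim; for part (2) you invoke Proposition \ref{TwoCasesb} to get $\widehat h\,{\bf 1}_{\{\cdot\le\tau\}}\le a\,{\bf 1}_{\{\cdot\le\tau\}}$, which yields $\widehat A\le\widetilde A$ and hence $\widehat\xi\ge\widetilde\xi\ge\xi$, while the ``$\widehat h\le\widehat h$'' instance of the same criterion gives $\widehat\xi\in\widetilde\Gamma$. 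Two small remarks: in your ``anticipated obstacle'' discussion, the inequality $\log(1-x)\le -x$ is not actually what is needed --- after the jump parts cancel (since $A^{d}_{t}=\sum_{s\le t}\Delta A_{s}$), the comparison on $[0,\tau)$ reduces simply to $A^{s}_{t}\ge\sum_{s\le t}\log(1-\Delta A_{s})$, which is immediate from $A^{s}\ge 0$ and $\log(1-x)\le 0$; and, as the paper makes explicit, one should verify in part (2) that $\widehat h\,{\bf 1}_{\{\cdot\le\tau\}}$ is nonnegative, predictable, and locally integrable (the last follows from $0\le\widehat h\le a$ once $\xi\in\widetilde\Gamma$) so that $\widehat A$ is a legitimate element of $\calV^{+}$ and $\widehat\xi\in\Gamma(\calS)$ in the first place.
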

\begin{proof}
    Let $A^{c},A^{s},A^{d}$ denote the increasing
    predictable processes in the decomposition (\ref{DecmpIncr}) of
    $A$. Since $A$ is predictable, there is no common
    jump times between $X$ and $A$. Then,
    \begin{align*}
        \xi_{t}&=\xi_{0}e^{X_{t}-A_{t}-\frac{1}{2}<X^{c},X^{c}>_{t}}
        \prod_{s\leq{}t}(1+\Delta X_{s})e^{-\Delta X_{s}}
        \prod_{s\leq{}t}(1-\Delta A_{s})e^{\Delta A_{s}}\\
        &\leq \xi_{0}e^{X_{t}-A^{c}_{t}-\frac{1}{2}<X^{c},X^{c}>_{t}}
        \prod_{s\leq{}t}(1+\Delta X_{s})e^{-\Delta X_{s}}{\bf 1}_{\{t<\tau_{_{A}}\}}
        =\widetilde{\xi}_{t},
    \end{align*}
    where we used that $A_{t}-\sum_{s\leq{}t}\Delta A_{s}=A^{c}_{t}+A^{s}_{t}
    \geq{}A^{c}_{t}$, and
    $\prod_{s\leq{}t}(1-\Delta A_{s})
    \leq {\bf 1}_{\{t<\tau_{_{A}}\}}$.
    Since both
    processes $\xi$ and $\widetilde{\xi}$ enjoy the same
    absolutely continuous part, and the same
    sinking time, the second statement in (1)
    is straightforward from Proposition \ref{ConditionsLocalSuper}.
    Part (2) follows from Proposition \ref{TwoCasesb} since
    the process
    $\widehat{a}_{t}:=\widehat{h}_{t}{\bf 1}_{t\leq{}\tau}$
    is nonnegative, predictable (since $h$ is predictable),
    and locally integrable (since
    $0\leq\widehat{h}\leq{}a$).
\end{proof}

{We remark that part (2) in Proposition \ref{AbsContDual} remains true if we take 
$\widehat{A}_{t}:=\int_{0}^{t}\widehat{h}_{s}{\bf 1}_{s\leq{}\tau_{_{A}}}\,ds
        +{\bf 1}_{\{t\geq{}\tau_{_{A}}\}}$.
        }
The following result is similar to Proposition 3.4 in
Xu \cite{Xu:2004} and {implies, in particular, that the optimum dual $\xi^{*}$ can be taken to be a local martingale}.
\begin{prop}\label{AbsContDual2}
{Suppose that either condition (i) or (ii) of Proposition \ref{TwoCases}
is satisfied. Moreover, in the case of condition (ii), assume additionally that 
\begin{equation}\label{CndLcDO}
	\nu\left(\{z\in{\rm supp}(\nu)\backslash\{0\}:\vartheta(z)=c\}\right)>0,
\end{equation}
for $c=\underline{\vartheta}$ if $\bar{\vartheta}>{}0$, and for $c=\bar{\vartheta}$ 
if $\underline{\vartheta}<0$.}
Let $\xi\in\widetilde{\Gamma}\cap \Gamma(\calS)$.
Then, there exists $\widetilde{X}\in\calS$ such that
$\widetilde{\xi}:=\xi_{0}\,\calE(\widetilde{X})\in\widetilde{\Gamma}$
and $\xi_{\cdot}\leq{}\widetilde{\xi}_{\cdot}$.
Furthermore, $\{\widetilde{\xi}(t)V^{\beta}_{t}\}_{t\leq{}T}$ is a local martingale
for all locally bounded admissible trading
strategies $\beta$.
\end{prop}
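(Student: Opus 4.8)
The plan is to ``transfer'' the absolutely continuous drift of $\xi$ onto the jumps of $Z$ that realize the binding admissibility constraint, thereby producing a risk--neutral exponential local martingale that dominates $\xi$.

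\emph{Reduction.} First I would use Lemma~\ref{lmma1SectDualClass} to assume $\xi=\xi_{0}\calE(X-A)$ with $X_{t}=\int_{0}^{t}G\,dW_{s}+\int_{0}^{t}\int_{\bbr_{0}}F\,\widetilde N\in\calS$ and $A$ predictable, and then part~(2) of Proposition~\ref{AbsContDual} to replace $\xi$ by $\widehat\xi:=\xi_{0}\calE(X-\widehat A)\in\widetilde\Gamma$, where $\widehat A_{t}=\int_{0}^{t}\widehat h_{s}\mathbf 1_{\{s\le\tau\}}\,ds+\mathbf 1_{\{t\ge\tau_{_{A}}\}}$; since $\widehat\xi\ge\xi$, any local--martingale deflator dominating $\widehat\xi$ also dominates $\xi$. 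As $\tau\le\tau_{_{A}}$ and the jump part of $\widehat A$ vanishes on $[0,\tau_{_{A}})$, on $[0,\tau)$ one has the explicit identity $\widehat\xi_{t}=\xi_{0}e^{-\int_{0}^{t}\widehat h_{s}\,ds}\calE(X)_{t}$, while $\widehat\xi\equiv0$ on $[\tau,\infty)$. It therefore suffices to exhibit $\widetilde X\in\calS$ with (a) $\widetilde h_{t}:=b_{t}+\sigma_{t}\widetilde G(t)+\int_{\bbr_{0}}v(t,z)\widetilde F(t,z)\,\nu(dz)=0$ for a.e.\ $t$, and (b) $\calE(\widetilde X)_{t}\ge e^{-\int_{0}^{t}\widehat h_{s}\,ds}\calE(X)_{t}$ on $[0,\tau)$: indeed, (a) makes $\widetilde\xi:=\xi_{0}\calE(\widetilde X)$ a nonnegative local martingale with $\widetilde\xi_{0}\le1$, and an integration by parts as in the proof of Proposition~\ref{BudgetConstraintResult} shows that the drift of $\widetilde\xi V^{\beta}$ equals $\widetilde\xi_{t^{-}}V^{\beta}_{t^{-}}\beta_{t}\widetilde h_{t}\,dt\equiv0$, so $\widetilde\xi V^{\beta}$ is a nonnegative local martingale for every admissible $\beta$ (hence $\widetilde\xi\in\widetilde\Gamma$), while (b) gives $\widetilde\xi\ge\widehat\xi\ge\xi$.

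\emph{Construction.} Let $D_{t}\subset\bbr_{0}$ be the binding jump set: in case~(i), the (finitely many, hence attained) atom realizing $\min_{i}v(t,z_{i})$ when $h_{t}>0$ and $\max_{i}v(t,z_{i})$ when $h_{t}<0$; in case~(ii), the set $\{z:\vartheta(z)=\underline\vartheta\}$ when $h_{t}\zeta_{t}^{-1}>0$ and $\{z:\vartheta(z)=\bar\vartheta\}$ when $h_{t}\zeta_{t}^{-1}<0$. By~(\ref{CndLcDO}) in case~(ii), resp.\ the finiteness of the atoms in case~(i), one has $0<\nu(D_{t})<\infty$, $v(t,\cdot)$ is constant on $D_{t}$ (with value $v_{t}^{D}$), and the explicit formula for $\widehat h$ in Proposition~\ref{TwoCasesb} yields the crucial identity $v_{t}^{D}\,\widehat h_{t}=-h_{t}$ on $\{t\le\tau\}$. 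On $[0,\tau)$ put
\[
   \widetilde X_{t}:=X_{t}+\int_{0}^{t}\!\!\int_{D_{s}}\rho_{s}\,\widetilde N(ds,dz),\qquad \rho_{s}:=\frac{\widehat h_{s}}{\nu(D_{s})}\ \ge\ 0,
\]
i.e.\ $\widetilde G=G$ and $\widetilde F=F+\rho_{s}\mathbf 1_{D_{s}}\ge F\ge-1$; here $\rho$ defines a predictable field meeting the usual integrability requirements, so $\widetilde X\in\calS$. Then $\int v\widetilde F\,\nu-\int vF\,\nu=v_{t}^{D}\rho_{t}\nu(D_{t})=v_{t}^{D}\widehat h_{t}=-h_{t}$, which is (a); the compensator of the added jump term is $\int_{0}^{t}\rho_{s}\nu(D_{s})\,ds=\int_{0}^{t}\widehat h_{s}\,ds$; and since $F(s,\Delta Z_{s})>-1$ for $s<\tau$, a direct computation with the stochastic exponential gives
\[
   \calE(\widetilde X)_{t}=e^{-\int_{0}^{t}\widehat h_{s}\,ds}\,\calE(X)_{t}\!\!\prod_{s<t:\,\Delta Z_{s}\in D_{s}}\!\!\frac{1+F(s,\Delta Z_{s})+\rho_{s}}{1+F(s,\Delta Z_{s})}\ \ge\ e^{-\int_{0}^{t}\widehat h_{s}\,ds}\,\calE(X)_{t},
\]
every factor being $\ge1$; this is (b). Finally I would extend $\widetilde X$ past $\tau$ so that $\calE(\widetilde X)$ agrees with the density process of some $\bbq\in\calM$ (which is nonempty by the standing no--arbitrage assumption), keeping $\widetilde X\in\calS$, $\widetilde F\ge-1$ and $\widetilde h\equiv0$; on $[\tau,T]$ one has $\widehat\xi=0\le\widetilde\xi$, so (a) and (b) hold globally.

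\emph{Main obstacle.} The heart of the matter is the construction on $[0,\tau)$: the drift $\int_{0}^{\cdot}\widehat h_{s}\,ds$ must be moved onto jumps of $Z$ in such a way that it is simultaneously fully absorbed in $\widetilde h$ and returned with a nonnegative sign in the stochastic exponential. Both are possible only because the transfer is performed on the binding set $D_{t}$, where $v(t,\cdot)$ is constant and of the sign opposite to $h_{t}$, so that its ($v$-weighted) contribution to $\widetilde h$ and its ($v$-unweighted) compensator are related by the precise factor $v_{t}^{D}$ delivered by the formula for $\widehat h$, and because $\nu(D_{t})>0$ --- which is exactly what the atomicity hypothesis~(\ref{CndLcDO}), resp.\ the finite--atom structure of case~(i), guarantees. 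If the extreme value of $\vartheta$ carried no $\nu$--mass, the transferred jumps would be too rare to dominate $\widehat\xi$ pathwise. The remaining points --- the predictability and integrability making $\widetilde X\in\calS$, and the patching beyond the sinking time $\tau$ --- are routine.
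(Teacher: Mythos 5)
Your proof is correct and follows essentially the same approach as the paper: reduce via Lemma~\ref{lmma1SectDualClass} and Proposition~\ref{AbsContDual}~(2), then transfer the absolutely continuous drift $\widehat h$ onto jumps supported on the binding set $D_{t}$ (the extreme atom in case~(i), the level set $\{\vartheta=\underline\vartheta\}$ or $\{\vartheta=\bar\vartheta\}$ in case~(ii)); your $\rho_{s}\mathbf 1_{D_{s}}$ is precisely the paper's field $D(t,z)$, and the identity $v_{t}^{D}\widehat h_{t}=-h_{t}$ is the same computation that makes the paper's condition~(c) hold with equality. The only difference is stylistic: where the paper verifies the abstract conditions $D\ge0$, $\int D\,\nu\,\mathbf 1_{t\le\tau}\le\widehat h_{t}\mathbf 1_{t\le\tau}$, $h_{t}+\int vD\,\nu=0$, you carry out the stochastic--exponential ratio $\calE(\widetilde X)/\calE(X)=e^{-\int\widehat h}\prod_{\Delta Z_{s}\in D_{s}}\frac{1+F+\rho_{s}}{1+F}$ explicitly, which makes the domination transparent factor by factor. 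One small simplification you could make: the ``patching past $\tau$'' with a risk-neutral density is unnecessary --- the paper simply defines $D(t,z)$ (equivalently your $\rho_{s}$) for \emph{all} $t$, not just $t<\tau$, so that $\widetilde h\equiv 0$ globally and the domination on $\{t<\tau\}$ (where $\xi>0$) together with $\xi\equiv 0$ on $\{t\ge\tau\}$ already gives $\xi\le\widetilde\xi$ everywhere, with no surgery at $\tau$ required.
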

\begin{proof}
    Let us prove the case when condition (i) in Proposition \ref{TwoCases}
    is in force.
    In light of Proposition \ref{AbsContDual},
    we assume without loss of generality that
    \(
        A_{t}=\int_{0}^{t} a_{t}dt+{\bf 1}_{\{t\geq{}\tau_{_{A}}\}},
    \)
    with
    \(
        {a}_{t}:=\hat{h}_{t} {\bf 1}_{\{t\leq{}\tau\}}.
    \)
    Assume that $\min_{i} v(t,z_{i})<0<\max_{i} v(t,z_{i})$.
    Otherwise if, for instance, $\max_{i} v(t,z_{i})\leq{}0$, then it can be shown that
    $h_{t}\geq{}0$, a.s. {(similarly to case (b) in Example \ref{Exmpl})}, and the first term of $\hat{h}$ is $0$ under our
    convention that $\infty\cdot 0= 0$.
    Notice that, in any case,  one can find a predictable process
    $z$ taking values on $\{z_{i}\}_{i=1}^{n}$, such that
    \[
        \hat{h}_{t}=
        -\frac{h_{t}}{v(t,z(t))}.
    \]
    Write $\widetilde{X}_{\cdot}:=
    \int_{0}^{\cdot} G(s) d W_{s}+\int_{0}^{\cdot}\int_{\bbr_{0}} \widetilde{F}(s,z) d\widetilde{N}(s,z)$
    for a $\widetilde{F}\in G_{loc}(N)$ to be determined in the sequel.
    For $\widetilde{\xi}\geq{}\xi$
    it suffices to prove the existence of a field
    $D$ satisfying both conditions below:
    \[
        (a)\;D\geq{}0\quad{\rm and}\quad
        (b)\;\int_{\bbr_{0}} D(t,z)\nu(dz){\bf 1}_{\{t\leq{}\tau\}}\leq{}\hat{h}_{t}{\bf 1}_{\{t\leq{}\tau\}},
    \]
    (then, $\widetilde{F}$ is defined as $D+F$).
    Similarly, for $\widehat{\xi}$ to belong to
    $\widetilde{\Gamma}$ it suffices that
    \[
        (c)\; h_{t}+\int_{\bbr_{0}} v(t,z)D(t,z)\nu(dz)=0.
    \]
    Taking
    \[
        D(t,z):=-\frac{h_{t}}{v(t,z(t))\nu(\{z(t)\})}
        {\bf 1}_{\{z=z(t)\}},
    \]
    clearly non-negative,
    (b) and (c) hold with equality. Moreover,
    the fact that inequalities (c) hold with equality implies that
     $\{\widehat{\xi}(t)V^{\beta}_{t}\}_{t\leq{}T}$ is a local martingale
    for all locally bounded admissible trading
    strategy $\beta$ (this can be proved using the same
    arguments as in the sufficiency
    part of Proposition \ref{BudgetConstraintResult}).
    {Now suppose that condition (ii) in Proposition \ref{TwoCases} holds. For simplicity, let us assume that $\underline{\vartheta}<0<\bar{\vartheta}$ (the other cases can be analyzed following arguments similar to Example \ref{Exmpl}). Notice that 
    (\ref{CndLcDO}) implies the existence of a Borel $\underline{C}$ (resp. $\bar{C}$) 
    such that $\vartheta(z)\equiv \underline{\vartheta}$ on $\underline{C}$ 
    (resp. $\vartheta(z)\equiv \bar{\vartheta}$ on $\bar{C}$ ) 
    and $0<\nu(\underline{C}),\nu(\bar{C})<\infty$. Taking
    \[
        D(t,z):=-\frac{h_{t}\zeta_{t}^{-1}}{\bar{\vartheta}\nu(\bar{C})}
        {\bf 1}_{\bar{C}}(z){\bf 1}_{\{h_{t}\zeta_{t}^{-1}<0\}}
        -\frac{h_{t}\zeta_{t}^{-1}}{\underline{\vartheta}\nu(\underline{C})}
        {\bf 1}_{\underline{C}}(z){\bf 1}_{\{h_{t}\zeta_{t}^{-1}>0\}},
    \]
    (b) and (c) above will hold with equality.   }
    \end{proof}

\section{Replicability of the upper bound}
\label{RplcbltySect}

We now show that the tentative optimum final wealth
$V_{z}^{\Gamma}$, suggested by the inequality (iii) in Proposition
\ref{ExistenceDual}, is (super-) replicable. We will combine the
dual optimality of $\xi^{*}$ with the \emph{super-hedging theorem},
which states that given a contingent claim $\widehat{H}$ satisfying
$\bar{w}:=\sup_{\bbq\in\calM} \bbe_{\bbq}\{\widehat{H}\}<\infty$,
one can find for any fixed $z\geq{}\bar{w}$ an admissible trading
strategy $\beta^{*}$ (depending on $z$) such that
$V_{_{T}}^{z,\beta^{*}}\geq{}\widehat{H}$ almost surely (see Kramkov
\cite{Kramkov:1996}, and
 also Delbaen and Schachermayer \cite{DlbSch:1994}).
 Recall that $\calM$ denotes the class of all equivalent
risk neutral probability measures.
    \begin{prop}\label{AttainResult}
        Under the setting and conditions of Proposition
        \ref{ExistenceDual}, for any $0<z<w_{_{\Gamma}}$,
        there is an admissible trading strategy $\beta^{*}$
        for $z$  such that
        \[
            V_{_{T}}^{z,\beta^{*}}\geq
            I\left(
            y(z)\xi^{*}_{y(z)}(T)\right)\wedge H,
        \]
        and thus, the optimum of $u(z)$ is reached at
        the strategy $\beta^{*}$. In particular,
        \[
            V_{_{T}}^{z,\beta^{*}}=
            I\left(
            y(z)\xi^{*}_{y(z)}(T)\right),
        \]
        when
        $I\left(y(z)\xi^{*}_{y(z)}(T)\right)< H$.
    \end{prop}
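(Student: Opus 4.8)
The plan is to apply the super-hedging theorem to the contingent claim $\widehat H := V_z^\Gamma = I(y(z)\xi^*_{y(z)}(T))\wedge H$, so the first task is to verify its hypothesis, namely $\sup_{\bbq\in\calM}\bbe_\bbq[\widehat H]\le z$. The key link is part (ii) of Theorem \ref{MainThrm1}, which gives $\bbe[V_z^\Gamma\,\xi^*_{y(z)}(T)]=z$. So if I can show that for \emph{every} $\bbq\in\calM$ one has $\bbe_\bbq[\widehat H]\le \bbe[\xi^*_{y(z)}(T)\,\widehat H]$, I am done. The natural route is: given $\bbq\in\calM$ with density process $\rho_t:=\bbe[\tfrac{d\bbq}{d\bbp}\mid\calF_t]$, note that $\rho$ is a nonnegative $\bbp$-martingale with $\rho_0=1$, and that $\rho S$ is a $\bbp$-local martingale (since $S$ is a $\bbq$-local martingale). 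Hence $\rho\in\widetilde\Gamma$: indeed $\rho(0)=1\le 1$, and for any admissible $\beta$ the wealth process $V^\beta$ is a nonnegative $\bbq$-local martingale (being a stochastic integral against the $\bbq$-local martingale $S$), so $\rho V^\beta$ is a nonnegative $\bbp$-local martingale, hence a $\bbp$-supermartingale. Thus $\rho$ is an admissible dual element, and since $\widehat H\le H$ and $\{\rho_t V^\beta_t\}$ is a supermartingale for the particular $\beta$ that super-replicates the constant claim... — more directly, I will argue $\bbe_\bbq[\widehat H]=\bbe[\rho(T)\widehat H]$ and then use the supermartingale property of $\rho$ together with the fact that $\widehat H$ is dominated by an attainable wealth to bound this by $z$.

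Let me restructure that last point, because it is the crux. The cleanest argument: $V_z^\Gamma\wedge H = V_z^\Gamma$, and for any $\xi\in\widetilde\Gamma$ the inequality $\bbe[\xi(T)(V_T\wedge H)]\le z$ holds whenever $V$ is admissible with $V_0\le z$ — but here I want it for a \emph{claim} not yet known to be a wealth process. Instead I use the characterization of super-replicable claims: the quantity $\sup_{\bbq\in\calM}\bbe_\bbq[\widehat H]$ is exactly the super-hedging cost, and I must show it equals (or is at most) $z$. Since each $\bbq\in\calM$ induces $\rho^\bbq\in\widetilde\Gamma$ as above, and since $\xi^*_{y(z)}$ is \emph{a} dual optimizer, I need the reverse comparison: $\sup_{\bbq}\bbe_\bbq[\widehat H]\le\bbe[\xi^*_{y(z)}(T)\widehat H]=z$. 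This requires knowing that $\xi^*_{y(z)}(T)$ dominates (in the appropriate averaged sense) every $\rho^\bbq(T)$ against the test claim $\widehat H$. This is where I expect to invoke the optimality of $\xi^*_{y(z)}$ more carefully: by part (i) of Theorem \ref{MainThrm1} combined with a perturbation $\xi^*_{y(z)} \rightsquigarrow (1-\lambda)\xi^*_{y(z)}+\lambda\rho^\bbq$ (legitimate since $\Gamma$ is convex and contains — or its Fatou-closure contains — both), differentiating $\bbe[\widetilde U(y(z)((1-\lambda)\xi^*+\lambda\rho))]$ at $\lambda=0^+$ and using $\widetilde U'(y,\omega)=-(I(y,\omega)\wedge H)$ yields $\bbe[(I(y(z)\xi^*_{y(z)}(T))\wedge H)\,\rho^\bbq(T)]\le \bbe[(I(y(z)\xi^*_{y(z)}(T))\wedge H)\,\xi^*_{y(z)}(T)]=z$, i.e. $\bbe_\bbq[\widehat H]\le z$.

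With the hypothesis of the super-hedging theorem verified, I apply it (Kramkov \cite{Kramkov:1996}, Delbaen–Schachermayer \cite{DlbSch:1994}) at the level $z$: there exists an admissible $\beta^*$ for initial endowment $z$ with $V_T^{z,\beta^*}\ge \widehat H = I(y(z)\xi^*_{y(z)}(T))\wedge H$ a.s. Combined with part (iii) of Theorem \ref{MainThrm1}, $u(z)\le\bbe[U(V_z^\Gamma;\omega)]\le\bbe[U(V_T^{z,\beta^*};\omega)]$ (using monotonicity of $U(\cdot,\omega)$ and $U(w,\omega)=U(w\wedge H,\omega)$), while the reverse inequality $\bbe[U(V_T^{z,\beta^*};\omega)]\le u(z)$ is immediate from the definition of $u(z)$ since $\beta^*$ is admissible for $z$. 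Hence $\beta^*$ attains the supremum. The final sentence is then just the observation that on the event $\{I(y(z)\xi^*_{y(z)}(T))<H\}$ we have $\widehat H = I(y(z)\xi^*_{y(z)}(T))$; since $U(\cdot,\omega)$ is strictly increasing on $(0,H(\omega))$ any excess of $V_T^{z,\beta^*}$ over $\widehat H$ on that event would strictly increase the expected utility unless it were already zero there — or more simply, one truncates $\beta^*$'s wealth down to $\widehat H$ by discarding the surplus (consuming it), which keeps the strategy admissible and the utility unchanged since $U(w,\omega)=U(w\wedge H,\omega)$; I'd phrase the conclusion via this truncation so that equality $V_T^{z,\beta^*}=I(y(z)\xi^*_{y(z)}(T))$ holds on $\{I(\cdots)<H\}$. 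The main obstacle is the perturbation/differentiation argument establishing $\bbe_\bbq[\widehat H]\le z$ for all $\bbq\in\calM$ — in particular justifying the interchange of limit and expectation (dominated convergence, using boundedness of $\widetilde U$ and of $I\wedge H\le H$ with $\bbe[\xi(T)H]$ finite from \eqref{CondOnClassOfSprMart1}) and confirming that the convex combination stays within the closure of $\Gamma$ so that optimality of $\xi^*_{y(z)}$ applies.
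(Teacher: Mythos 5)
Your proposal follows essentially the same route as the paper: identify each $\bbq\in\calM$ with its density process, verify membership in the dual class, apply a one-sided perturbation of the optimizer and differentiate $\widetilde U$ to get $\bbe_\bbq[\widehat H]\le z$, and then invoke the super-hedging theorem. The overall strategy and the differentiation computation (using $\widetilde U'(y,\omega)=-(I(y,\omega)\wedge H(\omega))$, dominated convergence via $|I\wedge H|\le H$ and finiteness of $w_\Gamma$) are correct and match the paper.

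However, there is one genuine gap, which you yourself flag at the end but do not close. The optimality (i) in Theorem \ref{MainThrm1} is stated only over $\xi\in\Gamma=\widetilde\Gamma\cap\Gamma(\calS)$, so for the perturbation $\xi^{(\varepsilon)}=(1-\varepsilon)\xi^*+\varepsilon\rho^\bbq$ to be a legitimate competitor you need $\rho^\bbq\in\Gamma$, not merely $\rho^\bbq\in\widetilde\Gamma$. You establish $\rho^\bbq\in\widetilde\Gamma$ (correctly, via the $\bbq$-local-martingale property of $V^\beta$ and III.3.8.c type reasoning), but you never show $\rho^\bbq\in\Gamma(\calS)$, i.e.\ that $\rho^\bbq=\calE(X)$ for some $X=\int G\,dW+\int\!\!\int F\,d\widetilde N$ with $F\ge -1$. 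This is the step the paper supplies by appealing to the exponential representation theorem for strictly positive local martingales (Kunita \cite{Kunita:2004}; alternatively Theorems III.8.3, I.4.34c and III.4.34 of \cite{Shiryaev:2003} applied to the filtration generated by $W$ and $N$). Without it the convexity of $\Gamma$ does not give you $\xi^{(\varepsilon)}\in\Gamma$, and condition (i) of Theorem \ref{MainThrm1} cannot be invoked. Resting on ``$\Gamma$ is convex and its Fatou-closure contains both'' does not help either, since (i) is not stated for the closure.

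A minor remark on the last assertion: the truncation/consumption device you sketch is not available in this model (there is no consumption term in \eqref{EqDiscValProcReal}). The intended argument is more elementary: by (iii) of Theorem \ref{MainThrm1} and admissibility of $\beta^*$ one gets $\bbe[U(V_T^{z,\beta^*};\cdot)]=\bbe[U(I(y(z)\xi^*(T))\wedge H;\cdot)]$, while $V_T^{z,\beta^*}\ge I(y(z)\xi^*(T))\wedge H$ and $U(\cdot,\omega)$ is strictly increasing on $(0,H(\omega))$ and flat on $[H(\omega),\infty)$; hence on $\{I(y(z)\xi^*(T))<H\}$ one must have $V_T^{z,\beta^*}=I(y(z)\xi^*(T))$ a.s.
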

\begin{proof}
    For simplicity, we write $\xi^{*}_{t}:=\xi^{*}_{y(z)}(t)$,
    $y=y(z)$, and
    \[
        V^{*}= I\left(
            y(z)\xi^{*}_{y(z)}(T)\right)\wedge H.
    \]
    Fix an equivalent risk neutral probability measure $\bbq\in\calM$,
    and let
    {$\xi'_{t}=\frac{d\, \bbq|_{\calF_{t}}}{d\, \bbp|_{\calF_{t}}}$
    be its corresponding density processes.
    Here, $\bbq|_{\calF_{t}}$ (resp. $\bbp|_{\calF_{t}}$)
    is the restriction of the measure $\bbq$
    (resp. $\bbp$) to the filtration $\calF_{t}$.}
    Under $\bbq$, $S_{\cdot}$ is a local martingale, and
    then, for any locally bounded $\beta$, $V_{\cdot}^{\beta}$
    is a $\bbq$-local martingale.
    By III.3.8.c in \cite{Shiryaev:2003},
    $\xi'V^{\beta}$ is a $\bbp$-local martingale
    (necessarily nonnegative by admissibility), and
    thus, $\xi'$ is in $\widetilde{\Gamma}$.
    On the other hand,
    $\xi'$ belongs to $\Gamma(\calS)$
    due to the exponential representation for
    \emph{positive} local martingales in Kunita \cite{Kunita:2004}
    (alternatively, by invoking Theorems III.8.3,
    I.4.34c, and III.4.34 in \cite{Shiryaev:2003},
    $\xi'\in\Gamma(\calS)$ even if $Z$ were just
    an additive process $Z$).
    By the convexity of the dual class $\Gamma=\Gamma(\calS)\cap \widetilde{\Gamma}$,
    $\xi^{(\varepsilon)}:=\varepsilon \xi' +(1-\varepsilon)\xi^{*}$
    belongs to $\Gamma$, for any $0\leq{}\varepsilon\leq{}1$. Moreover, since $\widetilde{U}$
    is convex and
    $\widetilde{U}'(y)=-(I(y)\wedge H)$,
    \[
        \left|\frac{\widetilde{U}\left(y\xi^{(\varepsilon)}_{_{T}}
        \right)-\widetilde{U}\left(y\xi^{*}_{_{T}}
        \right)}{\varepsilon}\right|
        \leq y 
        H \left|\xi'_{_{T}}-\xi^{*}_{_{T}}\right|
        \leq
        y 
        H \left(\xi'_{_{T}}+\xi^{*}_{_{T}}\right).
    \]
    The random variable
    $y 
    H \left(\xi'_{_{T}}+\xi^{*}_{_{T}}\right)$
    is integrable since by assumption $w_{_{\Gamma}}
    <\infty$. We can then apply dominated convergence
    theorem to get
    \[
        \lim_{\varepsilon\downarrow 0}
        \frac{1}{\varepsilon}\left\{
        \bbe\left[\widetilde{U}\left(y\xi^{(\varepsilon)}_{_{T}}
        \right)\right]-\bbe\left[\widetilde{U}\left(y\xi^{*}_{_{T}}
        \right)\right]\right\}
        =-y\bbe\left[
        V^{*}
        \left(\xi'_{_{T}}-\xi^{*}_{_{T}}\right)
        \right],
    \]
    which is nonnegative by condition (i) in Proposition
    \ref{ExistenceDual}. Then, using condition (ii) in Proposition
    \ref{ExistenceDual},
    \[
        \bbe_{\bbq}\left[
        V^{*}\right]
        =\bbe\left[
        V^{*}\xi'_{_{T}}\right]
        \leq\bbe\left[
        V^{*}\xi^{*}_{_{T}}
        \right]=z.
    \]
    Since $\bbq\in\calM$ is arbitrary,
    \(
        \sup_{\bbq\in\calM}
        \bbe_{\bbq}\left[
        V^{*}\right]
        \leq z.
    \)
    By \emph{the super-heading theorem},
    there is an admissible trading strategy $\beta^{*}$
    for $z$  such that
    \[
        V_{_{T}}^{z,\beta^{*}}\geq
        I\left(
        y(z)\xi^{*}_{y(z)}(T)
        \right)\wedge H.
    \]
    The second statement of the theorem
    is straightforward since $U(z)$
    is strictly increasing on $z<H$.
\end{proof}

\section{Concluding remarks}\label{FnlRmrksSect}

We conclude the paper with the following remarks.

\medskip
 {\bf (i) The dual class $\Gamma$.}~ The dual domain of the dual
problem can be taken to be the more familiar class of equivalent
risk-neutral probability measures $\calM$. To be more precise,
define
    \[
        \bar\Gamma:=\left\{\xi_{t}:=
        \frac{d\, \bbq|_{\calF_{t}}}{d\, \bbp|_{\calF_{t}}}
        :\bbq\in\calM\right\}.
    \]
    Since $\bar\Gamma$ is obviously a convex subclass of $\widetilde{\Gamma}$,
    Theorem \ref{MainThrm1} implies that,
    as far as
    \begin{equation}\label{SfCondExit}
        0<\bar{w}:=\sup_{\xi\in\bar\Gamma}
        \bbe\left[\xi_{_{T}}
        H\right]
        <\infty,
    \end{equation}
    for each $z\in(0,\bar{w})$, there exist
    $y:=y(z)>0$ and  $\xi^{*}:=\xi^{*}_{y(z)}\in\widetilde\Gamma$
    (not necessarily belonging to $\bar\Gamma$) such that (i)-(iii) in
    Proposition \ref{ExistenceDual}
    hold with $\Gamma=\bar{\Gamma}$. Finally, one can slightly modify the proof of
    Proposition \ref{AttainResult}, to conclude the replicability
    of
    \[
        V^{\bar\Gamma}_{z}:= I\left(
        y \xi^{*}_{_{T}}
        \right)\wedge H.
    \]
    Indeed, in the notation of the proof of the
    Proposition \ref{AttainResult}, the only step
    which needs to be justified in more detail is that
    \begin{equation}
    \label{NeededRel}
        \bbe\left[\widetilde{U}\left(y\xi^{*}_{_{T}}
        \right)\right]\leq \bbe\left[\widetilde{U}
        \left(y\xi^{(\varepsilon)}_{_{T}}
        \right)\right],
    \end{equation}
    for all $0\leq\varepsilon\leq{}1$, where
    $\xi^{(\varepsilon)}=\varepsilon \xi'+(1-\varepsilon) \xi^{*}$
    (here, $\xi'$ is a fixed element in $\bar\Gamma$).
    The last inequality follows from the fact that,
    by Proposition \ref{AttainResult} (c), $\xi^{*}$
    can be approximated by elements $\{\xi^{(n)}\}_{n\geq{}1}$ in
    $\bar{\Gamma}$ in the sense that
    $\xi^{(n)}_{_{T}}\rightarrow\xi^{*}_{_{T}}$ a.s.
    Thus,
    $\xi^{(\varepsilon)}$ can be approximated by
    the elements
    $\xi^{(\varepsilon,n)}:=
    \varepsilon \xi'+(1-\varepsilon) \xi^{(n)}$ in $\bar\Gamma$,
    for which we know that
    \[
        \bbe\left[\widetilde{U}\left(y\xi^{*}_{_{T}}
        \right)\right]\leq \bbe\left[\widetilde{U}
        \left(y\xi^{(\varepsilon,n)}_{_{T}}
        \right)\right].
    \]
    Passing to the limit as $n\rightarrow\infty$,
    we obtain  (\ref{NeededRel}).

    In particular we conclude that condition (\ref{SfCondExit}) is
    sufficient for both the existence of the solution to
    the primal problem and its characterization
    in terms of the dual solution
    $\xi^{*}\in\widetilde{\Gamma}$ of the dual problem
    induced by $\Gamma=\bar{\Gamma}$.
    We now further know that $\xi^{*}$ belongs to the class
    $\widetilde{\Gamma}\cap\Gamma(\calS)$ defined in
    (\ref{MainDualClass}), and hence,
    enjoys an explicit parametrization of the form
    \[
        \xi^{*}:=
        \calE\left(\int_{0}^{\cdot}G^{*}(s)d W_{s}
        +\int_{0}^{\cdot}\int_{\bbr_{0}}F^{*}(s,z)\widetilde{N}(ds,dz)
        -\int_{0}^{\cdot} a^{*}_{s} ds\right),
    \]
    for some triple $(G^{*},F^{*},a^{*})$.

\medskip

{\bf (ii) Market driven by general additive models.}~
Our analysis can be extended to more general multidimensional
models driven by additive processes (that is, processes with
independent, possibly non-stationary increments; cf. Sato
\cite{Sato} and Kallenberg \cite{Kallenberg}). For instance, let
$(\Omega,\calF,\bbp)$ be a complete probability space on which is
defined a $d$-dimensional additive process $Z$ with L\'evy-It\^o
decomposition:
\[
    Z_{t}=\alpha t + \Sigma\, W_{t}+
        \int_{0}^{t}\int_{\{\|z\|> 1\}}z
        N(ds,dz)+
        \int_{0}^{t}\int_{\{\|z\|\leq{}1\}}z
        \widetilde{N}(ds,dz),
\]
where $W$ is a standard $d-$dimensional Brownian motion,
$N(dt,dz)$ is an independent Poisson random measure on
$\bbr_{+}\times \bbr^{d}$, and
$\widetilde{N}(dt,dz)=N(dt,dz)-\bbe N(dt,dz)$.
Consider a market model consisting of $n+1$ securities: one risk free bond with
price
\[
    d B_{t} := r_{t} B_{t} dt,\quad B_{0}=1, \quad t\geq{}0,
\]
and $n$ risky assets with prices determined
by the following stochastic differential equations with jumps:
\[
  d S^{i}_{t} = S^{i}_{t^{-}}
    \left\{
    b^{i}_{t}\, dt+\sum_{j=1}^{d}\sigma^{ij}_{t}d W^{j}_{t}
    +\int_{\bbr^{d}}v^{i}(t,z) \widetilde{N}(ds,dz)\right\},
    \quad i=1,\dots,n,
\]
where the processes $r$, $b$, $\sigma$, and
$v$ are predictable satisfying usual integrability conditions
(cf. Kunita \cite{Kunita:2003}). We assume that
$\calF:=\calF_{\infty^{-}}$, where
$\bbf:=\left\{\calF_{t}\right\}_{t\geq{}0}$ is the natural
filtration generated by $W$ and $N$; namely,
$\calF_{t}:=\sigma(W_{s},N([0,s]\times A): s\leq{}t,
A\in\calB(\bbr^{d}))$. The crucial property, particular to this
market model, that makes our analysis valid is the representation
theorem for local martingales relative to $Z$
(see Theorem III.4.34 in \cite{Shiryaev:2003}).
The definition of the dual class $\Gamma$ given in
Section \ref{SpecDualClssSect} will remain unchanged, and
only very minor details will change in the
proof of Theorem \ref{ClosenessResult}.
Some of the properties of the results in
Section \ref{SpecDualClssSect} regarding the properties of
$\Gamma$ will also change slightly.
We remark that, by taking a real (nonhomogeneous) Poisson process,
the model and results of Chapter 3 in Xu \cite{Xu:2004}
will be greatly extended. We do not pursue the details here due to the limitation of the
length of this paper.

\medskip
{\bf (iii) Optimal wealth-consumption problem.}~ Another classical
portfolio optimization in the literature is that of optimal
wealth-consumption strategies under a budget constraint. Namely,
we allow the agent to spend money outside the market, while
maintaining ``solvency'' throughout $[0,T]$. In that case the
agent aims to maximize the cost functional that contains a ``{\it
running cost}":
\[
    \bbe\left[U_{1}(V_{_{T}})+\int_{0}^{T}U_{2}(t,c_{t})dt\right],
\]
where $c$ is the instantaneous rate of consumption.
To be more precise, the cumulative consumption at time $t$ is
given by $C_{t}:=\int_{0}^{t}c_{u}du$ and the (discounted)
wealth at time $t$ is given by
\[
    V_{t}=w+\int_{0}^{t} \beta_{u}
    d 
    S_{u}-\int_{0}^{t} 
    c_{u} du.
\]
Here, $U_{1}$ is a (state-dependent) utility function and
$U_{2}(t,\cdot)$ is a utility function for each $t$.
The dual problem can now be defined as follows:
\[
    v_{_{\Gamma}}(y)=\inf_{\xi\in\Gamma}
    \bbe\left[\widetilde{U}_{1}\left(y\xi_{_{T}}
    \right)
    +\int_{0}^{T}\widetilde{U}_{2}(s,y\xi_{s}
    ds\right],
\]
over a suitable class of supermartingales $\Gamma$. For instance,
if the support of $\nu$ is $[-1,\infty)$, then
$\Gamma$ can be all supermartingales $\xi$ such that
$0\leq{}\xi_{0}\leq{}1$ and $\{\xi_{t}
S_{t}\}_{t\leq{}T}$
is a supermartingale. The dual Theorem \ref{MainThrm1} can be
extended for this problem.
However, the existence of a wealth-consumption
strategy pair $(\beta,c)$ that attains the potential final
wealth induced by the optimal dual solution
(as in Section \ref{RplcbltySect}) requires
further work. We hope to address this problem
in a future publication.

\appendix

\section{Convex classes of exponential supermartingales}
 \label{CovClssSect}
 \setcounter{equation}{0}
The goal of this part is to establish the theoretical
foundations behind Theorem \ref{ExistenceDual}.
We begin by recalling an important
optional decomposition theorem due to F\"ollmer and Kramkov \cite{Follmer:1997}.
Given a family of supermartingales
$\calS$ satisfying suitable conditions, the result characterizes the
nonnegative exponential local supermartingales
\(
    \xi:=\xi_{0}\calE(X-A),
\)
where $X\in\calS$ and $A\in\calV^{+}$,
in terms of the so-called \emph{upper variation
process} for $\calS$.
Concretely, let $\calP(\calS)$ be the class of
probability measures $\bbq\sim\bbp$ for which
there is an increasing predictable process
$\left\{A_{t}\right\}_{t\geq{}0}$
(depending on $\bbq$ and $\calS$) such that
\(
    \{X_{t}-A_{t}\}_{t\geq{}0}
\)
is a local supermartingale under $\bbq$,
for all $X\in\calS$. The \emph{smallest}\footnote{That is, if $A$ satisfies such a property
then $A-A^{\calS}(\bbq)$ is increasing.}
of such processes $A$
is denoted by $A^{\calS}(\bbq)$ and is called the
upper variation process for $\calS$ corresponding
to $\bbq$.
For easy reference, we state F\"ollmer and Kramkov's
result (see \cite{Follmer:1997} for a proof).
\begin{thrm}
\label{ThrmFollmerKramkov}
Let $\calS$ be a family of semimartingales
 that are null at zero, and that are locally bounded from below. Assume
 that $0\in\calS$, and that the following
 conditions {hold:}
       \begin{enumerate}
       \item[(i)] $\calS$ is predictably convex;

 \smallskip
        \item[(ii)] $\calS$ is closed under the \'Emery distance;

 \smallskip
        \item[(iii)] $\calP(\calS)\neq \emptyset$.
        \end{enumerate}
        Then, the following two statements are equivalent for a
        nonnegative process $\xi$:
        \begin{enumerate}
            \item $\xi$ is of the form
            $\xi=\xi_{0}\calE\left(X-A\right)$,
            for some $X\in\calS$ and an increasing process
            $A\in\calV^{+}$;

        \smallskip
            \item $\frac{\xi}{\calE\left(A^{\calS}(\bbq)\right)}$
            is a supermartingale under $\bbq$ for each
            $\bbq\in\calP(\calS)$.
        \end{enumerate}
\end{thrm}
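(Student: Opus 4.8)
Theorem~\ref{ThrmFollmerKramkov} is quoted here from \cite{Follmer:1997}, so what I would reconstruct is their \emph{optional decomposition under constraints}; let me sketch the strategy in the two directions. The implication $(1)\Rightarrow(2)$ is the routine one. Fix $\bbq\in\calP(\calS)$ and write $\xi=\xi_{0}\calE(X-A)$ with $X\in\calS$ and $A\in\calV^{+}$. By the defining property of the upper variation process, $X-A^{\calS}(\bbq)$ is a $\bbq$-local supermartingale; since $A^{\calS}(\bbq)$ is increasing and null at $0$ one has $\calE(A^{\calS}(\bbq))\geq 1>0$, and Yor's multiplication formula for stochastic exponentials lets one rewrite $\xi/\calE(A^{\calS}(\bbq))$ again in the form $\xi_{0}\calE(\widetilde M-\widetilde A)$, where $\widetilde M$ is a $\bbq$-local martingale and $\widetilde A\in\calV^{+}$ absorbs the predictable increasing pieces (coming from $A$, from the compensator of $X-A^{\calS}(\bbq)$, and from the jump corrections). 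Being a nonnegative $\bbq$-local supermartingale, $\xi/\calE(A^{\calS}(\bbq))$ is then a genuine $\bbq$-supermartingale by Fatou's lemma.

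For the substantive implication $(2)\Rightarrow(1)$ I would proceed as follows. After normalizing $\xi_{0}\in[0,1]$, introduce the sinking time $\tau$ of $\xi$; since $\xi/\calE(A^{\calS}(\bbq))$ is a nonnegative $\bbq$-supermartingale and $\calE(A^{\calS}(\bbq))>0$, the process $\xi$ is absorbed at $0$ from time $\tau$ on. On the stochastic interval before $\tau$ one has $\xi_{-}>0$, so the stochastic logarithm $L:=\int_{0}^{\cdot}\xi_{s^{-}}^{-1}\,d\xi_{s}$ is well defined and $\xi=\xi_{0}\calE(L)$ there; the task then reduces to producing $X\in\calS$ and an increasing $A$ with $L=X-A$ before $\tau$, and extending past $\tau$ by forcing $\Delta A_{\tau}=1$ (equivalently $\Delta X_{\tau}=-1$) so that the exponential stays at $0$. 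The plan is to read this off the predictable characteristics: after fixing a dominating reference measure, the requirement ``$L+A\in\calS$ for some increasing $A$'' becomes the pointwise-in-$(\omega,t)$ condition that the local characteristics of $L$ can be shifted by a \emph{non-negative} drift into the set $\Sigma(\omega,t)$ of local characteristics attainable by elements of $\calS$. Predictable convexity of $\calS$ makes $\Sigma(\omega,t)$ convex, Émery-closedness makes it closed, and condition $(2)$ — which holds simultaneously for every $\bbq\in\calP(\calS)$, together with the minimality built into $A^{\calS}(\bbq)$ — is exactly what guarantees that the required shift exists $\bbp\times dt$-a.e.\ $(\omega,t)$.

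The final step is a measurable-selection argument (Kuratowski--Ryll-Nardzewski, with a von Neumann--type selection for the jump-measure component) yielding predictable integrands $(G,F)$ with $F\geq -1$, hence the desired $X\in\calS$; one then sets $A:=X-L$, checks that its finite-variation part is increasing by invoking condition $(2)$ once more, and glues in the jump at $\tau$. I expect the genuine obstacle to be precisely this closedness-plus-selection step: showing that Émery-closedness and predictable convexity really force $\Sigma(\omega,t)$ to be closed — \emph{especially in its Poisson-measure / jump component}, where naive $L^{p}$ arguments break down — and then extracting a \emph{predictable} integrand realizing it. This is the technical core of \cite{Follmer:1997}, and it is also the reason the present paper must first establish its own closure property for integrals with respect to Poisson random measures (Theorem~\ref{ClosenessResult}) before hypothesis (ii) of Theorem~\ref{ThrmFollmerKramkov} can be verified for the concrete class $\calS$ of (\ref{BscClass3}).
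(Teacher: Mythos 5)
The paper does not prove Theorem~\ref{ThrmFollmerKramkov}: it is stated for reference only, with an explicit pointer to \cite{Follmer:1997} for the proof. You correctly flag this, so there is no in-paper argument to compare against. What I can say is that your reconstruction of the hard direction $(2)\Rightarrow(1)$ is not the route F\"ollmer and Kramkov actually follow. They do start, as you do, by taking the stochastic logarithm of $\xi$ up to its sinking time, so that the multiplicative statement reduces to an additive optional decomposition under constraints; but they then establish the additive decomposition by a global convex-duality/Hahn--Banach argument combined with a Koml\'os-type limiting lemma --- their Lemma~5.2, which is precisely the lemma invoked in Proposition~\ref{ProptyDual} of this paper --- rather than by a pointwise measurable-selection argument over predictable local characteristics. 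Your measurable-selection program is plausible in spirit, but its central premises are left unjustified and are not immediate: \'Emery closedness of $\calS$ is a statement about a class of processes, and it does not obviously force the fibrewise set $\Sigma(\omega,t)$ of attainable characteristics to be closed (let alone yield a predictable selector). Passing from a process-level closure to integrand-level control is exactly the delicate work carried out by Theorem~\ref{ClosenessResult}, via M\'emin's theorem and a carefully chosen auxiliary equivalent measure, and only for the concrete class $\calS$ of~(\ref{BscClass3}). Your closing observation --- that this paper proves Theorem~\ref{ClosenessResult} precisely to verify hypothesis (ii) of Theorem~\ref{ThrmFollmerKramkov} for that class --- is accurate and identifies the actual contribution of the appendix.

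A small slip worth noting: writing that forcing $\Delta A_{\tau}=1$ is ``equivalently $\Delta X_{\tau}=-1$'' is misleading. With $A$ predictable and $X$ a stochastic integral against $(W,\widetilde N)$, the two events are essentially mutually exclusive; they are the two alternative mechanisms by which $\calE(X-A)$ can vanish at $\tau$, according to whether $\tau$ is predictable or totally inaccessible, not interchangeable descriptions of the same jump.
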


The next result is a direct consequence of the
previous representation. Recall that
a sequence of processes $\{\xi^{n}\}_{n\geq{}1}$ is said
to be ``Fatou convergent on $\pi$" to a process
$\xi$ if $\{\xi^{n}\}_{n\geq{}1}$  is uniformly bounded from below
and it holds that
   \begin{equation}\label{FatouConvEq}
        \xi_{t}=\limsup_{s\downarrow{}t\;:\;
        s\in\pi}\limsup_{n\rightarrow\infty} \xi^{n}_{s}=
        \liminf_{s\downarrow{}t\;:\;
        s\in\pi}\liminf_{n\rightarrow\infty} \xi^{n}_{s},
    \end{equation}
almost surely for all $t\geq{}0$.
\begin{prop}\label{ClosednessClasses}
    If $\calS$ is a class of semimartingales satisfying the
    conditions in Theorem \ref{ThrmFollmerKramkov}, then
    \begin{equation}\label{ExpSmmartClass}
        \Gamma^{0}(\calS):= \{ \xi:=\xi_{0}\calE(X- A):
        X\in\calS,\, A\text{ increasing, and }\xi\geq{}0\},
    \end{equation}
    is convex and closed under Fatou convergence on
    any fix dense countable set $\pi$ of $\bbr_{+}$;
    that is, if
    $\{\xi^{n}\}_{n\geq 1}$ is a sequence in
    $\Gamma^{0}(\calS)$ that is Fatou convergent on $\pi$
    to a process $\xi$, then $\xi\in\Gamma^{0}(\calS)$.
\end{prop}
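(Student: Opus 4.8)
The plan is to deduce both assertions directly from the equivalence in Theorem \ref{ThrmFollmerKramkov}, which characterizes membership in $\Gamma^{0}(\calS)$ by a single condition that does not mention the pair $(X,A)$ at all: a nonnegative process $\xi$ lies in $\Gamma^{0}(\calS)$ if and only if $\xi/\calE(A^{\calS}(\bbq))$ is a $\bbq$-supermartingale for every $\bbq\in\calP(\calS)$. Since the processes $A^{\calS}(\bbq)$ are fixed (they depend only on $\calS$ and $\bbq$, not on $\xi$), this reformulation is manifestly stable under the two operations we care about, and the proof reduces to checking that stability.

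For convexity, I would fix $\xi^{1},\xi^{2}\in\Gamma^{0}(\calS)$, $\lambda\in[0,1]$, and $\bbq\in\calP(\calS)$, and observe that
\[
    \frac{\lambda\xi^{1}+(1-\lambda)\xi^{2}}{\calE(A^{\calS}(\bbq))}
    =\lambda\,\frac{\xi^{1}}{\calE(A^{\calS}(\bbq))}+(1-\lambda)\,\frac{\xi^{2}}{\calE(A^{\calS}(\bbq))}
\]
is a convex combination, with \emph{deterministic} weights, of two $\bbq$-supermartingales, hence itself a $\bbq$-supermartingale. As $\lambda\xi^{1}+(1-\lambda)\xi^{2}$ is nonnegative, the implication $(2)\Rightarrow(1)$ of Theorem \ref{ThrmFollmerKramkov} produces a representation $\lambda\xi^{1}+(1-\lambda)\xi^{2}=\xi_{0}\calE(X-A)$ with $X\in\calS$ and $A\in\calV^{+}$, i.e. $\lambda\xi^{1}+(1-\lambda)\xi^{2}\in\Gamma^{0}(\calS)$.

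For closedness under Fatou convergence, suppose $\{\xi^{n}\}\subset\Gamma^{0}(\calS)$ is Fatou convergent on $\pi$ to $\xi$; in particular $\xi\geq 0$. Fix $\bbq\in\calP(\calS)$ and put $Y^{n}:=\xi^{n}/\calE(A^{\calS}(\bbq))$ and $Y:=\xi/\calE(A^{\calS}(\bbq))$. Because $A^{\calS}(\bbq)$ is increasing and predictable, $\calE(A^{\calS}(\bbq))$ is a fixed (independent of $n$), right-continuous process with $\calE(A^{\calS}(\bbq))\geq 1$; dividing the defining $\limsup$/$\liminf$ in (\ref{FatouConvEq}) by this process shows that $\{Y^{n}\}$ is Fatou convergent on $\pi$ to $Y$. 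Each $Y^{n}$ is a nonnegative $\bbq$-supermartingale, so it remains to invoke the fact that the Fatou limit on a dense countable set of a sequence of nonnegative $\bbq$-supermartingales is again a $\bbq$-supermartingale — this is exactly the step used (for the particular sequence supplied by Lemma~5.2 of \cite{Follmer:1997}) in the proof of Proposition \ref{ProptyDual}, and it follows by applying Fatou's lemma for conditional expectations along sequences $s_{k}\downarrow t$ with $s_{k}\in\pi$. Knowing that $Y$ is a $\bbq$-supermartingale for every $\bbq\in\calP(\calS)$, the implication $(2)\Rightarrow(1)$ of Theorem \ref{ThrmFollmerKramkov} again gives $\xi\in\Gamma^{0}(\calS)$.

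The stability of the supermartingale property under finite convex combinations and under division by the fixed strictly positive process $\calE(A^{\calS}(\bbq))$ is routine. The one genuinely delicate point, and the place where care is needed, is the lemma that a Fatou limit of nonnegative supermartingales is a supermartingale: the subtleties are the interchange of the \emph{double} limit (over $n$ and over $s\downarrow t$ with $s\in\pi$) with the conditional expectation, and the recovery of the supermartingale inequality at \emph{all} times from its validity on the dense set $\pi$, which is why right-continuity is built into the definition (\ref{FatouConvEq}). Since this mechanism is precisely the one already used in the proof of Proposition \ref{ProptyDual} (and ultimately rests on \cite{Follmer:1997}), for this paper it can simply be quoted; a harmless technical caveat, as in that proof, is the tacit use of $\sup_{n}\bbe_{\bbq}[Y^{n}_{0}]<\infty$ to keep the limit integrable, which is automatic for the sequences arising in our applications, where $\xi^{n}_{0}\leq 1$.
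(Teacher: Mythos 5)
Your proof is correct and follows essentially the same route as the paper: convexity is deduced from the supermartingale characterization in Theorem \ref{ThrmFollmerKramkov}, and for Fatou-closedness you divide by $\calE(A^{\calS}(\bbq))$, show the quotients are Fatou-convergent $\bbq$-supermartingales, pass the double limit through the conditional expectation via Fatou's lemma together with right-continuity, and re-invoke the implication $(2)\Rightarrow(1)$ of Theorem \ref{ThrmFollmerKramkov}. The only cosmetic difference is that the paper carries out the Fatou-limit-of-supermartingales step explicitly in-line, whereas you package it as a quoted lemma (citing its use in the proof of Proposition \ref{ProptyDual}); the substance is identical.
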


\begin{proof}
The convexity of $\Gamma^{0}(\calS)$ is a direct consequence of
Theorem \ref{ThrmFollmerKramkov}, since the convex combination of
supermartingales remains a supermartingale. Let us prove the
closure property.
Fix a $\bbq\in\calP(\calS)$ and denote $C_{t}:=\calE\left(
A^{\calS}(\bbq)\right)$. Notice that $C_{t}>0$ because
$A^{\calS}(\bbq)_{t}$ is increasing and hence, its jumps are
nonnegative. Since $\xi^{n}\in\Gamma^{0}(\calS)$,
$\{C_{t}^{-1}\xi^{n}_{t}\}_{t\geq{}0}$ is a supermartingale under
$\bbq$. Then, for $0<s'<t'$,
\[
    \bbe^{\bbq}\left[C^{-1}_{t'}\xi^{n}_{t'}|\calF_{s'}\right]
    \leq{}C^{-1}_{s'}\xi^{n}_{s'}.
\]
By Fatou's Lemma and the right-continuity of process $C$,
\[
    \bbe^{\bbq}\left[
    C^{-1}_{t}\xi_{t}|\calF_{s'}\right]=
    \bbe^{\bbq}\left[
    \liminf_{t'\downarrow{}t:t'\in\pi}
    \liminf_{n\rightarrow\infty}
    C^{-1}_{t'}\xi^{n}_{t'}|\calF_{s'}\right]
    \leq{}C^{-1}_{s'}\xi^{n}_{s'}.
\]
Finally, using the right-continuity of the filtration,
\[
    \bbe^{\bbq}\left[
    C^{-1}_{t}\xi_{t}|\calF_{s}\right]\leq
    \liminf_{s'\downarrow{}s:s'\in\pi}\liminf_{n\rightarrow\infty}
    C^{-1}_{s'}\xi^{n}_{s'}=
    C_{s}^{-1}\xi_{s},
\]
where $0\leq{}s<t$. Since $\bbq$ is arbitrary, the
characterization of Theorem \ref{ThrmFollmerKramkov} implies that
$\xi\in\Gamma^{0}(\calS)$.
\end{proof}

The most technical condition in Theorem \ref{ThrmFollmerKramkov}
is the closure property under \'Emery distance.
 The following result is useful to deal with this condition.
It shows that the class of integrals with respect to a Poisson random
measure is closed with respect to \'Emery distance, thus extending
the analog property for integrals with respect to a fixed
semimartingale due to M\'emin \cite{Memin}.

\begin{thrm}\label{ClosenessResult}
    {Let $\Theta$ be a closed convex subset of $\bbr^{2}$
    containing the origin.}
    Let $\Pi$ be the set of  all predictable processes
    $(F,G)$,
    $F\in G_{loc}(N)$ and  $G\in L^{2}_{loc}(W)$,
    such that $F(t,\cdot)=G(t)=0$, for all $t\geq{}T$,
    and $(F(\omega,t,z),G(\omega,t))\in\Theta$,
    for $\bbp\times dt\times\nu(dz)$-a.e.
        $(\omega,t,z)\in\Omega\times\bbr_{+}\times\bbr_{0}$.
    Then, the class
    \begin{equation}\label{BscClass2}
    \calS:=\left\{X_{t}:=
    \int_{0}^{t}G(s)d W_{s}
    +\int_{0}^{t}\int_{\bbr_{0}}F(s,z)\widetilde{N}(ds,dz):
    (F,G)\in\Pi\right\}
    \end{equation}
    is closed under convergence with respect to
    \'Emery's topology.
\end{thrm}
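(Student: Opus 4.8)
The plan is to combine the martingale representation theorem for the filtration generated by $W$ and $N$ with a M\'emin-type control of the integrands. Let $X^{n}\in\calS$ correspond to parameters $(F^{n},G^{n})\in\Pi$ and suppose $X^{n}\to X$ in \'Emery's topology. Two soft observations come first. Since each $X^{n}$ is a local martingale and the class of local martingales is closed in \'Emery's topology (see \cite{Memin}), $X$ is a local martingale. Since each $X^{n}$ is constant on $[T,\infty)$ and, for a fixed time, $H\cdot X^{T}=(H\mathbf{1}_{[0,T]})\cdot X$ with $|H\mathbf{1}_{[0,T]}|\leq|H|$, the map $Y\mapsto Y^{T}$ is \'Emery-continuous, so $X=\lim_{n}(X^{n})^{T}=X^{T}$ is constant on $[T,\infty)$. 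By the representation theorem for local martingales relative to $(W,N)$ (Theorem III.4.34 in \cite{Shiryaev:2003}), there are predictable $G\in L^{2}_{loc}(W)$ and $F\in G_{loc}(N)$ with $X_{t}=\int_{0}^{t}G(s)\,dW_{s}+\int_{0}^{t}\int_{\bbr_{0}}F(s,z)\,\widetilde{N}(ds,dz)$; computing $[X]$ and using $X=X^{T}$ forces $G(t)=0$ for $\bbp\times dt$-a.e.\ $t\geq T$ and $F(t,\cdot)=0$ for $\bbp\times dt\times\nu(dz)$-a.e.\ $t\geq T$, so $(F,G)$ already satisfies the support requirement of $\Pi$. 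It therefore suffices to show that $(F(\omega,t,z),G(\omega,t))\in\Theta$ for $\bbp\times dt\times\nu(dz)$-a.e.\ $(\omega,t,z)$, since this gives $(F,G)\in\Pi$ and hence $X\in\calS$.

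To get there I would pass from \'Emery convergence to convergence of the integrands. Following the techniques of \cite{Memin}, after extracting a subsequence one can find stopping times $\tau_{j}\nearrow\infty$, independent of $n$, along which the stopped processes $(X^{n}-X)^{\tau_{j}}$ converge to $0$ in $\mathcal{H}^{1}$ as $n\to\infty$; equivalently $\bbe[[X^{n}-X]_{\tau_{j}}^{1/2}]\to0$, so $[X^{n}-X]_{\tau_{j}}\to0$ in probability for each $j$, and since $\bbp(\tau_{j}<T)\to0$ a standard truncation argument yields $[X^{n}-X]_{T}\to0$ in probability. Now $X^{n}-X$ corresponds to $(F^{n}-F,G^{n}-G)$, so $[X^{n}-X]_{T}=\int_{0}^{T}(G^{n}-G)^{2}\,ds+\sum_{s\leq T}(F^{n}-F)^{2}(s,\Delta Z_{s})$; passing to an a.s.\ convergent subsequence, both nonnegative terms tend to $0$ almost surely.

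From $\int_{0}^{T}(G^{n}-G)^{2}\,ds\to0$ a.s.\ we obtain, along a subsequence, $G^{n}\to G$ $\bbp\times dt$-a.e.\ on $\Omega\times[0,T]$. For the jump part, fix $m\in\bbn$ and $\varepsilon>0$ and let $C^{n}$ be the number of times $s\leq T$ with $|\Delta Z_{s}|>1/m$ and $(F^{n}-F)^{2}(s,\Delta Z_{s})>\varepsilon$; then $C^{n}\to0$ a.s., and $C^{n}\leq N([0,T]\times\{|z|>1/m\})$, an integrable Poisson random variable not depending on $n$, so by dominated convergence and the compensation formula $(\bbp\times dt\times\nu)(\{(\omega,s,z):s\leq T,\ |z|>1/m,\ (F^{n}-F)^{2}>\varepsilon\})=\bbe[C^{n}]\to0$; letting $\varepsilon\downarrow0$ and $m\to\infty$ along a diagonal subsequence gives $F^{n}\to F$ $\bbp\times dt\times\nu(dz)$-a.e.\ on $\Omega\times[0,T]\times\bbr_{0}$. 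Finally, intersecting (over $n$, with the help of Fubini's theorem) the full-measure sets on which $G^{n}\to G$, $F^{n}\to F$, and $(F^{n},G^{n})\in\Theta$ hold, and using that $\Theta$ is closed, one concludes $(F(\omega,t,z),G(\omega,t))\in\Theta$ $\bbp\times dt\times\nu(dz)$-a.e., which completes the proof.

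I expect the main obstacle to be the first step of the second paragraph: extracting, from mere \'Emery convergence of the local martingales $X^{n}$, a common localizing sequence $(\tau_{j})$ along which one has genuine $\mathcal{H}^{1}$---and hence quadratic-variation---convergence. This is precisely the delicate point in M\'emin's closedness theorem for integrals with respect to a fixed semimartingale, now to be performed with the compensated Poisson integral present; the surrounding measure-theoretic bookkeeping (in particular deducing $\bbp\times dt\times\nu(dz)$-a.e.\ convergence of $F^{n}$ from the a.s.\ smallness of $[X^{n}-X]_{T}$, given that $\nu$ is only $\sigma$-finite) is routine but must be handled with some care.
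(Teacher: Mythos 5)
Your proof takes a genuinely different route from the paper's, and in outline it is sound. The paper invokes M\'emin's Theorem II.3 to change to an equivalent measure $\bbq$ with bounded density, decomposes $X^{n}=M^{n}+A^{n}$ canonically under $\bbq$, shows via Girsanov and the martingale representation theorem that $\{F^{n}\}$ and $\{G^{n}\}$ are Cauchy in suitable weighted $L^{2}\cap L^{1}$ Banach spaces, extracts limits $(F,G)$, and only then builds a candidate local martingale $\widetilde{X}$ and identifies it with $X$. You instead front-load the structure: you observe that $X$ is already a local martingale (closedness of local martingales under the \'Emery topology, a M\'emin result), so the representation theorem hands you $(F,G)$ for free, and the remaining task reduces to proving $(F,G)\in\Theta$ a.e., which you obtain from $[X^{n}-X]_{T}\to0$ in probability, the compensation formula, and the closedness of $\Theta$. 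This is more direct: you never need to verify that the constructed limit coincides with $X$, and you avoid the explicit $\bbq$-bookkeeping. Your Fubini/compensation argument for passing from a.s.\ smallness of $[X^{n}-X]_{T}$ to $\bbp\times dt\times\nu$-a.e.\ convergence of the integrands (with the $\{|z|>1/m\}$ truncation handling $\sigma$-finiteness of $\nu$) is correct.

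That said, you slightly overstate what needs to be extracted and where the cost lies. You claim local $\mathcal{H}^{1}$ convergence along a common localizing sequence $(\tau_{j})$ independent of $n$; this is strictly more than your argument uses, and asserting it essentially re-derives M\'emin's Theorem II.3. What you actually need is the weaker statement that for local martingales, \'Emery convergence implies $[X^{n}-X]_{T}\to0$ in probability, and that (together with the closedness of local martingales in the semimartingale topology, which you rightly cite to \cite{Memin}) is where all the real work is hidden. Both facts are provable, and are exactly what the paper's use of M\'emin II.3 buys, so your proof is not independent of that machinery---it just reorganizes around it. If you spell out these two ingredients as explicit lemmas with precise references (M\'emin's closedness theorem for local martingales, and the characterization of \'Emery convergence of local martingales via quadratic variation in probability), the argument is complete and, I would say, cleaner than the paper's own.
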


\begin{proof}
Consider a sequence of semimartingales
\[
    X^{n}(t):=\int_{0}^{t} G^{n}(s) d W_{s}+\int_{0}^{t}\int_{\bbr} F^{n}(s,z)\widetilde{N}(ds,dz), \quad n\geq{}1,
\]
in the class $\calS$. Let $X$ be a semimartingale such that
$X^{n}\rightarrow X$ under \'Emery topology. To prove the result,
we will borrow some results in \cite{Memin}.

For some $\bbq\sim\bbp$, we denote $\calM^{2}(\bbq)$
to be the Banach space of all
$\bbq$-square integrable martingales on $[0,T]$, endowed with the
norm $\|M\|_{\calM^{2}(\bbq)} :=
        \left(\bbe^{\bbq}\left<M,M\right>_{T}\right)^{1/2}
        =\left(\bbe^{\bbq}\left[M,M\right]_{T}\right)^{1/2}$,
and $\calA(\bbq)$ to be the Banach space of all predictable processes on
$[0,T]$ that have $\bbq$-integrable total variations, endowed with
the norm
$\|A\|_{\calA(\bbq)} :=\bbe^{\bbq}{\rm Var}(A)$.
Below, $\calA_{loc}^{+}(\bbq)$ stands for the localized class
of increasing process in $\calA(\bbq)$.
By Theorem II.3 in \cite{Memin}, one can extract a
subsequence from $\{X^{n}\}$, still denote it by $\{X^{n}\}$,
{for} which one can construct a probability measure $\bbq$,
defined on $\calF_{T}$ and equivalent to $\bbp_{_{T}}$ (the
restriction  of $\bbp$ on $\calF_{T}$), such that the following
assertions hold:
\begin{enumerate}
    \item[(i)]
    $\xi:=\frac{d\bbq}{d\bbp_{_{T}}}$
    is bounded by a constant;
    \smallskip
    \item[(ii)] $X^{n}_{t}=M^{n}_{t}+A^{n}_{t}$, $t\leq{}T$, for
    Cauchy sequences $\{M^{n}\}_{n\geq{}1}$
    and $\{A^{n}\}_{n\geq{}1}$ in
    $\calM^2(\bbq)$ and $\calA(\bbq)$, respectively.
\end{enumerate}

Let us extend $M^{n}$ and $A^{n}$ to $[0,\infty)$ by
 setting $M^n_t=M^n_{t\wedge T}$ and $A^n=A^n_{t\wedge T}$ for
 all $t\ge 0$.
Also, we extend $\bbq$ for $A\in\calF$ by setting $\bbq(A):=\int_{A} \xi
d\bbp$, so that $\bbq\sim\bbp$ (on $\calF$). In
that case, it can be proved that
$
    \calA^{+}_{loc}(\bbp)=\calA^{+}_{loc}(\bbq).
$
This follows essentially from Proposition III.3.5 in
\cite{Shiryaev:2003} and Doob's Theorem.
Now, let
\(
    \xi_{t}:=\frac{d\bbq|_{\calF_{t}}}{d\bbp|_{\calF_{t}}}=
    \bbe\left[\xi|\calF_{t}\right],
\)
denote the density process. Since $\xi$ is bounded, both
$\{\xi_{t}\}_{t}$ and
$\{|\Delta \xi_{t}|\}_{t}$ are bounded. By Lemma III.3.14 and Theorem
III.3.11 in \cite{Shiryaev:2003}, the $\bbp-$quadratic covariation
$[X^{n},\xi]$ has $\bbp-$locally integrable variation and the unique
canonical decomposition $M^{n}+A^{n}$ of $X^{n}$ relative to
 $\bbq$ is given by
\[
    M^{n}= X^{n}-\int_{0}^{t} \frac{1}{\xi_{s-}}d\left<X^{n},\xi\right>_{s},\quad
    A^{n}=\int_{0}^{t} \frac{1}{\xi_{s-}}d\left<X^{n},\xi\right>_{s}.
\]
Also, the $\bbp$-quadratic variation of the continuous part
$X^{n,c}$ of $X^{n}$ (relative to $\bbp$), given by
\(
    \left<X^{n,c},X^{n,c}\right>_{\cdot}=\int_{0}^{\cdot}
    \left(G^{n}(s)\right)^{2}ds,
\)
is also a version of  the $\bbq$-quadratic variation of the
continuous part of $X^{n}$ (relative to $\bbq$).
By the representation
theorem for local martingales relative to $Z$ (see e.g. Theorem
III.4.34 in \cite{Shiryaev:2003} or Theorem 2.1 in
\cite{Kunita:2004}), $\xi$ has the representation
\[
    \xi_{t}=1+\int_{0}^{t} \xi_{s^{-}}E(s) dW_{s}+\int_{0}^{t}\int_{\bbr}\xi_{s^{-}}D(s,z)\widetilde{N}(ds,dz),
\]
for predictable $D$ and $E$ necessarily satisfying that
$D>-1$,
\[
    \bbe\int_{0}^{T}\int_{\bbr} D^{2}(s,z)\xi_{s}^{2}
    \nu(dz)ds<\infty,
    \quad {\rm and}\quad
    \bbe\int_{0}^{T} E^{2}(s)\xi_{s}^{2}
    ds<\infty.
\]
Then,
\begin{align*}
    \left<X^{n},\xi\right>_{t}&=
    \int_{0}^{t} G^{n}(s) E(s) \xi_{s^{-}}ds+
    \int_{0}^{t}\int_{\bbr} F^{n}(s,z)D(s,z)\xi_{s^{-}} \nu(dz)ds,\\
    A^{n}_{t}&=\int_{0}^{t}\int_{\bbr} F^{n}(s,z)D(s,z)
    \nu(dz)ds
    +\int_{0}^{t} G^{n}(s) E(s) ds.
\end{align*}
We conclude that $\Delta M^{n}_{t}=\Delta X^{n}_{t}=F^{n}(t,\Delta
Z_{t})$. Hence, $\Delta M^{n}=\Delta \widetilde{M}^{n}$, where
$\widetilde{M}^{n}$ is the purely discontinuous local
martingale (relative to $\bbq$) defined by
\[
    \widetilde{M}^{n}_{t}:=
    \int_{0}^{t}\int_{\bbr} F^{n}(s,z)\left(N(ds,dz)-\nu^{\bbq}(ds,dz)\right),
\]
where $\nu^{\bbq}(ds,dz):=Y(s,z))ds\nu(dz)$ is the compensator of
$N$ relative to $\bbq$ (see Theorem III.3.17 in
\cite{Shiryaev:2003}). It can be shown that $Y=1+D$. Notice that
$\widetilde{M}^{n}$ is well-defined since
$\calA^{+}_{loc}(\bbp)=\calA^{+}_{loc}(\bbq)$ and
the Definition III.1.27 in \cite{Shiryaev:2003}. Then, the purely
discontinuous part of the local martingale $M^{n}$ (relative to
$\bbq$) is given by $\widetilde{M}^{n}$ (see I.4.19 in
\cite{Shiryaev:2003}), and since $M^{n}\in\calM^{2}(\bbq)$,
\begin{align*}
    \bbe^{\bbq}\left[M^{n},M^{n}\right]_{T}
    &
    \bbe^{\bbq}\int_{0}^{T} \left(F^{n}(s,z)\right)^{2}
    Y(s,z) \nu(dz)ds
    +\bbe^{\bbq}\int_{0}^{T}
    \left(G^{n}(s)\right)^{2}ds<\infty.
\end{align*}
Similarly, since $\{M^{n}\}_{n\geq{}1}$ is a Cauchy sequences
under the norm $\bbe^{\bbq}\left[M,M\right]_{T}$,
\begin{align*}
    \bbe^{\bbq}\left[M^{n}-M^{m},M^{n}-M^{m}\right]_{T}
    &=
    \bbe^{\bbq}\int_{0}^{T} \left(F^{n}(s,z)-F^{m}(s,z)\right)^{2} Y(s,z) \nu(dz)ds\\
    \quad&+\bbe^{\bbq}\int_{0}^{T}
    \left(G^{n}(s)-G^{m}(s)\right)^{2}ds\rightarrow{}0,
\end{align*}
as $n,m\rightarrow\infty$. Using the notation
$\widetilde{\Omega}:=\Omega\times\bbr_{+}\times\bbr$ and
$\widetilde{\calP}:=\calP\times\calB(\bbr)$, where $\calP$ is the
predictable $\sigma-$ field, we conclude that
$\{F^{n}\}_{n\geq{}1}$ is a Cauchy sequence in the Banach space
\[
    \bbh_{d}:=\bbl^{2}\left(\widetilde\Omega,
    \widetilde\calP,Y \,d\bbq\,d\nu\,dt\right)\cap
    \bbl^{1}\left(\widetilde\Omega,
    \widetilde\calP,
    \left|D\right|\,d\bbq\,d\nu\,dt\right),
\]
and thus, there is $F\in\bbh_{d}$ such that $F^{n}\rightarrow F$,
as $n\rightarrow\infty$. Similarly, there exists a $G$ in the
Banach space
\[
    \bbh_{c}:=\bbl^{2}\left(\Omega\times\bbr_{+},
    \calP,d\bbq\,dt\right)\cap
    \bbl^{1}\left(\Omega\times\bbr_{+},
    \calP,
    \left|E\right|\,d\bbq\,d\nu\,dt\right),
\]
such that $G^{n}\rightarrow G$, as $n\rightarrow\infty$. In
particular, $(F,G)$ satisfies condition (iv) since
$Y=1+D$ is strictly positive, and each $(F^{n},G^{n})$ satisfies
(iv). Also, $F\in G_{loc}(N)$ relative to $\bbq$
in light of $\calA^{+}_{loc}(\bbp)=\calA^{+}_{loc}(\bbq)$. Similarly,
$\int_{0}^{\cdot} G^{2}(s) ds$ belongs to $\calA^{+}_{loc}(\bbq)$,
and hence, belongs to $\calA_{loc}^{+}(\bbp)$. It follows that the
process
\[
    \widetilde{X}:=\int_{0}^{t} G(s) d W_{s}+\int_{0}^{t}\int_{\bbr} F(s,z)\widetilde{N}(ds,dz), \quad n\geq{}1,
\]
is a well-defined local martingale relative to $\bbp$. Applying
Girsanov's Theorem to $\widetilde{X}$ relative to $\bbq$ and
following the same argument as above, the purely discontinuous
local martingale and bounded variation parts of $\widetilde{X}$
are respectively
\begin{align*}
    M^{d}_{t}&=
    \int_{0}^{t}\int_{\bbr} F(s,z)\left(N(ds,dz)-\nu^{\bbq}(ds,dz)\right),
    \\
    A_{t}&=\int_{0}^{t}\int_{\bbr} F(s,z)D(s,z)
    \nu(dz)ds
    +\int_{0}^{t} G(s) E(s)
    d s.
\end{align*}
The continuous part of $\widetilde{X}$ has quadratic variation
$\int_{0}^{\cdot} G^{2}(s)ds$. We conclude that
$\widetilde{X}\in\calM^{2}(\bbq)\oplus\calA(\bbq)$ and
$X^{n}\rightarrow \widetilde{X}$ on
$\calM^{2}(\bbq)\oplus\calA(\bbq)$. Then, $X^{n}$ converges under
\'Emery's topology to $\widetilde{X}$ and hence,
$X=\widetilde{X}$.
\end{proof}

%

\end{document}